\journal{J. Logical and Algebraic Methods in Programming }
\newcommand{\pipe}{\;\;|\;\;}
\newcommand{\pom}{\ensuremath{\mathsf{Pom}}}
\newcommand{\pomsp}{\ensuremath{\pom^\mathsf{sp}}}
\newcommand{\atrace}[2][]{\mathrel{\raisebox{-3pt}{$\xrightarrow[#1]{#2}$}}}
\newcommand{\tracerel}{\mathrel{\rightarrow}}
\newcommand{\ssderiv}{\delta_\Sigma}
\newcommand{\psderiv}{\gamma_\Sigma}
\newcommand{\satrace}[1]{\atrace{#1}_\Sigma}
\newcommand{\terms}{\mathcal{T}}
\newcommand{\sacc}{\mathcal{F}}
\newcommand{\naturals}{\mathbb{N}}
\newcommand{\rec}{\mathsf{rec}}
\newcommand{\seq}{\mathsf{seq}}
\tikzset{ptransleft/.style={-{Latex[left,width=2mm]}}}
\tikzset{ptransright/.style={-{Latex[right,width=2mm]}}}
\tikzset{ptransleftinv/.style={{Latex[left,width=2mm]}-}}
\DeclareFontFamily{U}{matha}{}
\DeclareFontShape{U}{matha}{m}{n}{%
  <-5.5>    matha5
  <5.5-6.5> matha6
  <6.5-7.5> matha7
  <7.5-8.5> matha8
  <8.5-9.5> matha9
  <9.5-11>  matha10
  <11->     matha12
}{}
\DeclareSymbolFont{matha}{U}{matha}{m}{n}
\DeclareFontFamily{U}{mathx}{\hyphenchar\font45}
\DeclareFontShape{U}{mathx}{m}{n}{<-> mathx10}{}
\DeclareSymbolFont{mathx}{U}{mathx}{m}{n}
\DeclareMathDelimiter{\ldbrack}{4}{matha}{"76}{mathx}{"30}
\DeclareMathDelimiter{\rdbrack}{5}{matha}{"77}{mathx}{"38}
\newcommand{\semnostretch}[1]{\ldbrack#1\rdbrack}
\newcommand{\sem}[1]{%
  \mathchoice%
  {\left\ldbrack#1\right\rdbrack} % displaystyle
  {\semnostretch{#1}} % textstyle
  {\semnostretch{#1}} % scriptstyle
  {\semnostretch{#1}} % scriptscriptstyle
}
\newcommand{\depth}[1]{\partial(#1)}
\newcommand{\pdepth}{d_\parallel}
\newcommand{\ddepth}{d_\dagger}
\newcommand{\tuple}[1]{\left\langle{#1}\right\rangle}
\newcommand{\set}[1]{\left\{{#1}\right\}}
\newcommand{\pomset}[1]{\langle\!|#1|\!\rangle}
\newtheorem{theorem}{Theorem}[section]
\newtheorem{lemma}[theorem]{Lemma}
\newtheorem{corollary}[theorem]{Corollary}
\theoremstyle{definition}
\newtheorem{definition}[theorem]{Definition}
\newtheorem{example}[theorem]{Example}
\apptocmd{\sloppy}{\hbadness10000\relax}{}{}
\begin{document}

\begin{frontmatter}

\title{On Series-Parallel Pomset Languages: \\ Rationality, Context-Freeness and Automata\tnoteref{extendedversion}\tnoteref{profoundnet}}
\tnotetext[extendedversion]{This paper is an extended version of a paper published at CONCUR'17~\cite{kappe-brunet-luttik-silva-zanasi-2017}.}
\tnotetext[grants]{%
    This work was partially supported by the ERC Starting Grant ProFoundNet (grant code 679127).
    F.\ Zanasi acknowledges support from EPSRC grant n.\ EP/R020604/1.
}
\author[ucl]{Tobias~Kapp\'{e}}
\ead{tkappe@cs.ucl.ac.uk}
\author[ucl]{Paul~Brunet}
\author[tue]{Bas~Luttik}
\author[ucl]{Alexandra~Silva}
\author[ucl]{Fabio~Zanasi}
\address[ucl]{University College London, London, United Kingdom}
\address[tue]{Eindhoven University of Technology, Eindhoven, The Netherlands}

\begin{abstract}
Concurrent Kleene Algebra (CKA) is a formalism to study concurrent programs.
Like previous Kleene Algebra extensions, developing a correspondence between denotational and operational perspectives is important, for both foundations and applications.
This paper takes an important step towards such a correspondence, by precisely relating bi-Kleene Algebra (BKA), a fragment of CKA, to a novel type of automata, \emph{pomset automata} (PAs).

We show that PAs can implement the BKA semantics of series-parallel rational expressions, and that a class of PAs can be translated back to these expressions.
We also characterise the behavior of general PAs in terms of context-free pomset grammars; consequently, universality, equivalence and series-parallel rationality of general PAs are undecidable.
\end{abstract}

\begin{keyword}
Concurrency \sep%
Series-Rational Expressions \sep%
Kleene Algebra \sep%
Pomsets \sep%
Pomset Automata \sep%
Brzozowski derivatives \sep%
Kleene theorem
\end{keyword}

\end{frontmatter}

\section{Introduction}

In their CONCUR'09 paper~\cite{hoare-moeller-struth-wehrman-2009}, Hoare, M\"{o}ller, Struth, and Wehrman introduced \emph{Concurrent Kleene Algebra} (CKA) as a mathematical framework suitable for the study of concurrent programs, in the hope of achieving the same elegance that Kozen did when using Kleene Algebra (KA) and extensions to provide a verification platform for sequential programs.

CKA is a seemingly simple extension of KA\@: it adds parallel analogues to the sequential composition and Kleene star operators, as well as the \emph{exchange law}, which axiomatises interleaving.
Extending the KA toolkit, however, is challenging; in particular, an operational perspective is missing.
In contrast, the correspondence between denotational and operational aspects of KA is well-understood through Kleene's theorem~\cite{kleene-1956}, which provided a pillar for characterising the free model~\cite{kozen-1997} and establishing a decision procedure~\cite{hopcroft-karp-1971}.

With this in mind, we pursue a version of Kleene's theorem for CKA\@.
Specifically, we study \emph{series-parallel rational expressions} (\emph{spr-expressions}), with a denotational model in terms of pomset languages.
Our main contribution is a theorem which faithfully relates these expressions to a newly defined automaton model, called \emph{pomset automata} (\emph{PAs}).
In a nutshell, PAs are automata where traces from certain states may branch into parallel threads; these threads contribute to the language when both reach an accepting state.

We are not the first to attempt such a characterisation.
However, earlier works~\cite{lodaya-weil-2000,jipsen-moshier-2016} fall short of giving a precise correspondence between the denotational and operational models, due to the lack of a structural restriction on automata ensuring that only valid behaviours are accepted.
In contrast, we propose such a restriction, which guarantees the soundness of a translation from the operational to the denotational model.
Furthermore, we propose a generalisation of Brzozowski derivatives~\cite{brzozowski-1964} in the translation from expressions to automata, avoiding unnecessary $\epsilon$-transitions and non-determinism that would result from a construction in the style of Thompson~\cite{thompson-1968}.

Since our denotational model does not take interleaving into account (and hence is not sound for the exchange law), our work is most accurately described as an operational model for \emph{bi-Kleene Algebra} (BKA)~\cite{laurence-struth-2014}.
We leave it to future work to incorporate the exchange law.

This work extends the conference paper~\cite{kappe-brunet-luttik-silva-zanasi-2017} published at CONCUR'17 with previously omitted proofs and two new results.
The first is the extension of the main theorem to incorporate the parallel variant of the Kleene star operator.
The second is a characterisation of the behaviors of finite pomset automata in terms of context-free grammars (CFGs)~\cite{chomsky-1956}.

\medskip
The paper is organised as follows.
We recall preliminaries in Section~\ref{section:preliminaries}, and introduce PAs in Section~\ref{section:pomset-automata}.
We translate a class of PAs to equivalent spr-expressions in Section~\ref{section:automata-to-expressions}, and describe the reverse construction in Section~\ref{section:expressions-to-automata}.
We characterise finite PAs in terms of CFGs in Section~\ref{sec:cfg}.
We discuss related work in Section~\ref{section:related-work}; directions for further work appear in Section~\ref{section:further-work}.

To preserve the flow of the narrative, some proofs appear in the appendices; routine proofs are omitted altogether.

\section{Preliminaries}%
\label{section:preliminaries}

Let $S$ be a set; we write $2^S$ for the set of all subsets of $S$.
We refer to a relation $\prec$ on $S$ as \emph{well-founded} if there are no infinite descending $\prec$-chains, i.e., no $\set{s_n}_{n\in\naturals} \subseteq S$ such that for all $n \in \naturals$ it holds that $s_{n+1} \prec s_n$.

Throughout the paper we fix a finite set $\Sigma$ called the \emph{alphabet}, whose elements are symbols usually denoted by $a$, $b$, etc.
Lastly, if ${\rightarrow} \subseteq X \times Y \times Z$ is a ternary relation, we write $x \xrightarrow{y} z$ instead of $\tuple{x, y, z} \in {\rightarrow}$.

\subsection{Pomsets}

\emph{Partially-ordered multisets}, or \emph{pomsets}~\cite{gischer-1988,grabowski-1981} for short, generalise words to a setting where actions (elements from $\Sigma$) may take place not just sequentially, but also in parallel.
We recall a rigorous definition of pomsets, as well as some useful fundamental notions from literature~\cite{gischer-1988,grabowski-1981,lodaya-weil-2000,esik-nemeth-2004,laurence-struth-2014}.

\begin{definition}%
\label{definition:pomset}
A \emph{labelled poset} is a tuple $\tuple{C, \leq_C, \lambda_C}$ consisting of a \emph{carrier} set $C$, a partial order $\leq$ on $C$ and a \emph{labelling} $\lambda: C \to \Sigma$.

A \emph{labelled poset isomorphism} is a bijection between carriers that bijectively preserves labels and ordering.
A \emph{pomset} is an isomorphism class of labelled posets; we use $\pomset{C, \leq, \lambda}$ to denote the pomset represented by $\tuple{C, \leq, \lambda}$.
\end{definition}

For instance, suppose a recipe for caramel-glazed cookies tells us to
\begin{inparaenum}[(i)]
    \item\label{recipe:dough} \emph{prepare} cookie dough,
    \item\label{recipe:bake} \emph{bake} cookies in the oven,
    \item\label{recipe:caramelise} \emph{caramelise} sugar,
    \item\label{recipe:glaze} \emph{glaze} the finished cookies.
\end{inparaenum}
Here, step~\eqref{recipe:dough} precedes steps~\eqref{recipe:bake} and~\eqref{recipe:caramelise}.
Furthermore, step~\eqref{recipe:glaze} succeeds both steps~\eqref{recipe:bake} and~\eqref{recipe:caramelise}.
A pomset representing this process could be $U = \pomset{C_U, \leq_U, \lambda_U}$, where $C_U = \set{\eqref{recipe:dough},\eqref{recipe:bake},\eqref{recipe:caramelise},\eqref{recipe:glaze}}$ and $\leq_U$ is such that $\eqref{recipe:dough}\leq_U\eqref{recipe:bake}\leq_U\eqref{recipe:glaze}$ and $\eqref{recipe:dough}\leq_U\eqref{recipe:caramelise}\leq_U\eqref{recipe:glaze}$; $\lambda_U$ associates with the elements of $C_U$ the corresponding steps in the recipe.

We use $\pom$ to denote the collection of all pomsets.
Labelled posets and pomsets with a countable carrier suffice for our purposes.
For this reason, we can (w.l.o.g.) adopt the convention that the carrier of a labelled poset representing a pomset is a subset of $\naturals$, which makes $\pom$ a proper set.

Words over $\Sigma$ are identified with totally ordered pomsets; multisets over $\Sigma$ are similarly identified with pomsets having a discrete (diagonal) order.
We write $1$ for the empty pomset, and use $a \in \Sigma$ to refer to the \emph{primitive} pomset with a single point labelled $a$ (and the obvious order).
Finally, we use the symbols $U, V, \ldots$ to denote pomsets.

\begin{definition}%
\label{definition:pomset-composition}
Let $U = \pomset{C_U, \leq_U, \lambda_U}$ and $V = \pomset{C_V, \leq_V, \lambda_V}$ be pomsets.
Without loss of generality, we can assume that $C_U$ and $C_V$ are disjoint.

The \emph{sequential composition} of $U$ and $V$, denoted $U \cdot V$, is the pomset
\[\pomset{C_U \cup C_V,\;{\leq_U} \cup {\leq_V} \cup (C_U \times C_V),\;\lambda_U \cup \lambda_V}\]
The \emph{parallel composition} of $U$ and $V$, denoted $U \parallel V$, is the pomset
\[\pomset{C_U \cup C_V,\;{\leq_U} \cup {\leq_V},\;\lambda_U \cup \lambda_V}\]
Here, $\lambda_U \cup \lambda_V: C_U \cup C_V \to \Sigma$ agrees with $\lambda_U$ on $C_U$, and with $\lambda_V$ on $C_V$.
\end{definition}

As a convention, sequential composition takes precedence over parallel composition, i.e., $U \cdot V \parallel W$ is read as $(U \cdot V) \parallel W$.

Sequential composition forces the events in the left pomset to be ordered before those in the right pomset.
We note that these operators are well-defined modulo isomorphism of labelled posets, and that the empty pomset $1$ is the unit for both sequential and parallel composition.

\begin{definition}%
\label{definition:pomset-series-parallel}
The set of \emph{series-parallel} pomsets~\cite{gischer-1988,grabowski-1981}, $\pomsp$, is the smallest subset of $\pom$ that includes the empty and primitive pomsets and is closed under sequential and parallel composition.
\end{definition}

In this paper we concern ourselves with series-parallel pomsets.
For inductive reasoning, it is useful to recall part of~\cite[Theorem 3.1]{gischer-1988}.
\begin{lemma}%
\label{lemma:pomset-unique-factorisation}
Let $U \in \pomsp$.
If $U$ is non-empty, then \emph{exactly one} of the following is true:
\begin{inparaenum}[(i)]
    \item\label{case:decompose-primitive} $U = a$ for some $a \in \Sigma$, or
    \item\label{case:decompose-sequential} $U = V \cdot W$ for non-empty $V, W \in \pomsp$, strictly smaller than $U$, or
    \item\label{case:decompose-parallel} $U = V \parallel W$ for non-empty $V, W \in \pomsp$, strictly smaller than $U$.
\end{inparaenum}
\end{lemma}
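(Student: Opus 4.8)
The statement bundles two claims: that at least one of the three cases applies (existence of a top-level decomposition), and that no two can apply simultaneously (exclusivity), which together yield ``exactly one''. The plan is to establish existence by structural induction on a derivation witnessing $U \in \pomsp$, and to establish exclusivity by extracting an intrinsic invariant of the underlying order --- the connectivity of its comparability graph --- that distinguishes the sequential from the parallel case. The ``strictly smaller'' clause then comes for free: since every series-parallel pomset is built from finitely many compositions of finite pomsets it has a finite carrier, so $|C_U|$ is a legitimate well-founded measure, and whenever $U = V \cdot W$ or $U = V \parallel W$ with $V, W$ both non-empty we have $|C_V|, |C_W| < |C_U|$.

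For existence, I would fix any derivation of $U$ from empty and primitive pomsets via $\cdot$ and $\parallel$, and induct on its structure. If $U$ is primitive we are in case~(i). If the outermost operation is $U = V \cdot W$, then either both $V$ and $W$ are non-empty, putting us in case~(ii), or one factor is empty; using that $1$ is a unit for $\cdot$ we get $U = W$ (resp. $U = V$), which has a strictly smaller derivation, so the induction hypothesis supplies a decomposition of $U$. The case $U = V \parallel W$ is symmetric, using that $1$ is also a unit for $\parallel$ and yielding case~(iii). Because we induct on the derivation rather than on the pomset, the factors we expose are automatically series-parallel, so no separate closure-under-factorisation argument is needed.

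For exclusivity, I would argue about the \emph{comparability graph} of $U$, whose vertices are the carrier $C_U$ and whose edges join comparable points. Case~(i) is incompatible with~(ii) and~(iii) on cardinality grounds, since a primitive pomset has a singleton carrier whereas a composition of two non-empty pomsets has at least two points. To separate~(ii) from~(iii), observe that if $U = V \cdot W$ with $V, W$ non-empty then $C_V \times C_W \subseteq {\leq_U}$, so every point of $C_V$ is adjacent to every point of $C_W$; a short path argument (route any two vertices through a point on the opposite side) then shows the comparability graph is connected. Conversely, if $U = V' \parallel W'$ with $V', W'$ non-empty then there are no order relations between $C_{V'}$ and $C_{W'}$, so these non-empty sets lie in distinct components and the comparability graph is disconnected. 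A graph cannot be both, so~(ii) and~(iii) exclude each other.

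The main obstacle is exclusivity, specifically the separation of~(ii) from~(iii). The temptation is to reason on a chosen derivation, but a given pomset may admit several derivations, so the argument must appeal to a property of the pomset's intrinsic order rather than of any particular construction; the connectivity dichotomy of the comparability graph is exactly such a property. The only point needing care is the internal connectivity claim in the sequential case --- that complete bipartite adjacency between $C_V$ and $C_W$ forces the whole graph to be connected --- which is the routing argument noted above.
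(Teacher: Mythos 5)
Your proof is correct. The paper does not actually prove this lemma---it imports it by citation from Gischer's Theorem~3.1---so there is no in-paper argument to compare against; your self-contained proof (induction on a derivation for existence, cardinality to exclude the primitive case, and the connected/disconnected dichotomy of the comparability graph to separate the sequential from the parallel case) is exactly the classical argument underlying that cited result, and every step, including the complete-bipartite routing for connectivity and the use of carrier size as the well-founded ``strictly smaller'' measure, checks out.
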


We can quantify the degree of nesting of parallel and sequential composition of a series-parallel pomset as follows.

\begin{definition}%
\label{definition:pomset-depth}
The \emph{depth} of a series-parallel pomset $U$~\cite{esik-nemeth-2004}, denoted $\depth{U}$, is defined inductively, as follows.
First, if $U = 1$, then $\depth{U} = 0$.
Second, if $U = a$ for some $a \in \Sigma$, then $\depth{U} = 1$.
Third, if $U = U_0 \cdots U_{n-1}$ or $U = U_0 \parallel \dots \parallel U_{n-1}$ for non-empty pomsets $U_0, \dots, U_{n-1}$, and $n > 1$ is maximal for such a decomposition, then
\[\depth{U} = \max(\depth{U_0}, \dots, \depth{U_{n-1}}) + 1\]
\end{definition}

Note that depth is always well-defined, as a consequence of Lemma~\ref{lemma:pomset-unique-factorisation}.

\subsection{Pomset languages}

We can group the words that represent traces arising from a sequential program into a set called a \emph{language}.
By analogy, we can group the pomsets that represent the traces arising from a parallel program into a \emph{pomset language}.
We use calligraphic symbols $\mathcal{U}, \mathcal{V}, \ldots$ to denote pomset languages.

For instance, suppose that the recipe for glazed cookies has an optional fifth step where chocolate sprinkles are spread over the cookies.
In that case, there are \emph{two} pomsets that describe a trace arising from the recipe, $U^+$ and $U^-$, either with or without the chocolate sprinkles.
The pomset language $\mathcal{U} = \set{U^-, U^+}$ contains the traces that arise from the new recipe.

The composition operators for pomsets can be lifted to pomset languages.
There also exist two types of Kleene closure operator, similar to the one defined on languages of words, for both parallel and sequential composition.
\begin{definition}%
\label{definition:pomset-language-composition}
Let $\mathcal{U}$ and $\mathcal{V}$ be pomset languages.
We define:
\begin{align*}
\mathcal{U} \cdot \mathcal{V} &= \set{U \cdot V : U \in \mathcal{U}, V \in \mathcal{V}} &
    \mathcal{U}^* &= \bigcup_{n \in \naturals} \mathcal{U}^n \\
\mathcal{U} \parallel \mathcal{V} &= \set{U \parallel V : U \in \mathcal{U}, V \in \mathcal{V}} &
    \mathcal{U}^\dagger &= \bigcup_{n \in \naturals} \mathcal{U}^{(n)}
\end{align*}
in which $\mathcal{U}^0 = \mathcal{U}^{(0)} = \set{1}$, and for all $n \in \naturals$ we define
\begin{mathpar}
\mathcal{U}^{n+1} = \mathcal{U} \cdot \mathcal{U}^n
\and
\mathcal{U}^{(n+1)} = \mathcal{U} \parallel \mathcal{U}^{(n)}
\end{mathpar}
\end{definition}
Sequential Kleene closure models indefinite repetition.
For instance, if our cookie recipe has a final step ``repeat as necessary'', the pomset language $\mathcal{U}^*$ represents all possible traces of repetitions of the recipe; e.g., $U^+ \cdot U^+ \cdot U^- \in \mathcal{U}^*$ is the trace where first two batches of sprinkled cookies are made, followed by one without sprinkles.
In contrast, parallel Kleene closure models unbounded parallelism; in this case, $\mathcal{U}^\dagger$ represents all possible traces of parallel executions of the recipe; e.g., $U^+ \parallel U^- \in \mathcal{U}^\dagger$ is the trace where we make two batches of cookies in parallel, one with and one without sprinkles.

\subsection{Series-parallel rational expressions}

Just as a rational expression can be used to describe a rational structure of sequential events, so too can a series-parallel rational expression be used to describe a rational structure of possibly parallel events.
Series-parallel rational expressions can be thought of as rational expressions with parallel composition, as well as a parallel analogue to the Kleene star.

\begin{definition}%
\label{definition:series-parallel-rational-expression}
We use $\terms$ to denote the set of \emph{series-parallel rational expressions} (\emph{spr-expressions}, for short)~\cite{lodaya-weil-2000}, formed by the grammar
\[e, f ::= 0 \pipe 1 \pipe a \in \Sigma \pipe e + f \pipe e \cdot f \pipe e \parallel f \pipe e^* \pipe e^\dagger\]
For a closed propositional formula $\Phi$, we write $[\Phi]$ as shorthand for $1$ if $\Phi$ is satisfied, and $0$ otherwise.
We use $e$, $f$, $g$ and $h$ to denote spr-expressions.
\end{definition}

Series-rational expressions have a semantics in terms of pomset languages.

\begin{definition}%
\label{definition:series-parallel-rational-expression-semantics}
The function $\sem{-}: \terms \to 2^{\pomsp}$ is defined~\cite{lodaya-weil-2000} inductively:
\begin{align*}
\sem{0} &= \emptyset & \sem{a} &= \set{a} & \sem{e \cdot f} &= \sem{e} \cdot \sem{f} & \semnostretch{e^*} &= \sem{e}^* \\
\sem{1} &= \set{1} & \sem{e + f} &= \sem{e} \cup \sem{f} &
\sem{e \parallel f} &= \sem{e} \parallel \sem{f} &
\semnostretch{e^\dagger} &= \sem{e}^\dagger
\end{align*}
If $\mathcal{U} \subseteq \pomsp$ such that $\mathcal{U} = \sem{e}$ for some $e \in \terms$, then $\mathcal{U}$ is said to be a \emph{series-parallel rational language}, or \emph{spr-language} for short.
\end{definition}

To illustrate, recall the pomset language $\mathcal{U}^* = \set{U^+, U^-}^*$.
We can describe $\set{U^-}$ and $\set{U^+}$ with the series-parallel rational expressions
\begin{mathpar}
e^- = \mathsf{prepare} \cdot (\mathsf{bake} \parallel \mathsf{caramelise}) \cdot \mathsf{glaze}
\and
e^+ = e^- \cdot \mathsf{sprinkle}
\end{mathpar}
which yields the spr-expression $e = e^- + e^+$ for $\mathcal{U}$; hence, $\sem{e^*} = \mathcal{U}^*$.

Note that spr-expressions without $\parallel$ and $\dagger$ are rational expressions, and spr-expressions without $\cdot$ and $*$ are commutative rational expressions~\cite{conway-1971}.
To see that spr-expressions are a proper extension of rational and commutative rational expressions, we observe the following.
\begin{lemma}%
\label{lemma:proper-extension}
Let $e \in \terms$.
The following are true.
\begin{inparaenum}[(i)]
    \item
    If $\sem{e}$ consists of words, then $\sem{e}$ is a rational language.
    \item
    If $\sem{e}$ consists of multisets, then $\sem{e}$ is a commutative rational language.
\end{inparaenum}
\end{lemma}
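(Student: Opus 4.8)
The plan is to prove both statements simultaneously by structural induction on $e$, driven by a few elementary observations about how total and discrete orders interact with the two composition operators. I would first record the following facts, all immediate from Definition~\ref{definition:pomset-composition}: (a) $U \cdot V$ is totally ordered if and only if both $U$ and $V$ are; (b) for non-empty $U$ and $V$, the parallel composition $U \parallel V$ is never totally ordered, since any point of $U$ is incomparable to any point of $V$; and dually (c) $U \parallel V$ is discretely ordered iff both $U$ and $V$ are, while (d) for non-empty $U, V$ the sequential composition $U \cdot V$ is never discretely ordered. These are the engine of the argument, and they let me collapse the ``wrong'' operator whenever the semantics stays inside the words (resp. multisets).

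For part (i), assume $\sem{e}$ consists of words. The cases $e \in \set{0, 1, a}$ are immediate. For $e = f + g$, both $\sem{f}, \sem{g} \subseteq \sem{e}$ consist of words, so the induction hypothesis applies and we take the sum of the resulting rational expressions. For $e = f \cdot g$: if $\sem{f}$ or $\sem{g}$ is empty then $\sem{e} = \emptyset$ and $0$ is a witness; otherwise observation (a) forces every pomset in $\sem{f}$ and $\sem{g}$ to be a word, and the induction hypothesis gives rational expressions whose product witnesses $\sem{e}$. For $e = f^*$, since $\sem{f} = \sem{f}^1 \subseteq \sem{e}$ the subexpression already denotes words, and the Kleene star of its witness suffices. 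The two interesting cases are $e = f \parallel g$ and $e = f^\dagger$, handled via observation (b): for $f \parallel g$, if one side denotes $\emptyset$ then $\sem{e} = \emptyset$, and otherwise (b) forces at least one of $\sem{f}, \sem{g}$ to equal $\set{1}$ (else a parallel product of two non-empty pomsets would lie in $\sem{e}$), so $\sem{e}$ equals the other factor and the induction hypothesis applies; for $f^\dagger$, the presence of $U \parallel U$ in $\sem{f}^{(2)} \subseteq \sem{e}$ for any non-empty $U \in \sem{f}$ forces $\sem{f} \subseteq \set{1}$, whence $\sem{e} = \set{1}$ and $1$ is a witness.

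Part (ii) is entirely dual: I would run the same induction with the roles of $\cdot$ and $\parallel$ (and of $*$ and $\dagger$) interchanged, invoking observations (c) and (d) in place of (a) and (b), and assembling witnesses built only from $0, 1, a, +, \parallel$ and $\dagger$, i.e.\ commutative rational expressions.

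The main obstacle is that the hypothesis ``$\sem{e}$ consists of words (resp.\ multisets)'' does not, a priori, pass to subexpressions: a subexpression may denote genuinely parallel pomsets that are annihilated by an empty factor or silenced by a unit factor higher up in $e$. The proof therefore hinges on cleanly separating the \emph{degenerate} cases --- where some subexpression denotes $\emptyset$ or exactly $\set{1}$, making $\sem{e}$ collapse to $\emptyset$ or $\set{1}$ --- from the \emph{non-degenerate} cases, where observations (a)--(d) guarantee that the relevant subexpressions inherit the word (resp.\ multiset) property so that the induction hypothesis becomes applicable. Verifying that this dichotomy is exhaustive in the $\cdot$, $\parallel$, $*$ and $\dagger$ cases is the crux; the remaining bookkeeping is just the routine closure of (commutative) rational languages under the matching operations.
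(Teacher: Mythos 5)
Your proof is correct. The paper omits the proof of this lemma as routine, and your structural induction is exactly the argument one would expect: the observations (a)--(d) about when sequential and parallel compositions preserve total (resp.\ discrete) orders are the right engine, and your explicit handling of the degenerate cases --- where a subexpression denotes $\emptyset$ or $\set{1}$ and so the word/multiset hypothesis need not pass down to it --- addresses the only genuinely delicate point in the induction.
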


We conclude our discussion of pomset languages by recalling the following lemma, which is useful when analysing the series-parallel rationality of a language.
For details, refer to~\cite{lodaya-weil-2000,laurence-struth-2014}.
\begin{lemma}%
\label{lemma:spr-languages-bounds}
If $\mathcal{U}$ is an spr-language, then there exists an $n \in \naturals$ such that for all $U \in \mathcal{U}$ it holds that $\depth{U} \leq n$.
\end{lemma}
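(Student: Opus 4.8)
The plan is to prove the stronger, purely syntactic statement by structural induction on the spr-expression $e$ with $\mathcal{U} = \sem{e}$: for every $e \in \terms$ there is an $n_e \in \naturals$ such that $\depth{U} \leq n_e$ for all $U \in \sem{e}$. The base cases $e \in \set{0, 1, a}$ are immediate, since $\sem{0} = \emptyset$, $\sem{1} = \set{1}$ and $\sem{a} = \set{a}$ only contain pomsets of depth at most $1$. The sum case $e = e_1 + e_2$ is equally routine: as $\sem{e} = \sem{e_1} \cup \sem{e_2}$, the bound $n_e = \max(n_{e_1}, n_{e_2})$ works. All of the remaining cases $\cdot$, $\parallel$, ${*}$ and ${\dagger}$ are governed by how depth behaves under (iterated) sequential or parallel composition, which I would isolate as a single auxiliary claim.

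The auxiliary claim is the heart of the proof: \emph{if $U_0, \dots, U_{k-1} \in \pomsp$ are non-empty with $\depth{U_i} \leq n$ for each $i$, then $\depth{U_0 \cdots U_{k-1}} \leq n+1$, and symmetrically $\depth{U_0 \parallel \dots \parallel U_{k-1}} \leq n+1$.} What matters is that the bound $n+1$ is independent of $k$, so that composing arbitrarily many factors adds only a single level of nesting. To establish it I would invoke Lemma~\ref{lemma:pomset-unique-factorisation}: the maximal sequential decomposition of $U_0 \cdots U_{k-1}$ is the concatenation of the maximal sequential decompositions of the individual $U_i$, so every factor $F$ appearing in it is a maximal sequential factor of some $U_i$. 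If that $U_i$ is itself sequentially decomposable, then by Definition~\ref{definition:pomset-depth} every such $F$ satisfies $\depth{F} < \depth{U_i} \leq n$; otherwise $F = U_i$ and $\depth{F} = \depth{U_i} \leq n$. Either way $\depth{F} \leq n$, so $\depth{U_0 \cdots U_{k-1}} = \max_F \depth{F} + 1 \leq n + 1$. The parallel statement is proved identically, replacing sequential by parallel decomposition.

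Granting the claim, the four remaining inductive cases follow uniformly. For $e = e_1 \cdot e_2$ and $e = e_1 \parallel e_2$, each pomset in $\sem{e}$ is a binary composition ($k = 2$) of pomsets of depths at most $n_{e_1}$ and $n_{e_2}$, so $n_e = \max(n_{e_1}, n_{e_2}) + 1$ suffices. For $e = e_1^*$, every $U \in \sem{e_1}^*$ is a sequential composition of finitely many pomsets from $\sem{e_1}$ (empty factors, being units, may be discarded), each of depth at most $n_{e_1}$; the claim then gives $\depth{U} \leq n_{e_1} + 1$ \emph{uniformly in the number of factors}, so $n_e = n_{e_1} + 1$ works. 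The case $e = e_1^\dagger$ is identical, using parallel composition.

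The step I expect to be the main obstacle is exactly this auxiliary claim, together with the Kleene-closure cases that rely on it. A naive induction using only the binary bound $\depth{U \cdot V} \leq \max(\depth{U}, \depth{V}) + 1$ would produce a depth bound that grows with the number of iterations, and hence would fail to bound $\sem{e_1^*}$ at all. The delicate point is the ``flattening'' of iterated compositions --- that maximal sequential and parallel decompositions merge under the corresponding composition --- which is precisely where the unique factorisation of Lemma~\ref{lemma:pomset-unique-factorisation} does the essential work.
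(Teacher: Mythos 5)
Your proof is correct. The paper gives no in-text proof of this lemma (it defers to the cited works of Lodaya--Weil and Laurence--Struth), but your structural induction on $e$, hinging on the flattening claim that a composition of arbitrarily many non-empty factors of depth at most $n$ has depth at most $n+1$ \emph{uniformly in the number of factors} (justified via the unique maximal decomposition of Lemma~\ref{lemma:pomset-unique-factorisation}), is exactly the standard argument from those references, and you correctly pinpoint why a naive binary bound would fail for the $*$ and $\dagger$ cases.
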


More specifically, the above lemma tells us that when we want to show that a pomset language $\mathcal{U}$ is not series-parallel rational, it suffices to find a sequence $\set{U_n}_{n \in \naturals} \subseteq \mathcal{U}$ such that for $n \in \naturals$ we have $\depth{U_n} < \depth{U_{n+1}}$.

\section{Pomset Automata}%
\label{section:pomset-automata}

We now describe an automaton model to recognise pomset languages.

\begin{definition}%
\label{definition:automaton}
A \emph{pomset automaton (PA)} is a tuple $\tuple{Q, \delta, \gamma, F}$ where
    $Q$ is a set of \emph{states}, with $F \subseteq Q$ the \emph{accepting states};
    $\delta: Q \times \Sigma \to Q$ is a function called the \emph{sequential transition function}, and
    $\gamma: Q \times Q \times Q \to Q$ is a function called the \emph{parallel transition function}.
\end{definition}

We do not fix an initial state; thus, a PA does not define a single pomset language but rather a mapping from states to pomset languages.
This mapping is defined in terms of a trace relation arising from $\delta$ and $\gamma$, as follows.

\begin{definition}%
\label{definition:automaton-language}
Let $A = \tuple{Q, \delta, \gamma, F}$ be a PA\@.
We define ${\tracerel_A} \subseteq Q \times \pomsp \times Q$ as the smallest relation that satisfies the rules
\begin{mathpar}
\inferrule{%
    q \in Q
}{%
    q \atrace{1}_A q
}
\and
\inferrule{%
    q \in Q \\
    a \in \Sigma
}{%
    q \atrace{a}_A \delta(q, a)
}
\and
\inferrule{%
    q \atrace{U}_A q'' \\\\
    q'' \atrace{V}_A q'
}{%
    q \atrace{U \cdot V}_A q'
}
\and
\inferrule{%
    r \atrace{U}_A r' \in F \\\\
    s \atrace{V}_A s' \in F
}{%
    q \atrace{U \parallel V}_A \gamma(q, r, s)
}
\end{mathpar}
The \emph{language} of $A$ at $q \in Q$ is $L_A(q) = \set{U \in \pomsp : \exists q' \in F.\ q \atrace{U}_A q'}$.
We say that $A$ \emph{accepts} the pomset language $\mathcal{U}$ if $L_A(q) = \mathcal{U}$ for some $q \in Q$.
\end{definition}

In the above, $\delta$ plays the same role as in classic finite automata: given a state and a symbol, it returns the new state after reading that symbol.
The function $\gamma$ deserves a bit more explanation: given states $q, r, s \in Q$, it tells us the state that is reached from $q$ after reading two pomsets in parallel starting at states $r$ and $s$, and having both reach an accepting state.

For the remainder of this section, we fix a PA $A = \tuple{Q, \delta, \gamma, F}$.
Individual triplets in the trace relation are referred to as \emph{traces}.
It is useful to establish some terminology when referring to traces.
Specifically, for all $q \in Q$:
\begin{itemize}
    \setlength{\itemsep}{0em}
    \item
    We define $q \atrace{1}_A q$ as a \emph{trivial} trace.

    \item
    For all $a \in \Sigma $, we define $q \atrace{a}_A \delta (q, a)$ as a \emph{$\delta $-trace}.

    \item
    For all traces $r \atrace{U}_A r'$ and $s \atrace{V}_A s'$ with $r', s' \in F$, we define $q \atrace{U \parallel V}_A \gamma(q, r, s)$ as a \emph{$\gamma$-trace}.
\end{itemize}
The $\delta$-traces and $\gamma$-traces are collectively known as \emph{unit} traces.

\medskip
To simplify matters later on, we assume that every PA $A$ has states $\bot \in Q \setminus F$ and $\top \in F$ such that
\begin{inparaenum}[(i)]
    \item for all $a \in \Sigma$, it holds that $\delta(\bot, a) = \delta(\top, a) = \bot$, and
    \item for all $r, s \in Q$, it holds that $\gamma(\bot, r, s) = \gamma(\top, r, s) = \bot$.
\end{inparaenum}

The state $\bot$ is useful when defining $\gamma$: for a fixed $q \in Q$, not all $r, s \in Q$ may give a value of $\gamma(q, r, s)$ that contributes to $L_A(q)$; in such cases, we set $\gamma(q, r, s) = \bot$.%
\footnote{Alternatively, we could have allowed $\gamma$ to be a partial function; including $\bot$ as a state, however, will simplify part of our construction in Section~\ref{section:expressions-to-automata}.}
The state $\top$ fulfills a similar role: it is used to signal that the target of a parallel transition accepts, but allows no further continuation of the trace; this will be important in Section~\ref{section:automata-to-expressions}, when we describe a class of pomset automata that admit a translation back to spr-expressions.

\begin{restatable}{lemma}{tracetopbottom}%
\label{lemma:trace-top-bottom}
Let $q \atrace{U}_A q'$ be non-trivial.
If $q = \bot$ or $q = \top$, then $q' = \bot$.
\end{restatable}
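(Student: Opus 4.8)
The plan is to proceed by induction on the derivation of the trace $q \atrace{U}_A q'$, that is, by rule induction on the four rules defining $\tracerel_A$. Throughout I fix the hypotheses that the trace is non-trivial and that $q \in \set{\bot, \top}$, and aim to conclude $q' = \bot$. The first observation is that the trivial rule cannot be the last rule applied, since it produces exactly the trivial trace $q \atrace{1}_A q$, which the hypothesis excludes; so the final rule is a $\delta$-rule, a $\gamma$-rule, or sequential composition.

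The two unit cases follow immediately from the special structure imposed on $\bot$ and $\top$. If the last rule is the $\delta$-rule, then $q' = \delta(q, a)$ for some $a \in \Sigma$, and since $\delta(\bot, a) = \delta(\top, a) = \bot$ by assumption, we get $q' = \bot$. If the last rule is the $\gamma$-rule, then $q' = \gamma(q, r, s)$ for some $r, s \in Q$, and since $\gamma(\bot, r, s) = \gamma(\top, r, s) = \bot$, again $q' = \bot$. Note that a $\gamma$-trace may be labelled by $1$ (when both premises are trivial traces into accepting states), so these ``empty'' yet non-trivial traces are handled correctly here; this is also why the induction is taken on the derivation rather than on the structure of the label $U$.

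The remaining and only delicate case is sequential composition, where $U = U_1 \cdot U_2$ is obtained from subtraces $q \atrace{U_1}_A q''$ and $q'' \atrace{U_2}_A q'$. I would split on whether the first subtrace is trivial. If it is non-trivial, the induction hypothesis applies to it, since its source is still $q \in \set{\bot, \top}$, and yields $q'' = \bot$; the second subtrace then has source $\bot$, and whether it is trivial or non-trivial its target is $\bot$ (immediately, or by the induction hypothesis), so $q' = \bot$. If the first subtrace is trivial, then $U_1 = 1$ and $q'' = q$, so the second subtrace $q \atrace{U_2}_A q'$ coincides with the whole trace and is therefore non-trivial; applying the induction hypothesis to it, with source $q \in \set{\bot, \top}$, again gives $q' = \bot$.

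The main obstacle is ensuring the induction hypothesis is always applicable in the sequential case, which rests on the observation that if both subtraces were trivial the composite would be the trivial trace, contradicting the hypothesis; hence at least one subtrace inherits a source in $\set{\bot, \top}$ together with non-triviality. Making this bookkeeping precise is the only subtle point, and the rest of the argument reduces to a direct appeal to the defining equations of $\delta$ and $\gamma$ on $\bot$ and $\top$.
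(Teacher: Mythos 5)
Your proof is correct and follows essentially the same route as the paper's: induction on the derivation of ${\tracerel_A}$, with the $\delta$- and $\gamma$-cases discharged immediately by the defining equations of $\delta$ and $\gamma$ at $\bot$ and $\top$, and the sequential case handled by a case split ensuring at least one subtrace is non-trivial so the induction hypothesis applies. Your extra remark about $\gamma$-traces labelled by $1$ is a valid point of care but does not alter the argument.
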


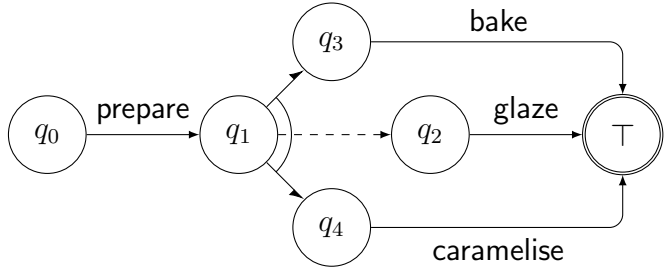
\begin{figure}
    \centering
    \begin{tikzpicture}[every node/.style={transform shape}]
        \node[state] (q0) {$q_0$};
        \node[state,right=15mm of q0] (q1) {$q_1$};
        \node[state,above right=7mm of q1] (q3) {$q_3$};
        \node[state,below right=7mm of q1] (q4) {$q_4$};
        \node[state,right=15mm of q1] (q2) {$q_2$};
        \node[state,accepting,right=15mm of q2] (top) {$\top$};

        \draw[-latex] (q0) edge node[above] {\textsf{prepare}} (q1);
        \draw[ptransleft] (q1) edge (q3);
        \draw[ptransright] (q1) edge (q4);
        \draw[dashed,-latex] (q1) edge (q2);
        \draw[-latex, rounded corners=5pt] (q3) -| node[above,xshift=-4em] {\textsf{bake}} (top);
        \draw[-latex, rounded corners=5pt] (q4) -| node[below,xshift=-4em] {\textsf{caramelise}} (top);
        \draw pic[draw,angle radius=7mm] {angle=q4--q1--q3};
        \draw[-latex] (q2) edge node[above] {\textsf{glaze}} (top);
    \end{tikzpicture}
    \caption{Pomset automaton accepting $U$.}\label{figure:example-pa}
\end{figure}

We draw a PA in a way similar to finite automata: each state (except $\bot$) is a vertex, and accepting states are marked by a double border.
To represent sequential transitions, we draw labelled edges; for instance, in Figure~\ref{figure:example-pa}, $\delta(q_0, \mathsf{prepare}) = q_1$.
To represent parallel transitions, we draw hyper-edges; for instance, in Figure~\ref{figure:example-pa}, $\gamma(q_1, q_3, q_4) = q_2$.
To avoid clutter, we do not draw either of these edge types when the target state is $\bot$.
It is not hard to verify that the pomset $U$ of the earlier example is accepted by the PA in Figure~\ref{figure:example-pa}.

\subsection{Finite support}

Deterministic automata with infinitely many states can accept non-rational languages.
Since spr-languages extend rational languages by Lemma~\ref{lemma:proper-extension}, and PAs obviously extend deterministic automata, it follows that allowing PAs with infinitely many states would dash our hopes of a Kleene theorem.

On the other hand, it is useful to work with PAs that have infinitely many states, as we shall see in Section~\ref{section:expressions-to-automata}.
To strike a middle ground, we identify a class of PAs with possibly infinitely many states that, for any state $q$, allow a restriction to a PA with finitely many states accepting the language of $q$.

\begin{definition}
The \emph{trace dependency relation} of $A$, denoted $\preceq_A$, is the smallest preorder on $Q$ that satisfies the rules
\begin{mathpar}
\inferrule{%
    q, r, s \in Q \\
    \gamma(q, r, s) \neq \bot
}{%
    r, s \preceq_A q
}
\and
\inferrule{%
    a \in \Sigma \\
    q \in Q
}{%
    \delta(q, a) \preceq_A q
}
\and
\inferrule{%
    q, r, s \in Q
}{%
    \gamma(q, r, s) \preceq_A q
}
\end{mathpar}
It should be emphasised that, in general, $\preceq_A$ is not a partial order --- antisymmetry may fail because of loops in the transition structure.

We write $\prec_A$ for the \emph{strict trace dependency relation}, which is the strict order that arises by setting $q \prec_A q'$ if and only if $q \preceq_A q'$ and $q' \not\preceq_A q$.
\end{definition}

\begin{definition}
We say that $Q' \subseteq Q$ is \emph{closed} in $A$ when $Q'$ is downward-closed with respect to $\preceq_A$ --- that is, for all $q \in Q'$ and $r \in Q$ such that $r \preceq_A q$, it follows that $r \in Q'$.
We write $\pi_A(q)$ for the \emph{support} of $q$ in $A$, which is the smallest closed subset of $Q$ that contains $q$.
We say that $A$ is \emph{finitely supported} if for all $q \in Q$ it holds that $\pi_A(q)$ is finite.
\end{definition}

With this definition, the following is not hard to see.

\begin{restatable}{lemma}{restrictfinitelysupportedautomaton}%
\label{lemma:restrict-finitely-supported-automaton}
If $A$ is finitely supported, then for every $q \in Q$ there exists a finite pomset automaton $A_q$ with a state $q'$, such that $L_A(q) = L_{A_q}(q')$.
\end{restatable}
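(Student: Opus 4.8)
The plan is to take $A_q$ to be the restriction of $A$ to the support $\pi_A(q)$ and to show that this restriction preserves exactly the accepting traces from $q$. Concretely, I would set $A_q = \tuple{Q_q, \delta_q, \gamma_q, F_q}$ where $Q_q = \pi_A(q) \cup \set{\bot, \top}$, with $\delta_q$ and $\gamma_q$ the restrictions of $\delta$ and $\gamma$ to $Q_q$, and $F_q = F \cap Q_q$; the chosen state is $q' = q$. Since $A$ is finitely supported, $\pi_A(q)$ is finite, so $Q_q$ is finite and $A_q$ is a finite PA\@. The first routine check is that this is well defined: because $\pi_A(q)$ is downward-closed under $\preceq_A$ and the defining rules give $\delta(p,a) \preceq_A p$ and $\gamma(p,r,s) \preceq_A p$, every transition out of a state of $Q_q$ again lands in $Q_q$ (the states $\bot, \top$ only ever transition to $\bot$), so $\delta_q$ and $\gamma_q$ are genuinely total functions into $Q_q$, and $A_q$ inherits the required $\bot/\top$ convention.

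It then remains to prove $L_A(q) = L_{A_q}(q)$ by two inclusions. The inclusion $L_{A_q}(q) \subseteq L_A(q)$ (soundness) is the easy direction: every rule instance used to derive a trace in $A_q$ is also a valid rule instance in $A$ (note $F_q \subseteq F$, so the accepting side-conditions of the $\gamma$-rule transfer), so a straightforward induction on the derivation shows ${\tracerel_{A_q}} \subseteq {\tracerel_A}$, and accepting traces are preserved.

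The substance is in the reverse inclusion. Here I would first record the auxiliary fact that the trace relation respects $\preceq_A$: an easy induction on the rules shows $p \atrace{U}_A p'$ implies $p' \preceq_A p$, so traces never leave a closed set through their endpoints. The core claim is then: for every $p \in \pi_A(q)$ and every $A$-trace $p \atrace{U}_A p'$ with $p' \neq \bot$, the same trace is derivable in $A_q$. I would prove this by induction on the derivation. The trivial and $\delta$-cases are immediate. For a sequential step $p \atrace{U_1}_A p'' \atrace{U_2}_A p'$, I use Lemma~\ref{lemma:trace-top-bottom} to argue $p'' \neq \bot$ (otherwise the second trace, being from $\bot$ to $p' \neq \bot$, would have to be trivial, forcing $p' = \bot$), while $p'' \in \pi_A(q)$ follows from the respects-$\preceq_A$ fact; so both halves satisfy the induction hypothesis and recombine.

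The main obstacle — and the reason the non-$\bot$ hypothesis is needed — is the $\gamma$-case $p \atrace{U_1 \parallel U_2}_A \gamma(p,r,s) = p'$, where the branch roots $r, s$ must be shown to lie in $\pi_A(q)$ so that the parallel branches can be replayed in $A_q$. This is exactly where $p' \neq \bot$ pays off: from $\gamma(p,r,s) = p' \neq \bot$ the first defining rule of $\preceq_A$ yields $r, s \preceq_A p$, hence $r, s \in \pi_A(q)$. The two branch traces $r \atrace{U_1}_A r'$ and $s \atrace{U_2}_A s'$ end in accepting (hence non-$\bot$) states, so the induction hypothesis applies to them; moreover $r', s' \in \pi_A(q)$ by the respects-$\preceq_A$ fact, so $r', s' \in F \cap Q_q = F_q$, and the $\gamma$-rule of $A_q$ fires to reconstruct the trace. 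Finally, applying the core claim to an arbitrary accepting trace $q \atrace{U}_A q'$ (for which $q' \in F$ guarantees $q' \neq \bot$) yields $L_A(q) \subseteq L_{A_q}(q)$, completing the argument.
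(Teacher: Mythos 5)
Your proposal is correct and follows essentially the same route as the paper's proof: restrict $A$ to the support $\pi_A(q)$, observe that downward-closure under $\preceq_A$ makes the restricted transition functions total, and prove the two inclusions by induction on the trace derivation, using $\gamma(p,r,s)\neq\bot \Rightarrow r,s\preceq_A p$ in the crucial $\gamma$-case. The only differences are cosmetic (explicitly adjoining $\bot,\top$ to the state set, and stating the ``traces respect $\preceq_A$'' fact as a separate auxiliary claim rather than folding it into the induction).
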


Finite support is also useful in that it ensures well-foundedness of the strict trace dependency relation.
\begin{restatable}{lemma}{finitelysupportedimplieswellfounded}
If $A$ is finitely supported, then $\prec_A$ is well-founded.
\end{restatable}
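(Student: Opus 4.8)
The plan is to argue by contradiction, using finiteness of the support to bound the chain and transitivity of $\preceq_A$ to extract a clash with strictness. Suppose $\prec_A$ is \emph{not} well-founded; then there is an infinite descending chain $\set{q_n}_{n \in \naturals} \subseteq Q$ with $q_{n+1} \prec_A q_n$ for all $n$. Since every strict step $q_{n+1} \prec_A q_n$ entails $q_{n+1} \preceq_A q_n$, and $\preceq_A$ is a preorder (hence transitive), a straightforward induction gives $q_n \preceq_A q_0$ for all $n$. Because $\pi_A(q_0)$ is downward-closed with respect to $\preceq_A$ and contains $q_0$, it therefore contains every $q_n$; so the whole chain lives inside $\pi_A(q_0)$.

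At this point finite support does the essential work: by hypothesis $\pi_A(q_0)$ is a \emph{finite} set, whereas $\set{q_n}_{n\in\naturals}$ is an infinite sequence of its elements. By the pigeonhole principle there exist indices $i < j$ with $q_i = q_j$. Chaining the $\preceq_A$-steps along the descending segment $q_j \prec_A q_{j-1} \prec_A \dots \prec_A q_{i+1}$ (reading the degenerate case $j = i+1$ as reflexivity) yields $q_j \preceq_A q_{i+1}$, and since $q_i = q_j$ this says $q_i \preceq_A q_{i+1}$. On the other hand, the first step $q_{i+1} \prec_A q_i$ means, by definition of the strict relation, that $q_i \not\preceq_A q_{i+1}$. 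These two facts contradict each other, so no such infinite chain can exist, and $\prec_A$ is well-founded.

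The argument is short and essentially routine; the only points requiring care are bookkeeping ones rather than genuine obstacles. First, I must confirm that $q_n \preceq_A q_0$ really forces $q_n \in \pi_A(q_0)$, which follows immediately from $\pi_A(q_0)$ being closed (downward-closed under $\preceq_A$). Second, I should handle the boundary case $j = i+1$ explicitly so that the transitive chaining is interpreted correctly when the intermediate segment is empty. The conceptual crux is simply recognising that finiteness of $\pi_A(q_0)$ converts a purely order-theoretic obstruction into a pigeonhole argument; once the chain is confined to a finite set, a repeated state is unavoidable and strictness of $\prec_A$ is violated.
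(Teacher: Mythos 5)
Your proof is correct and follows essentially the same route as the paper's: confine the chain to the finite set $\pi_A(q_0)$, apply the pigeonhole principle to find a repeated state, and derive a contradiction with the strictness of $\prec_A$. The only cosmetic difference is that you unfold the definition of $\prec_A$ to contradict $q_i \not\preceq_A q_{i+1}$ directly, whereas the paper invokes transitivity and irreflexivity of the strict order; both amount to the same argument.
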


\subsection{Trace length}

We conclude this section with the following technical lemma, which gives us an alternative inductive handle for the lemmas to come.

\begin{restatable}{lemma}{tracelength}%
\label{lemma:trace-length}
If $q \atrace{U}_A q'$, then there exist $q_0, \dots, q_\ell \in Q$ with $q = q_0$ and $q_\ell = q'$, and $U = U_0 \cdots U_{\ell-1}$ such that for $0 \leq i < \ell$ it holds that $q_i \atrace{U_i}_A q_{i+1}$.
Furthermore, each of these traces is a unit trace.
\end{restatable}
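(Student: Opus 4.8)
The plan is to proceed by \emph{rule induction} on the derivation of $q \atrace{U}_A q'$, exploiting that $\tracerel_A$ is defined in Definition~\ref{definition:automaton-language} as the smallest relation closed under the four inference rules. This is the natural choice: an induction on the structure of $U$ would not work directly, because the sequential-composition rule may decompose $U \cdot V$ with either factor equal to $1$, and the same pomset may admit several distinct derivations. Rule induction sidesteps these issues by recursing on the (finite) derivation tree instead.

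For the base cases I would treat the trivial rule together with the two rules producing unit traces. For a trivial trace $q \atrace{1}_A q$ I take $\ell = 0$: the state sequence is the single point $q_0 = q = q'$, the pomset is the empty product $U = 1$, and there are no constituent traces to classify, so the final clause holds vacuously. For a $\delta$-trace $q \atrace{a}_A \delta(q,a)$ I take $\ell = 1$ with $q_0 = q$, $q_1 = \delta(q,a)$ and $U_0 = a$; the single constituent is by definition a $\delta$-trace, hence a unit trace. The $\gamma$-rule is handled identically: for $q \atrace{U \parallel V}_A \gamma(q,r,s)$ (with $r \atrace{U}_A r' \in F$ and $s \atrace{V}_A s' \in F$) I take $\ell = 1$, $q_1 = \gamma(q,r,s)$ and $U_0 = U \parallel V$, and I crucially do \emph{not} recurse into the premises: the whole parallel composition is itself a single $\gamma$-trace, and therefore already a unit trace.

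The only genuine inductive step is the sequential-composition rule, which derives $q \atrace{U \cdot V}_A q'$ from premises $q \atrace{U}_A q''$ and $q'' \atrace{V}_A q'$. Applying the induction hypothesis to each premise yields unit-trace factorisations $q = p_0 \atrace{U_0}_A \cdots \atrace{U_{m-1}}_A p_m = q''$ and $q'' = r_0 \atrace{V_0}_A \cdots \atrace{V_{k-1}}_A r_k = q'$. Since $p_m = q'' = r_0$, concatenating these two chains gives a single sequence of $\ell = m + k$ unit traces running from $q$ to $q'$, whose labels multiply to $U_0 \cdots U_{m-1} \cdot V_0 \cdots V_{k-1} = U \cdot V$, as required; this works uniformly even when one or both chains are empty.

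I expect no serious obstacle here: the statement is essentially a bookkeeping result asserting that sequential composition in the trace relation can be flattened, with trivial traces absorbed as empty products. The only points demanding care are permitting $\ell = 0$ so that trivial traces contribute nothing, and recognising that a $\gamma$-trace must be preserved intact as one unit rather than broken into its parallel components --- which is precisely why the $\gamma$-rule is treated as a base case of the induction rather than as an inductive step.
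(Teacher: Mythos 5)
Your proposal is correct and matches the paper's own proof: both proceed by rule induction on the construction of $\tracerel_A$, take $\ell = 0$ for trivial traces and $\ell = 1$ for $\delta$- and $\gamma$-traces (without recursing into the premises of the $\gamma$-rule), and concatenate the two factorisations in the sequential case. The only cosmetic difference is that the paper files the $\gamma$-rule under the inductive step rather than the base cases, but since neither argument applies the induction hypothesis there, this is immaterial.
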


The minimal $\ell$ for a given trace as obtained from the above lemma is known as the \emph{length} of the trace.
Note that a trace of zero length is necessarily trivial, and a trace of unit length is necessarily a unit trace.

\section{Automata to expressions}%
\label{section:automata-to-expressions}

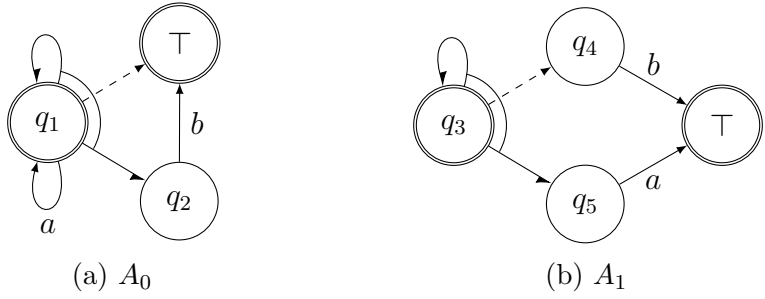
\begin{figure}
    \centering
    \begin{subfigure}[b]{0.4\textwidth}
        \centering
        \begin{tikzpicture}[every loop/.style={}]
            \node[state,accepting] (q1) {$q_1$};
            \node[state,below right=3mm and 10mm of q1] (q2) {$q_2$};
            \node[state,accepting,above right=3mm and 10mm of q1] (top) {$\top$};

            \draw[ptransleftinv] (q1) edge[loop above] (q1);
            \draw[-latex] (q1) edge[loop below] node {$a$} (q1);
            \draw[ptransright] (q1) edge (q2);
            \draw[dashed,-latex] (q1) edge (top);
            \draw[-latex] (q2) edge node[right] {$b$} (top);
            \draw (q1) + (-30:7mm) arc (-30:76:7mm);
        \end{tikzpicture}
        \caption{$A_0$}\label{figure:problem-self-loop-sequential-loop}
    \end{subfigure}
    \begin{subfigure}[b]{0.5\textwidth}
        \centering
        \begin{tikzpicture}[every loop/.style={}]
            \node[state,accepting] (q3) {$q_3$};
            \node[state,above right=3mm and 10mm of q3] (q4) {$q_4$};
            \node[state,accepting,right=25mm of q3] (q6) {$\top$};
            \node[state,below right=3mm and 10mm of q3] (q5) {$q_5$};

            \draw[-latex] (q4) edge node[above] {$b$} (q6);
            \draw[-latex] (q5) edge node[below] {$a$} (q6);

            \draw[ptransleftinv] (q3) edge[loop above] (q3);
            \draw[ptransright] (q3) edge (q5);
            \draw[dashed,-latex] (q3) edge (q4);
            \draw (q3) + (-30:7mm) arc (-30:76:7mm);
        \end{tikzpicture}
        \caption{$A_1$}\label{figure:problem-self-loop-diversify}
    \end{subfigure}
    \caption{Finitely supported pomset automata that accept languages of unbounded depth.}\label{figure:problem-self-loop}
\end{figure}

Let us fix a finitely supported PA $A = \tuple{Q, \delta, \gamma, F}$.
We set out to find for every $q \in Q$ an $e_q \in \terms$ such that $L_A(q) = \sem{e_q}$.
Before we get started, however, it should be noted that not all finitely supported (or even finite) pomset automata admit such a translation.
This is because $\delta$ and $\gamma$ can conspire to create a state with a language of unbounded depth; Lemma~\ref{lemma:spr-languages-bounds} then tells us that the corresponding spr-expression cannot exist.

\begin{example}
Consider the PA $A_0$ in Figure~\ref{figure:problem-self-loop-sequential-loop}.
Here, we have that
\begin{mathpar}
q_1 \atrace{a}_{A_0} \delta(q_1, a) = q_1
\and
q_2 \atrace{b}_{A_0} \delta(q_2, b) = \top
\end{mathpar}
Since $q_1, \top \in F$, we find $q_1 \atrace{a \parallel b}_{A_0} \gamma(q_1, q_1, q_2) = \top$ and $q_1 \atrace{a \cdot (a \parallel b)}_{A_0} \top$.
Hence,
\begin{mathpar}
q_1 \atrace{a \cdot (a \parallel b) \parallel b}_{A_0} \gamma(q_1, q_1, q_2) = \top
\and
q_1 \atrace{a \cdot (a \cdot (a \parallel b) \parallel b)}_{A_0} \top
\end{mathpar}
We can repeat this indefinitely, thereby showing that
\[
    \set{1, a, a \parallel b, a \cdot (a \parallel b), a \cdot (a \parallel b) \parallel b, \dots} \subseteq L_{A_0}(q_1)
\]
Consequently, there is no $e \in \terms$ such that $L_{A_0}(q_1) = \sem{e}$, by Lemma~\ref{lemma:spr-languages-bounds}.
\end{example}

\begin{example}
Consider the PA $A_1$ in Figure~\ref{figure:problem-self-loop-diversify}.
Here, we have that
\begin{mathpar}
q_5 \atrace{a}_{A_1} \delta(q_5, a) = \top
\and
q_4 \atrace{b}_{A_1} \delta(q_4, b) = \top
\and
q_3 \atrace{1}_{A_1} q_3
\end{mathpar}
Since $q_3 \in F$, we find $q_3 \atrace{1 \parallel a}_{A_1} \gamma(q_3, q_3, q_5) = q_4$ and $q_3 \atrace{a \cdot b}_{A_1} \top$.
Hence,
\begin{mathpar}
q_3 \atrace{a \cdot b \parallel a}_{A_1} \gamma(q_3, q_3, q_5) = q_4
\and
q_3 \atrace{(a \cdot b \parallel a) \cdot b}_{A_1} \top
\end{mathpar}
We can repeat the above to show that $\set{1, a \cdot b, ((a \cdot b) \parallel a) \cdot b, \dots} \subseteq L_{A_1}(q_3)$.
By Lemma~\ref{lemma:spr-languages-bounds}, we then find that there is no $e \in \terms$ with $L_{A_1}(q_3) = \sem{e}$.
\end{example}

To get around the problem of unbounded languages, we structurally restrict pomset automata in such a way that such behavior is excluded.
To do this, we need to get a handle on the constellation of states and transitions common to the examples above that allows the depth of the pomset languages accepted by $A_0$ and $A_1$ to run amok; this is done in the following lemma.

\begin{lemma}
Let $q_0, q_2, q_4 \in Q$ and $q_1, q_3, q_5 \in F$.
Let $U, V, W, X \in \pomsp$ be such that the following (c.f.\ Figure~\ref{figure:problematic-pattern}) hold:
\begin{mathpar}
q_0 \atrace{U}_A q_1
\and
q_0 \atrace{V}_A q_0
\and
q_2 \atrace{X}_A q_3
\and
q_4 \atrace{W}_A q_5
\and
\gamma(q_0, q_2, q_0) = q_4
\end{mathpar}
If $X \neq 1$, and moreover $W \neq 1$ or $V \neq 1$, then $L_A(q_0)$ has unbounded depth.
\end{lemma}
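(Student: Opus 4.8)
The plan is to exhibit an explicit sequence of pomsets in $L_A(q_0)$ whose depths grow without bound, by iterating the parallel branch at $q_0$ so that each iteration nests one more level of parallel composition inside the previous trace. The engine of the construction is that $q_0$ can re-enter its own parallel branch through the third argument of $\gamma(q_0, q_2, q_0) = q_4$. Concretely, I first establish a \emph{pumping step}: for every $Y \in L_A(q_0)$, the pomset $h(Y) := (X \parallel (V \cdot Y)) \cdot W$ again lies in $L_A(q_0)$. This follows by chaining the hypothesised traces with the rules of Definition~\ref{definition:automaton-language}: from $q_0 \atrace{V}_A q_0$ and an accepting trace $q_0 \atrace{Y}_A s' \in F$ (which exists since $Y \in L_A(q_0)$) the sequential rule gives $q_0 \atrace{V \cdot Y}_A s'$; feeding this as the right thread, together with $q_2 \atrace{X}_A q_3 \in F$ as the left thread, into the parallel rule yields $q_0 \atrace{X \parallel (V \cdot Y)}_A \gamma(q_0, q_2, q_0) = q_4$; and appending $q_4 \atrace{W}_A q_5 \in F$ produces an accepting trace for $(X \parallel (V \cdot Y)) \cdot W$ at $q_0$.

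Next I would set $Y_0 = U$ (so $Y_0 \in L_A(q_0)$, witnessed by $q_0 \atrace{U}_A q_1 \in F$) and $Y_{n+1} = h(Y_n)$; the pumping step gives $Y_n \in L_A(q_0)$ for every $n$. Since $X \neq 1$, each $Y_{n+1}$ contains a non-trivial parallel factor, so $Y_{n+1} \neq 1$; in particular $Y_n \neq 1$ for all $n \geq 1$. It then remains to prove the estimate $\depth{Y_{n+1}} \geq \depth{Y_n} + 1$ for $n \geq 1$, from which $\depth{Y_n} \to \infty$, and hence the unboundedness of $L_A(q_0)$, follows immediately.

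The depth estimate is the crux, and it is where the hypothesis ``$W \neq 1$ or $V \neq 1$'' is used. I would rely on two monotonicity facts about depth on series-parallel pomsets, both read off Lemma~\ref{lemma:pomset-unique-factorisation} and Definition~\ref{definition:pomset-depth}: (a) $\depth{P \cdot Q} \geq \depth{Q}$ for all $P, Q$; and (b) if $P \neq 1$ and $Q \neq 1$ is \emph{not} a parallel composition at top level, then $Q$ occurs as a factor of the maximal parallel decomposition of $P \parallel Q$, whence that decomposition has at least two factors and $\depth{P \parallel Q} \geq \depth{Q} + 1$. The delicate point is precisely the top-level hypothesis in (b): if $V \cdot Y_n$ were itself a top-level parallel pomset, it would merge with $X$ and no depth would be gained (this is exactly what happens, and why the claim fails, in the excluded case $V = W = 1$, where $h(Y) = X \parallel Y$ flattens). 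I rule this out by a short case analysis for $n \geq 1$: if $V \neq 1$, then $V \cdot Y_n$ is a sequential composition of two non-empty pomsets, hence sequential (not parallel) at top by Lemma~\ref{lemma:pomset-unique-factorisation}; and if $V = 1$, then by hypothesis $W \neq 1$, so $Y_n = (X \parallel (V \cdot Y_{n-1})) \cdot W$ is a sequential composition of two non-empty pomsets and $V \cdot Y_n = Y_n$ is again not parallel at top. In either case (b) applies to $P = X$ and $Q = V \cdot Y_n$, giving $\depth{X \parallel (V \cdot Y_n)} \geq \depth{V \cdot Y_n} + 1 \geq \depth{Y_n} + 1$ via (a); a final application of (a) to the outer $\cdot W$ yields $\depth{Y_{n+1}} = \depth{(X \parallel (V \cdot Y_n)) \cdot W} \geq \depth{X \parallel (V \cdot Y_n)} \geq \depth{Y_n} + 1$, as required. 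The main obstacle is thus entirely contained in the depth bookkeeping: proving facts (a) and (b) carefully from the maximality clause in the definition of depth, and marshalling the case analysis so that the inner pomset is never a top-level parallel composition.
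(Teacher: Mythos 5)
Your proof is correct and follows essentially the same route as the paper: the same pumping step showing $L_A(q_0)$ is closed under $Y \mapsto (X \parallel (V \cdot Y)) \cdot W$, followed by iteration from $U$ to produce pomsets of unbounded depth. The only difference is that you spell out the depth estimate (and the case analysis ensuring the inner pomset is not top-level parallel) that the paper asserts without proof; that bookkeeping is accurate.
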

\begin{proof}
Suppose that $Y \in L_A(q_0)$, i.e., $q_0 \atrace{Y}_A q'$ for some $q' \in F$.
Given the traces and the fork transition in the premises, we can then derive that
\[
    q_0 \atrace{(X \parallel (V \cdot Y)) \cdot W} q_5 \in F
\]
and hence $(X \parallel (V \cdot Y)) \cdot W \in L_A(q_0)$.
Thus, $L_A(q_0)$ is closed under
\[
    f: \pomsp \to \pomsp \quad \mbox{given by} \quad f({-}) = (X \parallel (V \cdot {-})) \cdot W
\]
By the premise that $X \neq 1$ as well as $W \neq 1$ or $V \neq 1$, it follows that for $Y \in \pomsp$ we have $\depth{Y} < \depth{f(Y)}$.
Because $U \in L_A(q_0)$, we can point to $\set{U, f(U), f^2(U), \dots}$ as a set of unbounded depth contained in $L_A(q_0)$, and thus conclude that this pomset language has unbounded depth.
\end{proof}

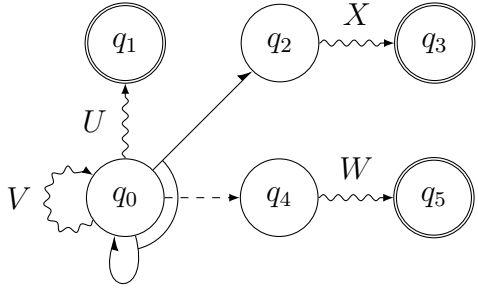
\begin{figure}
    \centering
    \begin{tikzpicture}[every loop/.style={}]
        \node[state] (q0) {$q_0$};
        \node[state,accepting,above=of q0] (q1) {$q_1$};
        \node[state,right=of q1] (q2) {$q_2$};
        \node[state,accepting,right=of q2] (q3) {$q_3$};
        \node[state,right=of q0] (q4) {$q_4$};
        \node[state,accepting,right=of q4] (q5) {$q_5$};

        \tikzset{decoration={snake,amplitude=.4mm,segment length=2mm,post length=1mm}}

        \draw (q0) edge[-latex,decorate] node[left=1mm] {$U$} (q1);
        \draw (q2) edge[-latex,decorate] node[above=1mm] {$X$} (q3);
        \draw (q4) edge[-latex,decorate] node[above=1mm] {$W$} (q5);
        \draw (q0) edge[loop left,decorate,-latex,out=-145,in=145,looseness=4] node[left=1mm] {$V$} (q0);

        \draw[ptransright] (q0) edge[loop below] (q0);
        \draw[ptransleft] (q0) edge (q2);
        \draw (q0) + (-76:7mm) arc (-76:45:7mm);
        \draw[-latex,dashed] (q0) edge (q4);
    \end{tikzpicture}
    \caption{A template for a state with a language of unbounded depth.}\label{figure:problematic-pattern}
\end{figure}

To counteract the pattern summarised above, we propose the following.

\begin{definition}%
\label{definition:well-nested}
We say that $q \in Q$ is \emph{sequential} if for all $r, s \in Q$ with $\gamma(q, r, s) \neq \bot$, it holds that $r, s \prec_A q$.
We say that $q \in F$ is \emph{recursive} if
\begin{inparaenum}[(i)]
    \item it is not sequential, and
    \item for all $a \in \Sigma$ we have $\delta(q, a) = \bot$, and
    \item if $r, s \in Q$ and $\gamma(q, r, s) \neq \bot$, then $s = q$ and $r \prec_A q$, and $\gamma(q, r, s) = \top$.
\end{inparaenum}

We write $Q_\seq$ (resp.\ $Q_\rec$) for the set of states in $Q$ that are sequential (resp.\ recursive), and say that $A$ is \emph{well-nested} if $Q = Q_\seq \cup Q_\rec$.
\end{definition}

One easily sees that $A_0$ and $A_1$ are not well-nested: neither $q_1$ nor $q_3$ is sequential, because of their self-forks, but $q_1$ is not recursive because $\delta(q_1, a) = q_1 \neq \bot$, and $q_3$ is not recursive because $\gamma(q_3, q_5, q_3) = q_4 \not\in \set{\top, \bot}$.

As a matter of fact, Definition~\ref{definition:well-nested} is slightly overzealous --- strictly speaking, there are non-well-nested PAs which accept spr-languages exclusively.
We will show in Section~\ref{sec:cfg} that checking for series-parallel rationality of a finite PA is undecidable, and must therefore accept that any decidable restriction that enforces series-parallel rationality will forbid certain valid automata.

In Section~\ref{section:expressions-to-automata}, we shall associate with every spr-expression $e$ a finitely supported \emph{and} well-nested PA that accepts $\sem{e}$.
The bi-directional correspondence between spr-expressions and pomset automata is therefore maintained.

\medskip

To ease notation, we assume for the remainder of this section that $A$ is well-nested.
We shall establish that for every state $q$ of $A$ there exists an spr-expression $e_q$ such that $L_A(q) = \sem{e_q}$.
Since $\prec_A$ is well-founded, we can proceed by induction on $\prec_A$, i.e., the induction hypothesis for $q$ is that for all $r \in Q$ with $r \prec_A q$ we can construct an $e_r \in \terms$ such that $\sem{e_r} = L_A(r)$.

The language of a recursive state is not very hard to characterise.

\begin{restatable}{lemma}{recursivestatelanguage}%
\label{lemma:recursive-state-language}
If $q \in Q_\rec$, then
\[L_A(q) = {\biggl(\bigcup\nolimits_{\gamma(q, r, q) = \top} L_A(r)\biggr)}^\dagger\]
\end{restatable}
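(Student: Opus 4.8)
The plan is to prove the two inclusions separately, after first pinning down the shape of every trace leaving the recursive state $q$. Write $\mathcal{L} = \bigcup_{\gamma(q,r,q)=\top} L_A(r)$ for the language on the right-hand side, so the claim reads $L_A(q) = \mathcal{L}^\dagger$. Note at the outset that $q \notin \set{\bot, \top}$: a recursive state lies in $F$ (so $q \neq \bot$, since $\bot \notin F$), while $\top$ is \emph{sequential} because $\gamma(\top, r, s) = \bot$ for all $r,s$ makes the sequential condition vacuous, and a recursive state is by definition not sequential. Using Lemma~\ref{lemma:trace-length} I would split any trace $q \atrace{Y}_A q'$ into unit traces and inspect the first one: it is either a $\delta$-trace, which lands in $\bot$ because $\delta(q,a) = \bot$, or a $\gamma$-trace $q \atrace{U \parallel V}_A \gamma(q,r,s)$, where by recursiveness a non-$\bot$ target forces $s = q$, $r \prec_A q$ and $\gamma(q,r,q) = \top$. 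Thus the first step either enters $\bot$ or jumps directly to $\top$, and Lemma~\ref{lemma:trace-top-bottom} shows nothing productive can follow in either case. This yields a \emph{confinement} property: every trace from $q$ ends in $\set{q, \top, \bot}$, with $q' = q$ forcing $Y = 1$, and $q' = \top$ forcing $Y = U \parallel V$ where $U \in \mathcal{L}$ (witnessed by some $r \atrace{U}_A r' \in F$ with $\gamma(q,r,q)=\top$) and $V \in L_A(q)$ (witnessed by $q \atrace{V}_A s' \in F$). In other words, $L_A(q) = \set{1} \cup \mathcal{L} \parallel L_A(q)$, which is exactly the recursion defining $\mathcal{L}^\dagger$.

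For $\supseteq$ I would show $\mathcal{L}^{(n)} \subseteq L_A(q)$ by induction on $n$. The base case $\mathcal{L}^{(0)} = \set{1} \subseteq L_A(q)$ holds because $q \in F$. For the step, take $U \parallel V$ with $U \in \mathcal{L}$, say $U \in L_A(r)$ with $\gamma(q,r,q) = \top$, and $V \in \mathcal{L}^{(n)} \subseteq L_A(q)$ by the induction hypothesis; feeding the witnessing traces $r \atrace{U}_A r' \in F$ and $q \atrace{V}_A s' \in F$ into the parallel rule gives $q \atrace{U \parallel V}_A \gamma(q,r,q) = \top \in F$. Taking the union over $n$ then yields $\mathcal{L}^\dagger \subseteq L_A(q)$.

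For $\subseteq$ I would prove that $q \atrace{Y}_A q'$ with $q' \in F$ implies $Y \in \mathcal{L}^\dagger$, by well-founded induction on the size of the derivation of the trace, with a case analysis on the last rule applied. The trivial and $\delta$ cases are immediate (giving $Y = 1$, or a target $\bot \notin F$). In the parallel case the confinement analysis gives $Y = U \parallel V$ with $U \in \mathcal{L}$ and a strictly smaller premise witnessing $V \in L_A(q)$, so the induction hypothesis gives $V \in \mathcal{L}^\dagger$ and hence $Y = U \parallel V \in \mathcal{L} \parallel \mathcal{L}^\dagger \subseteq \mathcal{L}^\dagger$. The sequential case is dispatched using confinement: the intermediate state is restricted to $\set{q, \top, \bot}$, the $\bot$ option is vacuous, and the $q$ and $\top$ options each reduce to applying the induction hypothesis to one of the two strictly smaller premises.

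I anticipate that the main obstacle is the choice of well-founded measure for the $\subseteq$ direction. The natural guess — the number of events of $Y$ — fails, because recursiveness does not exclude $U = 1 \in L_A(r)$, and when the parallel factor $U$ is empty the residual $V$ equals $Y$, so the measure does not decrease. The remedy is to induct on derivation size (equivalently, on the number of parallel steps in the trace), for which the premise $q \atrace{V}_A s'$ of the parallel rule is always a strict sub-derivation, whether or not $U$ is empty. Making the sequential-composition case cooperate with this measure is the other delicate point, and it is precisely the confinement lemma — intermediate states of a trace out of $q$ lie in $\set{q, \top, \bot}$ — that lets it go through.
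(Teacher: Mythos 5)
Your proposal is correct and follows essentially the same route as the paper's proof: the $\supseteq$ direction is the identical induction on the number of parallel factors, and your $\subseteq$ direction by induction on the derivation, using the observation that every non-trivial trace out of a recursive $q$ immediately falls into $\set{\bot,\top}$, is exactly the paper's argument, which packages your ``confinement'' property as the strengthened induction claim that the target state lies in $\set{\top, q}$. Your diagnosis that inducting on the size of the pomset fails (because the parallel factor drawn from $L_A(r)$ may be empty) correctly identifies why the induction must be on the derivation rather than on the label.
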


The languages of sequential states for which our induction hypothesis holds can also be characterised.
To do this, we modify the procedure for finding a rational expression for a state in a finite automaton~\cite{mcnaughton-yamada-1960}.

\begin{definition}
Let $S \subseteq Q_{\mathsf{seq}}$, and suppose that for all $s \in S$, the induction hypothesis for $s$ holds.
For $q \in S$ and $q' \in Q$, we define $e^S_{qq'} \in \terms$, as follows.
If $q' = \bot$, we set $e_{qq'}^S = 0$.
For the remaining cases, we define $e_{qq'}^S$ inductively.
If $S = \emptyset$, then
\begin{align*}
e^S_{qq'}
    &= [q = q'] + \sum_{\delta(q, a) = q'} a + \sum_{\gamma(q, r, s) = q'} e_r \parallel e_s
\intertext{Otherwise, let $q''$ be some element of $S$, and let $S' = S \setminus \set{q''}$; then}
e^S_{qq'}
    &= e^{S'}_{qq'} + e^{S'}_{qq''} \cdot {\left(e^{S'}_{q''q''}\right)}^* \cdot e^{S'}_{q''q'}
\end{align*}
\end{definition}

Note that $e_{qq'}^\emptyset$ is well-defined, for if $\gamma(q, r, s) = q' \neq \bot$, then $r, s \prec_A q$ by the fact that $q$ is sequential, and thus $e_r$ and $e_s$ exist.
Also, the second sum is finite by the fact that $A$ is finitely supported.

\begin{restatable}{lemma}{auttoexprbottomup}%
\label{lemma:aut-to-expr-bottom-up}
Let $S \subseteq Q_{\mathsf{seq}}$, and suppose that for all $s \in S$, the induction hypothesis holds.
Let $q \in S$ and $q' \in Q$.
Then $U \in \sem{e^S_{qq'}}$ if and only if $q' \neq \bot$ and there exist $q_0, \dots, q_{\ell-1} \in S$, and $U = U_0 \cdots U_{\ell-1}$ with
\[
q = q_0 \atrace{U_0}_A q_1 \atrace{U_1}_A \dots \atrace{U_{\ell-2}}_A q_{\ell-1} \atrace{U_{\ell-1}}_A q_\ell = q'
\]
and, furthermore, for $0 \leq i < \ell$ we have that $q_i \atrace{U_i}_A q_{i+1}$ is a unit trace.
\end{restatable}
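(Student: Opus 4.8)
The plan is to induct on the cardinality of $S$, proving the claim uniformly in the endpoints $q$ and $q'$ and nesting this inside the ambient induction on $\prec_A$ that supplies, for each $r \prec_A q$, an expression $e_r$ with $\sem{e_r} = L_A(r)$; since $A$ is finitely supported, $S$ may be taken finite, so the inner induction is well-founded. I first dispose of $q' = \bot$, where $e^S_{qq'} = 0$ makes both sides vacuously equivalent, as the right-hand side demands $q' \neq \bot$. For $q' \neq \bot$ I prove the slightly stronger statement in which the source is allowed to be any sequential state rather than only an element of $S$ --- this strengthening is forced by the recurrence, whose subterm $e^{S'}_{q''q''}$ has source $q'' \in S \subseteq Q_\seq$ yet $q'' \notin S'$ --- while requiring only the intermediate states $q_1, \dots, q_{\ell-1}$ to lie in $S$; for $q \in S$ this recovers the lemma as stated.

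For the base case $S = \emptyset$ I unfold $e^\emptyset_{qq'} = [q = q'] + \sum_{\delta(q,a) = q'} a + \sum_{\gamma(q,r,s) = q'} e_r \parallel e_s$ and match its three summands to the three shapes a run with no intermediate states can take: $[q = q']$ to the trivial trace of length $0$, each $a$ to a $\delta$-trace $q \atrace{a}_A \delta(q,a) = q'$, and each $e_r \parallel e_s$ to a $\gamma$-trace. Only the last case is non-routine. Because $q$ is sequential and $\gamma(q,r,s) = q' \neq \bot$, we have $r, s \prec_A q$, so $e_r$ and $e_s$ exist and the ambient hypothesis yields $\sem{e_r \parallel e_s} = L_A(r) \parallel L_A(s)$. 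A pomset belongs to this set precisely when it factors as $V \parallel W$ with $V \in L_A(r)$ and $W \in L_A(s)$, which by the $\gamma$-trace rule is precisely when it labels the $\gamma$-trace $q \atrace{V \parallel W}_A \gamma(q,r,s) = q'$; the two descriptions therefore coincide.

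For the inductive step I choose $q'' \in S$, set $S' = S \setminus \set{q''}$, and read off $\sem{e^S_{qq'}} = \sem{e^{S'}_{qq'}} \cup \sem{e^{S'}_{qq''}} \cdot \sem{e^{S'}_{q''q''}}^* \cdot \sem{e^{S'}_{q''q'}}$. For soundness I split $U$ by this union: the first disjunct is immediate from the inner hypothesis for $S'$, giving a run whose intermediates avoid $q''$; for the second I factor $U = U_{\mathrm{pre}} \cdot U_{\mathrm{loop}} \cdot U_{\mathrm{post}}$ along the concatenation and the star, apply the hypothesis to each factor to obtain unit-trace runs $q \to q''$, finitely many $q'' \to q''$, and $q'' \to q'$ with intermediates in $S'$, and glue them with the sequential-composition rule into a single run $q \to q'$ whose only additional intermediate is $q''$, hence with all intermediates in $S$. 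For completeness I take a unit-trace run $q \to q'$ with intermediates in $S$ and cut it at the successive occurrences of $q''$ among $q_1, \dots, q_{\ell-1}$: if $q''$ never occurs, the run already witnesses $U \in \sem{e^{S'}_{qq'}}$; otherwise the resulting pre-, middle- and post-segments have all their intermediates in $S'$, so their pomsets lie in $\sem{e^{S'}_{qq''}}$, $\sem{e^{S'}_{q''q''}}$ and $\sem{e^{S'}_{q''q'}}$, and their sequential composition lands in the second disjunct.

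I expect the only real friction to be the bookkeeping of gluing and cutting runs while simultaneously tracking three invariants: that concatenation of unit-trace runs corresponds to sequential composition of the associated pomsets, that the junction states created or removed are exactly the copies of $q''$, and that every surviving intermediate stays in the relevant state set. The sole genuinely pomset-specific ingredient, in contrast to the classical McNaughton--Yamada argument, is the base-case handling of $\gamma$-traces, where the parallel factorisation of a pomset is matched against $\sem{e_r \parallel e_s}$ through the ambient induction hypothesis; everything else is the standard state-elimination recurrence transported to this setting.
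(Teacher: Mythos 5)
Your proposal is correct and follows essentially the same route as the paper: an induction on (the size of) $S$, with the base case matching the three summands of $e^\emptyset_{qq'}$ to trivial, $\delta$- and $\gamma$-traces (using the ambient induction on $\prec_A$ for the parallel summand), and the McNaughton--Yamada state-elimination recurrence with run-cutting/gluing at occurrences of $q''$ for the inductive step. Your explicit strengthening allowing the source to lie outside $S$ is a sensible refinement --- the paper's proof silently applies its induction hypothesis to segments whose source $q''$ is not in $S'$ --- but it does not change the substance of the argument.
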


With all this in hand, we are finally ready to construct series-parallel rational expressions from pomset automata.

\begin{lemma}
If the induction hypothesis for $q$ holds, then we can construct an $e_q \in \terms$ such that $\sem{e_q} = L_A(q)$.
\end{lemma}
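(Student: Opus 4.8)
The plan is to split on whether $q$ is sequential or recursive; this is exhaustive because $A$ is well-nested. The recursive case is quick. If $q \in Q_\rec$, I set $e_q = \bigl(\sum_{\gamma(q,r,q)=\top} e_r\bigr)^\dagger$. The sum is finite since every such $r$ satisfies $r \preceq_A q$ and hence lies in the finite set $\pi_A(q)$, and each $r$ in fact satisfies $r \prec_A q$ by clause~(iii) of recursiveness, so $e_r$ is provided by the induction hypothesis. Applying $\sem{-}$ and Lemma~\ref{lemma:recursive-state-language} then gives $\sem{e_q} = \bigl(\bigcup_{\gamma(q,r,q)=\top} L_A(r)\bigr)^\dagger = L_A(q)$.

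The real work is the sequential case, $q \in Q_\seq$. I take $S = Q_\seq \cap \pi_A(q)$ and define
\[
e_q = \sum_{q' \in F \cap \pi_A(q)} e^S_{qq'} \;+\; \sum_{q' \in Q_\rec \cap \pi_A(q)} e^S_{qq'} \cdot e_{q'},
\]
both sums being finite because $\pi_A(q)$ is. Two well-definedness facts must be checked first. (1)~The construction $e^S_{qq'}$ and Lemma~\ref{lemma:aut-to-expr-bottom-up} require the induction hypothesis at every $s \in S$; since $s \preceq_A q$, a short order-chase shows that $r \prec_A s$ forces $r \prec_A q$, so the needed $e_r$ are already available from the induction hypothesis for $q$. (2)~Each recursive $q'$ in the second sum satisfies $q' \prec_A q$: reachability gives $q' \preceq_A q$, and $q \preceq_A q'$ is impossible because clause~(iii) forces every state $\preceq_A$-below a recursive $q'$, other than $q'$ itself, to be strictly below it (the $\gamma$-targets are only $\top,\bot$ and the forked states are $\prec_A q'$); as $q \neq q'$ this would yield $q \prec_A q'$, contradicting $q' \preceq_A q$. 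Hence $e_{q'}$ is supplied by the induction hypothesis.

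It remains to prove $\sem{e_q} = L_A(q)$, which I do by the two inclusions through Lemma~\ref{lemma:aut-to-expr-bottom-up}. Soundness is routine: a pomset in $\sem{e^S_{qq'}}$ with $q' \in F$ gives a trace $q \atrace{U}_A q' \in F$, and a pomset in $\sem{e^S_{qq'} \cdot e_{q'}}$ with $q'$ recursive factors as $q \atrace{U'}_A q' \atrace{U''}_A q'' \in F$; either way it lies in $L_A(q)$. For completeness I take $U \in L_A(q)$, fix an accepting trace, decompose it into unit traces via Lemma~\ref{lemma:trace-length} as $q = q_0 \atrace{U_0}_A \cdots \atrace{U_{\ell-1}}_A q_\ell = q' \in F$, and consider the least index $j$ with $q_j$ recursive. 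If there is none, all the source states are sequential and in $S$, so Lemma~\ref{lemma:aut-to-expr-bottom-up} places $U$ in $\sem{e^S_{qq'}}$; otherwise $q_0,\dots,q_{j-1}$ are sequential, so the prefix puts $U_0\cdots U_{j-1}$ in $\sem{e^S_{qq_j}}$, while the suffix is an accepting trace from $q_j$ and hence $U_j\cdots U_{\ell-1} \in L_A(q_j) = \sem{e_{q_j}}$, giving $U \in \sem{e^S_{qq_j}\cdot e_{q_j}}$. I expect the main obstacle to be exactly this completeness direction together with the supporting facts in~(1) and~(2): one must see that an accepting trace out of a sequential state is driven by sequential source states up to the first recursive state it meets, and that this recursive state sits strictly below $q$ in $\prec_A$ so that the induction hypothesis is genuinely applicable.
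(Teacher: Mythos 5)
Your proof is correct, and it reaches the result by a slightly but genuinely different decomposition than the paper's. The paper proves a more general claim for all states in the $\preceq_A$-equivalence class of $q$, takes $S$ to be only the \emph{sequential states of that class}, and therefore needs a three-part sum for a sequential state: traces ending at accepting states inside the class, traces exiting into a recursive state of the class, and traces exiting into some $r \prec_A q'$ (sequential or recursive), the last two resolved by the induction hypothesis. You instead take $S = Q_\seq \cap \pi_A(q)$, i.e.\ \emph{all} sequential states in the support, so the only way a run can leave $S$ is by entering a recursive state; this collapses the sum to two parts. The price is your claim~(2): that a recursive state admits no non-trivial $\preceq_A$-cycle through it, so every recursive state in $\pi_A(q)$ is strictly below the sequential $q$ and its expression comes from the induction hypothesis. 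That claim is true --- by clauses~(ii) and~(iii) of Definition~\ref{definition:well-nested} the only generating pairs placing a state below a recursive $q'$ involve $\bot$, $\top$, $q'$ itself, or some $r \prec_A q'$, and none of these can sit above $q'$ --- though your justification is terse and should spell out why chains through $\top$ and $\bot$ cannot close the cycle. Your approach buys a cleaner statement (no auxiliary ``more general claim'' over the equivalence class and no third summand), at the cost of running the state-elimination of $e^S_{qq'}$ over a larger set $S$ than strictly necessary and of proving the structural fact about recursive states, which the paper's organisation lets it avoid; both arguments otherwise rest on the same pillars, namely Lemmas~\ref{lemma:recursive-state-language}, \ref{lemma:aut-to-expr-bottom-up} and~\ref{lemma:trace-length} and the order-chase showing that $r \prec_A s \preceq_A q$ implies $r \prec_A q$.
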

\begin{proof}
More generally, we show that for $q' \in Q$ with $q \preceq_A q' \preceq_A q$, we can find $e_{q'} \in \terms$ such that $L_A(q') = e_{q'}$.
We partition these states as follows
\begin{align*}
R &= \set{q' \in Q_\rec : q \preceq_A q' \preceq_A q} \\
S &= \set{q' \in Q_\seq: q \preceq_A q' \preceq_A q}
\end{align*}
Note that the induction hypothesis holds for all states in $R \cup S$: if $r \prec_A q' \preceq_A q$, then $r \prec_A q$; hence, for $r$ we can find an expression $e_r$, such that $\sem{e_r} = L_A(r)$.
Furthermore, $R$ and $S$ are finite, for $A$ is finitely supported.

We carry on to find expressions for the languages of states in $R$.
To this end, we define for $q' \in R$ that
\[e_{q'} = {\left(\sum\nolimits_{\gamma(q', r, q') = \top} e_r \right)}^\dagger\]
The above is well-defined, for if $\gamma(q', r, q') = \top$, then $r \prec_A q'$, and thus $e_r \in \terms$ exists.
Since $A$ is finitely supported, the sum is finite.
By Lemma~\ref{lemma:recursive-state-language}, we find
\[
\sem{e_{q'}}
    = {\left( \bigcup\nolimits_{\gamma(q', r, q') = \top} \sem{e_r} \right)}^\dagger
    = {\left( \bigcup\nolimits_{\gamma(q', r, q') = \top} L_A(r) \right)}^\dagger
    = L_A(q')
\]

We now consider the states in $S$.
For $q' \in S$, we define
\[
e_{q'} =
    \sum_{r \in S \cap F} e^{S}_{q'r}
    + \sum_{r \in R} e^{S}_{q'r} \cdot e_{r}
    + \sum_{r \prec_A q'} e^{S}_{q'r} \cdot e_{r}
\]
This expression is again well-defined, for all sums are finite, and $e_{r}$ exists when $r \in R$ or $r \prec_A q'$ by the above, and furthermore the induction hypothesis holds for all $r \in S$ by the observation above.

It remains to show that, for $q' \in S$, it holds that $\sem{e_{q'}} = L_A(q')$.
For the inclusion from left to right, suppose that $U \in \sem{e_{q'}}$.
There are two cases.
\begin{itemize}
    \item
    If $U \in \sem{e^{S}_{q'r}}$ for $r \in S \cap F$, then by Lemma~\ref{lemma:aut-to-expr-bottom-up} we find that $q' \atrace{U}_A r$.
    Since $r \in F$, also $U \in L_A(q')$.

    \item
    If $U \in \sem{e^{S}_{q'r} \cdot e_{r}}$ for some $r \in R$ or $r \in Q$ with $r \prec_A q'$, then $U = V \cdot W$ such that $V \in \sem{e^{S}_{q'r}}$ and $W \in \sem{e_{r}}$.
    By Lemma~\ref{lemma:aut-to-expr-bottom-up}, we find that $q' \atrace{V}_A r$; also, we find that $r \atrace{W}_A r'$ for some $r' \in F$.
    Together, this implies that $q' \atrace{V \cdot W}_A r$, and since $r \in F$ also $U = V \cdot W \in L_A(q')$.
\end{itemize}

\noindent
For the other inclusion, suppose that $U \in L_A(q')$, i.e., $q' \atrace{U} r$ for some $r \in F$.
By Lemma~\ref{lemma:trace-length}, there exist $q_0, \dots, q_n \in Q$ with $q' = q_0$ and $r = q_n$, and $U = U_0 \cdots U_{n-1}$, such that for $1 \leq i < n$ it holds that $q_i \atrace{U_i}_A q_{i+1}$.
Furthermore, each of these traces is a unit trace.
If $q_1, \dots, q_n \in S$, then $U \in \sem{e^{S}_{q'r}} \subseteq \sem{e_{q'}}$ by Lemma~\ref{lemma:aut-to-expr-bottom-up}.

Otherwise, i.e., if $q_i \not\in S$ for some $0 < i \leq n$, let $m$ be the smallest such $i$, and note that $U_m \cdots U_{n-1} \in L_A(q_m)$.
Furthermore, for $0 \leq i < m$ we have that $q_i \in S$, and thus $U_0 \cdots U_{m-1} \in \sem{e^{S}_{q'q_m}}$, by Lemma~\ref{lemma:aut-to-expr-bottom-up}.
There are two cases to consider.
\begin{itemize}
    \item
    If $q_m \in R$, then $L_A(q_m) = \sem{e_{q_m}}$ by the above.
    We conclude that
    \[
    U = U_0 \cdots U_{m-1} \cdot U_m \cdots U_{n-1}
      \in \sem{e^{S}_{q'q_m} \cdot e_{q_m}}
      \subseteq \sem{e_{q'}}
    \]

    \item
    Otherwise, if $q_m \not\in R$, then since also $q_m \not\in S$, we know that $q \not\preceq_A q_m$ or $q_m \not\preceq_A q$.
    The latter case can be excluded, for $q_m \preceq_A q' \preceq_A q$.
    We thus know that $q \not\preceq_A q_m$, and since $q \preceq_A q'$, also $q' \not\preceq_A q_m$.
    Together with $q_m \preceq_A q'$, it follows that $q_m \prec_A q'$; an argument similar to the previous case completes the proof.
    \qedhere
\end{itemize}
\end{proof}

The above establishes the main result of this section.
\begin{theorem}
Let $A$ be a well-nested and finitely supported pomset automaton.
For all states $q$ of $A$, we can find $e_q \in \terms$ such that $\sem{e_q} = L_A(q)$.
\end{theorem}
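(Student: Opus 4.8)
The plan is to prove the theorem by well-founded induction on the strict trace dependency relation $\prec_A$. Since $A$ is finitely supported, the earlier lemma asserting that finite support implies well-foundedness of $\prec_A$ guarantees exactly the hypothesis that licenses such an induction. I would take as the induction hypothesis for a state $q$ the statement already introduced above, namely that for every $r \in Q$ with $r \prec_A q$ there exists an expression $e_r \in \terms$ satisfying $\sem{e_r} = L_A(r)$. This is well-formulated because $\prec_A$ is a strict order, so appealing to it for all $\prec_A$-predecessors of $q$ is legitimate.

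With this setup, the inductive step is precisely the content of the lemma immediately preceding the theorem: under the induction hypothesis for $q$, it constructs an $e_q \in \terms$ with $\sem{e_q} = L_A(q)$. Well-founded induction then delivers such an $e_q$ for \emph{every} state $q \in Q$, which is exactly the assertion of the theorem. In this sense the theorem is a direct corollary: once the inductive step is available, no further work beyond invoking well-founded induction is required.

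Consequently, the main obstacle is not the theorem itself but the preceding lemma that supplies the inductive step, and it is worth recalling where its difficulty lies. The states $q'$ lying in the same $\preceq_A$-equivalence class as $q$ (those with $q \preceq_A q' \preceq_A q$) may depend on one another cyclically, so they cannot be resolved one at a time; the induction on $\prec_A$ only supplies expressions for states strictly below this class. My approach there would be to partition the (finite) class into its recursive part $R \subseteq Q_\rec$ and its sequential part $S \subseteq Q_\seq$. For the states in $R$ one reads off the $\dagger$-shaped expression directly via Lemma~\ref{lemma:recursive-state-language}. For the states in $S$ one runs a McNaughton--Yamada-style state-elimination \emph{internal} to $S$ to obtain the expressions $e^S_{q'r}$, and then splices in the already-available expressions $e_r$ at the transitions that leave $S$ (either into $R$ or into a strictly smaller $\prec_A$-stratum). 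Establishing both inclusions of $\sem{e_{q'}} = L_A(q')$ is the crux: the left-to-right direction follows by extracting traces from Lemma~\ref{lemma:aut-to-expr-bottom-up}, whereas the right-to-left direction uses Lemma~\ref{lemma:trace-length} to factor an accepting trace into unit traces, locate the first position at which the trace exits $S$, and route the remaining suffix through the expression attached at that exit point.
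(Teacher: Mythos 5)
Your proposal is correct and matches the paper's own treatment exactly: the theorem is obtained by well-founded induction on $\prec_A$ (licensed by the lemma that finite support implies well-foundedness), with the inductive step supplied by the immediately preceding lemma, whose proof proceeds by the same $R$/$S$ partition and McNaughton--Yamada-style elimination you describe. No gaps.
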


\section{Expressions to automata}%
\label{section:expressions-to-automata}

We now turn our attention to the task of constructing a pomset automaton $A$ that accepts the semantics of a given expression $e$.
Since our algorithm for obtaining expressions from a pomset automaton is sound for finitely supported and well-nested PAs only, $A$ should also satisfy these constraints.
Our approach follows Brzozowski's method for constructing a deterministic finite automaton that accepts the semantics of a rational expression~\cite{brzozowski-1964}.
More precisely, we construct a finitely supported and well-nested automaton $A_\Sigma$, such that for every spr-expression $e$ there exists a state $q_e$ such that $L_A(q_e) = \sem{e}$.
Intuitively, the transition structure of $A_\Sigma$ is set up such that the automaton can transition from the state representing $e$ to the state representing $e'$ while reading $a$ if and only if $e'$ is what ``remains'' of $e$ after consuming $a$ --- traditionally, this $e'$ is called the \emph{$a$-derivative} of $e$.

The encoding of spr-expressions into states requires some care.
Specifically, if we choose to have a state for every spr-expression, it turns out that the resulting automaton is not finitely supported.
This is not surprising; indeed, Brzozowski dealt with the same problem~\cite{brzozowski-1964}.
The solution is to encode spr-expressions into states by representing them as the equivalence classes of a congruence that is sound with respect to their semantics.

\begin{definition}%
\label{definition:congruence}
We define $\simeq$ as the smallest congruence on $\terms$ such that:
\begin{mathpar}
e + 0 \simeq e
\and%
e + e \simeq e
\and%
e + f \simeq f + e
\\
e + (f + g) \simeq (e + f) + g
\and%
(e + f) \cdot g \simeq e \cdot g + f \cdot g
\end{mathpar}
\end{definition}
Thus, when $e \simeq f$, we know that $e$ is equal to $f$, modulo associativity, commutativity and idempotence of $+$, and left-distributivity of $+$ over $\cdot$.
This congruence is decidable in polynomial time.

The set of equivalence classes of $\terms$ modulo $\simeq$ is written $\terms_\simeq$.
To lighten notation, we represent the equivalence class of $e \in \terms$ up to $\simeq$ by simply writing $e$; it will always be clear from the context whether we intend $e$ as an element of $\terms$ or $\terms_\simeq$.
We elide lemmas showing that our definitions are sound w.r.t.\ $\simeq$; arguments of this nature appear in~\ref{appendix:soundness-modulo-congruence}.

In analogy to Brzozowski's construction, where the accepting states are the rational expressions accepting the empty word, we use spr-expressions accepting the empty pomset as accepting states of our PA\@.
\begin{definition}%
\label{definition:accepting-terms}
We define the set $\sacc$ as the smallest subset of $\terms$ satisfying
\begin{mathpar}
\inferrule{~}{%
    1 \in \sacc
}
\and
\inferrule{%
    e \in \sacc \\
    f \in \terms
}{%
    e + f, f + e \in \sacc
}
\and
\inferrule{%
    e, f \in \sacc
}{%
    e \cdot f \in \sacc
}
\and
\inferrule{%
    e, f \in \sacc
}{%
    e \parallel f \in \sacc
}
\and
\inferrule{%
    e \in \terms
}{%
    e^*, e^\dagger \in \sacc
}
\end{mathpar}
\end{definition}

\begin{lemma}%
\label{lemma:nullability}
Let $e \in \terms$; then $e \in \sacc$ if and only if $1 \in \sem{e}$.
\end{lemma}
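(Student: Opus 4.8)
The plan is to prove Lemma~\ref{lemma:nullability} by structural induction on $e$, matching the inductive shape of both $\sacc$ (Definition~\ref{definition:accepting-terms}) and the semantics $\sem{-}$ (Definition~\ref{definition:series-parallel-rational-expression-semantics}). The statement is a biconditional, and the cleanest route is to observe that $\sacc$ is defined by closure rules that mirror exactly the conditions under which $1 \in \sem{e}$, so I would prove both directions simultaneously within a single induction rather than splitting into two separate arguments. The induction is on the structure of the spr-expression (equivalently, on its depth as a term), and the key fact I will lean on repeatedly is that $1 = 1 \cdot 1 = 1 \parallel 1$, since $1$ is the unit for both sequential and parallel composition of pomsets.

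First I would handle the base cases. For $e = 0$, we have $\sem{0} = \emptyset$, so $1 \notin \sem{0}$, and correspondingly $0 \notin \sacc$ (no rule of Definition~\ref{definition:accepting-terms} produces $0$). For $e = 1$, we have $1 \in \sem{1} = \set{1}$ and $1 \in \sacc$ by the first rule. For $e = a \in \Sigma$, we have $\sem{a} = \set{a}$, and since $a \neq 1$ (the primitive pomset has one point, the empty pomset none), $1 \notin \sem{a}$; likewise $a \notin \sacc$. Then I would treat the inductive cases, invoking the induction hypothesis on the immediate subexpressions:
\begin{itemize}
    \item For $e + f$: by Definition~\ref{definition:series-parallel-rational-expression-semantics}, $\sem{e+f} = \sem{e} \cup \sem{f}$, so $1 \in \sem{e+f}$ iff $1 \in \sem{e}$ or $1 \in \sem{f}$, which by the hypothesis is iff $e \in \sacc$ or $f \in \sacc$, matching the second rule of $\sacc$.
    \item For $e \cdot f$: since $\sem{e \cdot f} = \sem{e} \cdot \sem{f} = \set{U \cdot V : U \in \sem{e}, V \in \sem{f}}$, and $U \cdot V = 1$ forces both $U = V = 1$ (sequential composition of non-empty pomsets is non-empty), we get $1 \in \sem{e \cdot f}$ iff $1 \in \sem{e}$ and $1 \in \sem{f}$, matching the rule requiring $e, f \in \sacc$.
    \item For $e \parallel f$: the argument is identical to the sequential case using parallel composition, again noting $U \parallel V = 1$ forces $U = V = 1$.
    \item For $e^*$ and $e^\dagger$: since $\sem{e}^* = \bigcup_n \sem{e}^n$ with $\sem{e}^0 = \set{1}$, we always have $1 \in \sem{e^*}$, and symmetrically $1 \in \sem{e^\dagger}$; correspondingly $e^*, e^\dagger \in \sacc$ unconditionally by the last rule.
\end{itemize}

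The main subtlety, rather than any genuine obstacle, is soundness of $\sacc$ (and of the characterisation) with respect to the congruence $\simeq$ --- recall that states are equivalence classes in $\terms_\simeq$, so I must check that $e \in \sacc$ is invariant under $\simeq$, i.e.\ that if $e \simeq f$ then $e \in \sacc \iff f \in \sacc$. This follows because $\sem{-}$ is itself sound for $\simeq$ (the defining axioms of $\simeq$ are semantic identities, so $e \simeq f$ implies $\sem{e} = \sem{f}$), and membership of $1$ in $\sem{e}$ is plainly a property of $\sem{e}$ alone; the lemma then transfers the invariance from the semantic side to $\sacc$. The paper defers such congruence-soundness checks to the appendix, so in the main text I would carry out the induction on raw terms in $\terms$ and remark that well-definedness on $\terms_\simeq$ is an instance of the soundness arguments collected in~\ref{appendix:soundness-modulo-congruence}. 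The crucial observation making every inductive step go through is the unit-factorisation fact for pomsets: the empty pomset is indecomposable as a non-trivial sequential or parallel product, which is precisely what Lemma~\ref{lemma:pomset-unique-factorisation} and the remark following Definition~\ref{definition:pomset-composition} guarantee.
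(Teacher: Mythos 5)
Your proof is correct and is exactly the routine structural induction the paper has in mind (the paper omits this proof as routine). The only trivial nitpick is that the indecomposability of the empty pomset follows directly from the definition of composition (the carrier of $U \cdot V$ is $C_U \cup C_V$) rather than from Lemma~\ref{lemma:pomset-unique-factorisation}, which concerns non-empty pomsets.
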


We write $\sacc_\simeq$ to denote the set of congruence classes in $\sacc$ w.r.t.\ $\simeq$.
Having identified the accepting states, we move on to the transition functions.

\begin{definition}%
\label{definition:derivatives}
Let $e, f \in \terms_\simeq$.
We use $e \star f$ to denote $f$ when $e \in \sacc_\simeq$, and $0$ otherwise; similarly, we write $e \fatsemi f$ for $0$ when $e \simeq 0$, and $e \cdot f$ otherwise.

We define the function $\ssderiv: \terms_\simeq \times \Sigma \to \terms_\simeq$ as follows:
\begin{align*}
\ssderiv(0, a) &= 0 &
    \ssderiv(e \cdot f, a) &= \ssderiv(e, a) \fatsemi f + e \star \ssderiv(f, a) \\
\ssderiv(1, a) &= 0 &
    \ssderiv(e \parallel f, a) &= 0 \\
\ssderiv(b, a) &= [a = b] &
    \ssderiv(e + f, a) &= \ssderiv(e, a) + \ssderiv(f, a) \\
\ssderiv(e^*, a) &= \ssderiv(e, a) \fatsemi e^* &
    \ssderiv(e^\dagger, a) &= 0
\intertext{%
    Furthermore, the function $\psderiv: \terms_\simeq \times \terms_\simeq \times \terms_\simeq \to \terms_\simeq$ is defined as follows:
}
\psderiv(0, g, h) &= 0 &
    \psderiv(e \cdot f,g,h ) &= \psderiv(e, g, h) \fatsemi f + e \star \psderiv(f, g, h)\\
\psderiv(1, g, h) &= 0 &
    \psderiv(e \parallel f, g,h) &= [g \simeq e \wedge h \simeq f] \\
\psderiv(b, g, h) &= 0 &
    \psderiv(e + f, g, h) &= \psderiv(e, g, h) + \psderiv(f, g, h)\\
\psderiv(e^*, g, h) &= \psderiv(e, g, h) \fatsemi e^*&
    \psderiv(e^\dagger, g, h) &= [g \simeq e \wedge h \simeq e^\dagger]
\end{align*}

We write $A_\Sigma$ for the \emph{syntactic pomset automaton}, which is $\tuple{\terms_\simeq, \ssderiv, \psderiv, {\sacc}_\simeq}$.
In this PA, the states $0$ and $1$ assume the roles of $\bot$ and $\top$ respectively.

We refer to $\ssderiv$ (respectively $\psderiv$) as the \emph{sequential} (respectively \emph{parallel}) \emph{derivative} functions.
The (strict) trace dependency relation of $A_\Sigma$ is denoted $\preceq_\Sigma$ (respectively $\prec_\Sigma$).
Similarly, the trace relation of $A_\Sigma$ is denoted by $\tracerel_\Sigma$, and we write $L_\Sigma(e)$ for the language of $e \in \terms_\simeq$ in $A_\Sigma$.
\end{definition}

We now claim that, first, $A_\Sigma$ is finitely supported and well-nested, and that, second, for $e \in \terms$ it holds that $L_\Sigma(e) = \sem{e}$.
The following two sections are devoted to showing that both of these hold, respectively.

\subsection{Structural properties}%
\label{subsection:finite-support}

Let us start by arguing that the syntactic PA is finitely supported.
To this end, we should show that for every $e \in \terms$, the set $\pi_\Sigma(e)$ is finite; since this set is the smallest closed set (w.r.t. $\preceq_\Sigma$) containing $e$, it suffices to find a finite and closed set containing $e$.
To shorten the proof, however, it is useful to introduce the following, more general notion.
\begin{definition}
Let $E \subseteq \terms$ and $e \in \terms$.
If there exist $e_0, \dots, e_{n-1} \in E$ such that $e \simeq e_0 + \cdots + e_{n-1}$, we say that $E$ \emph{covers} $e$.
Furthermore, we say that $E$ is \emph{cover-closed} if, whenever $f \in E$ and $g \preceq_\Sigma f$, it holds that $E$ covers $g$.
\end{definition}

Cover-closed sets then give us a way to show finite support, as follows.
\begin{restatable}{lemma}{coverclosedversusclosed}%
\label{lemma:cover-closed-versus-closed}
Let $e \in \terms$.
If $e$ is covered by a finite and cover-closed set, then $e$ is contained in a finite and closed set (and hence $\pi_\Sigma(e)$ is finite).
\end{restatable}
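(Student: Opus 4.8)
The plan is to exhibit an explicit finite closed set containing $e$, namely the collection $D$ of all finite sums (modulo $\simeq$) of elements of the given finite cover-closed set $E$, with the empty sum read as $0$. Since $E$ covers $e$, we have $e \in D$ immediately, and since $\simeq$ makes $+$ associative, commutative and idempotent, every element of $D$ is determined by the subset of $E$ it sums over; hence $|D| \leq 2^{|E|} + 1$, so $D$ is finite. It then remains to show that $D$ is closed, i.e.\ downward-closed with respect to $\preceq_\Sigma$; the finiteness of $\pi_\Sigma(e)$ follows at once, because $\pi_\Sigma(e)$ is by definition the smallest closed set containing $e$ and is therefore contained in $D$.

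The crux is a distributivity property of the dependency relation over $+$, which I would isolate as a sub-lemma: if $g \preceq_\Sigma f_0 + \cdots + f_{m-1}$, then $g$ is covered by $\{h : h \preceq_\Sigma f_i \text{ for some } i < m\}$. I would prove this by induction on the length of the derivation witnessing $g \preceq_\Sigma f_0 + \cdots + f_{m-1}$ (recall $\preceq_\Sigma$ is the reflexive--transitive closure of the one-step dependencies). The base case is reflexivity, where $g$ is the sum itself. For the inductive step one peels off the first dependency step, from $f_0 + \cdots + f_{m-1}$ to some direct dependency $h$, and the key observation is that the derivative functions distribute over $+$: $\ssderiv(f_0 + \cdots + f_{m-1}, a) = \sum_i \ssderiv(f_i, a)$ and $\psderiv(f_0 + \cdots + f_{m-1}, r, s) = \sum_i \psderiv(f_i, r, s)$. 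Thus $h$ is either such a sum $\sum_i d_i$ with $d_i \preceq_\Sigma f_i$ (using that $\ssderiv(f_i, a) \preceq_\Sigma f_i$ and $\psderiv(f_i, r, s) \preceq_\Sigma f_i$ hold by the generating rules for $\preceq_\Sigma$), or---in the case of a dependency arising from $\psderiv(f_0 + \cdots + f_{m-1}, r, s) \neq 0$---one of the arguments $r, s$, which satisfies $r, s \preceq_\Sigma f_j$ for whichever $j$ has $\psderiv(f_j, r, s) \neq 0$. In either case $h$ is a sum of terms each below some $f_i$; applying the induction hypothesis to $g \preceq_\Sigma h$ and using transitivity of $\preceq_\Sigma$ yields the claim.

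Granting the sub-lemma, closedness of $D$ is immediate: given $f \in D$, say $f \simeq \sum_i e_i$ with $e_i \in E$, and $g \preceq_\Sigma f$, the sub-lemma writes $g$ as a sum of terms each $\preceq_\Sigma$ some $e_i \in E$; since $E$ is cover-closed, each such term is covered by $E$, so $g$ is again a sum of elements of $E$, i.e.\ $g \in D$. The degenerate case $f = 0$ is handled directly, as the only state with $h \preceq_\Sigma 0$ is $0$ itself.

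I expect the main obstacle to be the one-step case analysis inside the sub-lemma: correctly accounting for the third kind of dependency (the arguments $r, s$ of a non-trivial parallel transition) and checking that, like the $\ssderiv$- and $\psderiv$-successors, it lands below a single summand. The additive behaviour of $\ssderiv$ and $\psderiv$ makes the successor cases routine, but one must phrase the conclusion as ``covered by $\{h : h \preceq_\Sigma f_i\}$'' rather than the tempting ``$g \simeq \sum g_i$ with $g_i \preceq_\Sigma f_i$'', since the latter would wrongly demand $0 \preceq_\Sigma f_i$ in the branch where a lone argument $r$ is extracted.
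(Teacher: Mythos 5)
Your proposal is correct and takes essentially the same approach as the paper: both exhibit the set of all finite sums of elements of $E$ (empty sum included) as the finite closed set, and both derive closedness from the additivity of $\ssderiv$ and $\psderiv$ over $+$ together with the observation that a fork-argument of a sum is a fork-argument of some individual summand, at which point cover-closure of $E$ applies. The paper simply compresses your sub-lemma and its induction on derivation length into the remark that it suffices to verify closure for the pairs generating $\preceq_\Sigma$.
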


Showing finite support then comes down to finding a finite and cover-closed set for every expression, which can be done by induction on the expression.

\begin{restatable}{lemma}{syntacticpafinitelysupported}%
\label{lemma:syntactic-pa-finitely-supported}
The syntactic PA is finitely supported.
\end{restatable}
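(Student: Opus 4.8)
The plan is to use Lemma~\ref{lemma:cover-closed-versus-closed}: it suffices to exhibit, for every $e \in \terms$, a finite cover-closed set that covers $e$. In fact, I would prove the slightly stronger statement that \emph{every} $e \in \terms$ is covered by a finite cover-closed set, and proceed by structural induction on $e$ (working with representatives of $\simeq$-classes). The natural candidate for such a set is the collection of all subexpressions of $e$ together with the derivatives they generate, but one must be careful because taking sequential derivatives can introduce new subterms via the $\fatsemi$ and $\star$ operations. The key observation to nail down is that the sequential derivative $\ssderiv(e,a)$ and the parallel derivative $\psderiv(e,g,h)$ only produce expressions built from subterms already ``present'' in $e$, so that a suitably chosen finite seed set is closed under covering.

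First I would define, for each $e$, a candidate finite set $D(e) \subseteq \terms$ by induction on $e$, designed so that $D(e)$ covers $e$ and is cover-closed. The base cases $0$, $1$, $a$ are immediate: take $D(0) = \{0\}$, $D(1) = \{1,0\}$, and $D(a) = \{a, 1, 0\}$, noting $\ssderiv(a,b) \in \{1,0\}$ and all parallel derivatives give $0$. For the inductive cases I would set, roughly, $D(e+f) = D(e) \cup D(f)$; $D(e \cdot f) = \{g \cdot f : g \in D(e)\} \cup D(f)$; $D(e \parallel f) = \{e \parallel f, 0\} \cup D(e) \cup D(f)$; $D(e^*) = \{g \cdot e^* : g \in D(e)\} \cup \{e^*, 0\}$; and $D(e^\dagger) = \{e^\dagger, 0\} \cup D(e)$. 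Each is visibly finite when the components are, so finiteness follows by induction. The crux is then to verify cover-closedness: for each $f \in D(e)$ and each generator of $\preceq_\Sigma$ applied to $f$ --- namely $\ssderiv(f,a)$, or $\psderiv(f,g,h)$, or the second and third arguments $g,s$ of a nonzero $\psderiv(f,g,h)$ --- the resulting expression is covered by $D(e)$.

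The main obstacle, and the heart of the argument, is checking cover-closedness for the product and star cases, where the derivative rules for $\cdot$ and ${}^*$ split into two summands via the Leibniz-style rule $\ssderiv(e \cdot f, a) = \ssderiv(e,a) \fatsemi f + e \star \ssderiv(f,a)$. Here I would use the inductive hypotheses that $D(e)$ and $D(f)$ are themselves cover-closed: the first summand $\ssderiv(e,a) \fatsemi f$ is (up to $\simeq$ and the $\fatsemi$-vs-$\cdot$ discrepancy) a sum of terms $g \cdot f$ with $g$ covering $\ssderiv(e,a)$, hence covered by $\{g \cdot f : g \in D(e)\} \subseteq D(e \cdot f)$, while the second summand $e \star \ssderiv(f,a)$ is either $0$ or covered by $D(f) \subseteq D(e \cdot f)$; distributivity of $+$ over $\cdot$ in $\simeq$ is exactly what lets these sums-of-products match the shape of $D(e \cdot f)$. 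The star case is analogous, with $\ssderiv(e^*,a) = \ssderiv(e,a) \fatsemi e^*$ handled by the component $\{g \cdot e^* : g \in D(e)\}$ and closure folding back through $D(e)$. The parallel-derivative generators require noting that a nonzero $\psderiv(f,g,h)$ forces $g$ and $h$ to be subterms of a $\parallel$-factor of $f$, which are captured by the $D(\cdot)$ of the parallel components; these lie strictly below $f$ in $\prec_\Sigma$ and are covered by construction. Throughout, the technical friction is purely bookkeeping modulo $\simeq$ --- keeping track of the mismatch between $\fatsemi$ and literal $\cdot$, and between $\star$ and a literal prefix --- which is exactly the sort of routine soundness-modulo-congruence argument deferred to \ref{appendix:soundness-modulo-congruence}.
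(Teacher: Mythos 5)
Your overall strategy is exactly the paper's: reduce to Lemma~\ref{lemma:cover-closed-versus-closed}, build a finite candidate set by structural induction, and verify cover-closedness case by case using the Leibniz-style derivative rules and distributivity. However, your candidate sets for sequential composition and Kleene star are too small, and the omission is precisely at the one generator of $\preceq_\Sigma$ that your sketch glosses over. Recall that $\preceq_\Sigma$ is generated not only by $\ssderiv(q,a) \preceq_\Sigma q$ and $\psderiv(q,g,h) \preceq_\Sigma q$, but also by $g, h \preceq_\Sigma q$ whenever $\psderiv(q,g,h) \not\simeq 0$. For an element $g' \cdot f \in D(e \cdot f)$ with $g' \in D(e)$, a nonzero $\psderiv(g' \cdot f, g, h)$ may arise from $\psderiv(g', g, h) \not\simeq 0$; the induction hypothesis then only guarantees that $g$ and $h$ are covered by $D(e)$, and your $D(e \cdot f) = \set{g'' \cdot f : g'' \in D(e)} \cup D(f)$ does not contain $D(e)$. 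Concretely, take $e = a \parallel b$ and $f = c$: then $(a \parallel b) \cdot c \in D(e \cdot f)$ and $\psderiv((a \parallel b)\cdot c,\, a,\, b) \simeq 1 \cdot c \not\simeq 0$, so $a \preceq_\Sigma (a\parallel b)\cdot c$, but $a$ is not $\simeq$-congruent to any sum of elements of $\set{(a\parallel b)\cdot c,\, a\cdot c,\, b \cdot c, \dots} \cup D(c)$, since $\simeq$ only permits rearranging sums and left-distributing over $\cdot$. The same problem occurs for $D(e^*)$, where $\psderiv(e^*, g, h) \not\simeq 0$ forces $g,h \preceq_\Sigma e^*$ with $g,h$ covered only by $D(e)$.

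The fix is exactly what the paper does: include the component sets themselves, i.e.\ take $E(e \cdot f) = E(e) \cup E(f) \cup \set{g \cdot f : g \in E(e)}$ and $E(e^*) = E(e) \cup \set{e^*} \cup \set{g \cdot e^* : g \in E(e)}$. With that correction the rest of your verification goes through as you describe (the $\fatsemi$/$\star$ bookkeeping and the use of left-distributivity are handled just as in the paper). One further small inaccuracy: your claim that a nonzero $\psderiv(f,g,h)$ forces $g$ and $h$ to be subterms of a $\parallel$-factor of $f$ fails for $f^\dagger$, where $h \simeq f^\dagger$ itself; your $D(e^\dagger)$ happens to contain $e^\dagger$, so this does not break the argument, but the justification should be stated per case rather than uniformly.
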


As part of the argument showing that the syntactic PA is well-nested, we need to show that $e \prec_\Sigma f$ for some spr-expressions.
To this end, it must be shown that $f \not\preceq_\Sigma e$; since it is hard to prove this directly from the inductive definition of $\preceq_\Sigma$, we introduce the following to argue $f \not\preceq_\Sigma e$ indirectly.

\begin{definition}
We define $\pdepth: \terms \to \naturals$ inductively, as follows:
\begin{align*}
\pdepth(0) &= 0 &
    \pdepth(e_0 \cdot e_1) &= \max(\pdepth(e_0), \pdepth(e_1)) \\
\pdepth(1) &= 0 &
    \pdepth(e_0 \parallel e_1) &= \max(\pdepth(e_0), \pdepth(e_1)) + 1 \\
\pdepth(a) &= 0 &
    \pdepth(e_0 + e_1) &= \max(\pdepth(e_0), \pdepth(e_1)) \\
\pdepth(e_0^*) &= \pdepth(e_0) &
    \pdepth(e_0^\dagger) &= \pdepth(e_0)
\intertext{%
    We also define $\ddepth: \terms \to \naturals$ inductively, as follows:
}
\ddepth(0) &= 0 &
    \ddepth(e_0 \cdot e_1) &= \max(\ddepth(e_0), \ddepth(e_1)) \\
\ddepth(1) &= 0 &
    \ddepth(e_0 \parallel e_1) &= \max(\ddepth(e_0), \ddepth(e_1)) \\
\ddepth(a) &= 0 &
    \ddepth(e_0 + e_1) &= \max(\ddepth(e_0), \ddepth(e_1)) \\
\ddepth(e_0^*) &= \ddepth(e_0) &
    \ddepth(e_0^\dagger) &= \ddepth(e_0) + 1
\end{align*}
\end{definition}

A straightforward series of inductive proofs on the structure of spr-expressions then gives us the following:

\begin{restatable}{lemma}{preceqvsdepth}%
\label{lemma:preceq-vs-depth}
If $e \preceq_\Sigma f$, then $d_\parallel(e) \leq d_\parallel(f)$ and $d_\dagger(e) \leq d_\dagger(f)$.
\end{restatable}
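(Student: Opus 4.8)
We need to prove that if $e \preceq_\Sigma f$, then $\pdepth(e) \leq \pdepth(f)$ and $\ddepth(e) \leq \ddepth(f)$. The relation $\preceq_\Sigma$ is the trace dependency preorder of $A_\Sigma$, which is the smallest preorder satisfying three generating rules. So the natural strategy is induction on the derivation of $\preceq_\Sigma$.

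**The approach.** Since $\preceq_\Sigma$ is the smallest preorder containing the generating pairs, it suffices to show that the two inequalities $\pdepth(e) \le \pdepth(f)$ and $\ddepth(e) \le \ddepth(f)$ hold for each generating pair, and then observe that this property is closed under reflexivity and transitivity (which is immediate for $\le$). So really the content reduces to checking the three generating rules. Let me think about what these generators are, specialized to $A_\Sigma = \langle \terms_\simeq, \ssderiv, \psderiv, \sacc_\simeq \rangle$. The generators give: (1) $\psderiv(q,r,s) \preceq_\Sigma q$; (2) $\ssderiv(q,a) \preceq_\Sigma q$; and (3) if $\psderiv(q,r,s) \neq 0$ then $r,s \preceq_\Sigma q$.

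So the real work is three claims about the derivative functions. For rule (2), I need: $\pdepth(\ssderiv(e,a)) \le \pdepth(e)$ and $\ddepth(\ssderiv(e,a)) \le \ddepth(e)$, for every $a$. For rule (3), I need: if $\psderiv(e,g,h) \neq 0$ then $\pdepth(g), \pdepth(h) \le \pdepth(e)$ (and similarly for $\ddepth$) — intuitively, the parallel subexpressions $g,h$ extracted from $e$ must be strictly shallower in the relevant measure, so certainly no deeper. And rule (1) asks for the target of a $\psderiv$, $\pdepth(\psderiv(e,g,h)) \le \pdepth(e)$, etc.

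**Carrying it out.** Each of these is proved by structural induction on the expression $e$, following the defining clauses of $\ssderiv$ and $\psderiv$. For the $\ssderiv$ claim, the only interesting cases are $e \cdot f$, where $\ssderiv(e \cdot f, a) = \ssderiv(e,a) \fatsemi f + e \star \ssderiv(f,a)$, and $e^*$, where $\ssderiv(e^*,a) = \ssderiv(e,a) \fatsemi e^*$. In each case I unfold the definitions of $\pdepth$ and $\ddepth$ on $+$, $\cdot$, and $^*$ (note $\pdepth$ and $\ddepth$ are both invariant under $^*$ and commute with $\max$ for $+$ and $\cdot$), apply the induction hypothesis to $\ssderiv(e,a)$ and $\ssderiv(f,a)$, and the inequality falls out; the auxiliary operations $\fatsemi$ and $\star$ only ever return $0$, the second argument, or a product/whole expression, none of which can increase $\pdepth$ or $\ddepth$ beyond what appears in $e$. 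The $\psderiv$ claims are analogous; the crucial base cases are $\psderiv(g \parallel h', g, h) = [g \simeq g' \wedge h \simeq h']$ and $\psderiv(e^\dagger, g, h) = [g \simeq e \wedge h \simeq e^\dagger]$, where one reads off directly that the extracted $g,h$ satisfy $\pdepth(g) < \pdepth(e \parallel \dots)$ and, for the $\dagger$ case, $\ddepth(h) = \ddepth(e^\dagger)$ while $g$ has strictly smaller $\ddepth$; since we only claim $\le$, these suffice comfortably.

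**Expected obstacle.** The main subtlety is soundness with respect to $\simeq$: since $\ssderiv$ and $\psderiv$ operate on $\terms_\simeq$, I must know that $\pdepth$ and $\ddepth$ are well-defined on congruence classes, i.e.\ invariant under $\simeq$. The paper explicitly defers such soundness arguments to the appendix on soundness modulo congruence, so I would invoke that, but it is worth verifying that each generating clause of $\simeq$ (idempotence, commutativity and associativity of $+$, and left-distributivity of $+$ over $\cdot$) preserves both $\pdepth$ and $\ddepth$ — this is routine since both measures send $+$ and $\cdot$ to $\max$ of the arguments, and $\max$ is idempotent, commutative, associative, and distributes appropriately. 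Beyond this bookkeeping, the proof is a mechanical but lengthy case analysis, which is exactly why the statement is phrased as following from ``a straightforward series of inductive proofs.'' The only place requiring genuine care is making sure that in the $\psderiv$ clauses for $e \cdot f$ and $e^*$ the recursive calls are applied to the correct subterm and that the $\fatsemi$/$\star$ wrappers are handled uniformly across all three claims.
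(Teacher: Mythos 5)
Your proposal is correct and follows essentially the same route as the paper: reduce the claim to the three generating pairs of $\preceq_\Sigma$ (closure under reflexivity and transitivity being immediate for $\leq$), then verify each by structural induction on the expression, with the decisive cases being exactly the $\parallel$ and $\dagger$ clauses of $\psderiv$ that you single out (where $h \simeq e^\dagger$ yields equality of the depth measures, which suffices for a non-strict inequality). The invariance of $d_\parallel$ and $d_\dagger$ under $\simeq$ that you flag is indeed discharged separately in the paper's appendix on soundness modulo the congruence.
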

Thus, if we want to show that $e \prec_\Sigma f$, it suffices to show that $e \preceq_\Sigma f$ and $d_\parallel(e) < d_\parallel(f)$ or $d_\dagger(e) < d_\dagger(f)$.
This enables us to prove that loops involving a parallel star are trivial:
\begin{restatable}{lemma}{daggerloops}%
\label{lemma:dagger-loops}
  If $e\preceq_\Sigma f^\dagger\preceq_\Sigma e$, then $e\simeq f^\dagger$.
\end{restatable}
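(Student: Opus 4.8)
The plan is to pin down the support $\pi_\Sigma(f^\dagger) = \{g \in \terms_\simeq : g \preceq_\Sigma f^\dagger\}$ explicitly, and then use the depth measure $\ddepth$ to isolate $f^\dagger$ inside it. First I would record the immediate dependencies of $f^\dagger$ by reading off the derivative functions: $\ssderiv(f^\dagger, a) = 0$ for every $a \in \Sigma$, while $\psderiv(f^\dagger, g, h) = [g \simeq f \wedge h \simeq f^\dagger]$ is nonzero (equal to $1$) only when $g \simeq f$ and $h \simeq f^\dagger$. By the generating rules of $\preceq_\Sigma$, the states related to $f^\dagger$ in one step are therefore, up to $\simeq$, exactly $0$, $1$, $f$ and $f^\dagger$ itself, the last two arising from the self-fork $\psderiv(f^\dagger, f, f^\dagger) = 1$.

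Next I would establish the inclusion $\pi_\Sigma(f^\dagger) \subseteq \{f^\dagger\} \cup \pi_\Sigma(f) \cup \pi_\Sigma(1) \cup \pi_\Sigma(0)$. Each support on the right is closed (a down-set for $\preceq_\Sigma$), and a union of closed sets is closed; moreover the right-hand side contains $f^\dagger$ together with all of its immediate dependencies by the previous step, so it is a closed set containing $f^\dagger$. Since $\pi_\Sigma(f^\dagger)$ is by definition the smallest such set, the inclusion follows. A quick computation also gives $\pi_\Sigma(0) = \{0\}$ and $\pi_\Sigma(1) = \{0, 1\}$, since $\ssderiv$ and $\psderiv$ send $0$ and $1$ only to $0$.

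The key separation comes from Lemma~\ref{lemma:preceq-vs-depth}: $g \preceq_\Sigma h$ implies $\ddepth(g) \leq \ddepth(h)$. Since $\ddepth(f^\dagger) = \ddepth(f) + 1$ strictly exceeds each of $\ddepth(f)$, $\ddepth(0) = 0$ and $\ddepth(1) = 0$, none of $\pi_\Sigma(f)$, $\pi_\Sigma(0)$, $\pi_\Sigma(1)$ can contain $f^\dagger$. Now suppose $e \preceq_\Sigma f^\dagger \preceq_\Sigma e$; the inclusion above places $e$ in one of the four pieces. If $e \in \pi_\Sigma(f)$, then $f^\dagger \preceq_\Sigma e \preceq_\Sigma f$ gives $f^\dagger \preceq_\Sigma f$, i.e.\ $f^\dagger \in \pi_\Sigma(f)$, contradicting the above; the cases $e \in \pi_\Sigma(0)$ and $e \in \pi_\Sigma(1)$ are excluded identically. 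Hence the only surviving case is $e \simeq f^\dagger$, as required.

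The single delicate point, and the main obstacle, is making the smallest-closed-set step watertight in the presence of the reflexive self-loop $f^\dagger \preceq_\Sigma f^\dagger$: one must argue that this loop does not spuriously enlarge the support beyond the four pieces listed. Once the support is characterised this way, the depth inequality from Lemma~\ref{lemma:preceq-vs-depth} closes every remaining case routinely.
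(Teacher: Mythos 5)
Your proof is correct, but it takes a genuinely different route from the paper's. The paper strengthens the statement to: if $e \preceq_\Sigma f$, then $d_\dagger(e) = d_\dagger(f)$ together with $f \simeq g^\dagger$ for some $g$ implies $e \simeq f$; it then proceeds by induction on the derivation of $e \preceq_\Sigma f$, treating the generating pairs of the preorder as base cases and handling transitivity in the inductive step with the help of Lemma~\ref{lemma:preceq-vs-depth}. You instead read off the one-step dependencies of $f^\dagger$ in the syntactic PA --- which are, up to $\simeq$, exactly $0$, $1$, $f$ and $f^\dagger$ itself --- and deduce the support inclusion $\pi_\Sigma(f^\dagger) \subseteq \set{f^\dagger} \cup \pi_\Sigma(f) \cup \pi_\Sigma(1) \cup \pi_\Sigma(0)$, after which the $\ddepth$ inequality from Lemma~\ref{lemma:preceq-vs-depth} excludes $f^\dagger$ from each of the three sub-supports and forces $e \simeq f^\dagger$. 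Both arguments ultimately rest on the same depth function and the same lemma; yours trades the paper's strengthened induction hypothesis for an explicit characterisation of the support, which is somewhat more informative (it pins down $\pi_\Sigma(f^\dagger)$ exactly) at the cost of the bookkeeping step you flag. That step is genuinely routine: since $\preceq_\Sigma$ is the smallest preorder generated by the three rules, a set that contains each of its members' one-step dependencies is downward-closed for all of $\preceq_\Sigma$ (induct on the length of a generating chain), and the self-loop on $f^\dagger$ is harmless precisely because $f^\dagger$ is already a member of your candidate closed set. With that observation spelled out, the argument is watertight.
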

With this in hand, we can show the following:

\begin{restatable}{lemma}{preceqvsforks}%
\label{lemma:preceq-vs-forks}
Let $e, g, h \in \terms$ with $\psderiv(e, g, h) \not\simeq 0$.
Then $g \prec_\Sigma e$; furthermore, either $h \prec_\Sigma e$ or there exists an $f \in \terms$ such that $e \simeq f^\dagger$.
\end{restatable}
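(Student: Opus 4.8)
The plan is to reduce both claims to the contrapositive of Lemma~\ref{lemma:preceq-vs-depth}, handling $g$ purely through the depth functions and routing the exceptional $h$-case through Lemma~\ref{lemma:dagger-loops}. First I would note that, by definition of the trace dependency relation of $A_\Sigma$, the hypothesis $\psderiv(e,g,h)\not\simeq 0$ \emph{already} yields $g\preceq_\Sigma e$ and $h\preceq_\Sigma e$. Hence to obtain $g\prec_\Sigma e$ it only remains to prove $e\not\preceq_\Sigma g$, and similarly for $h$. Since $\pdepth$ and $\ddepth$ are invariant under $\simeq$ (a routine check against the congruence rules), Lemma~\ref{lemma:preceq-vs-depth} supplies the key tool in contrapositive form: if $\pdepth(x)>\pdepth(y)$ or $\ddepth(x)>\ddepth(y)$, then $x\not\preceq_\Sigma y$.

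For $g$, I would prove by structural induction on $e$ that $\psderiv(e,g,h)\not\simeq 0$ implies $\pdepth(g)<\pdepth(e)$ or $\ddepth(g)<\ddepth(e)$. The only \emph{productive} base cases are $e=e_0\parallel e_1$, which forces $g\simeq e_0$ with $\pdepth(e)=\max(\pdepth(e_0),\pdepth(e_1))+1>\pdepth(e_0)$, and $e=e_0^\dagger$, which forces $g\simeq e_0$ with $\ddepth(e)=\ddepth(e_0)+1>\ddepth(e_0)$. The cases $e_0+e_1$, $e_0\cdot e_1$, and $e_0^*$ are non-productive combinators: a nonzero parallel derivative of $e$ forces a nonzero parallel derivative of a subexpression (using that $x\fatsemi y\simeq 0$ iff $x\simeq 0$, and that $e_0\star z\not\simeq 0$ requires $e_0\in\sacc_\simeq$), and since $\pdepth(e)$ and $\ddepth(e)$ dominate the corresponding depths of every subexpression, the strict inequality is inherited from the induction hypothesis. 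This gives $e\not\preceq_\Sigma g$, hence $g\prec_\Sigma e$.

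For $h$ I would run the same induction to show that either $\pdepth(h)<\pdepth(e)$ or $\ddepth(h)<\ddepth(e)$, \emph{or else} $h\simeq f^\dagger$ for some $f\in\terms$. The $\parallel$ case again gives $\pdepth(h)<\pdepth(e)$ through $h\simeq e_1$; the decisive difference is the $\dagger$ case, which forces $h\simeq e_0^\dagger$ and so lands directly in the second alternative with $f=e_0$. In the combinator cases the alternative ``$h\simeq f^\dagger$'' propagates verbatim, while the depth alternative is inherited exactly as for $g$. To assemble the conclusion for $h$: if the depth alternative holds then $e\not\preceq_\Sigma h$, so $h\prec_\Sigma e$. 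Otherwise $h\simeq f^\dagger$; since $h\preceq_\Sigma e$ always holds, either $e\not\preceq_\Sigma h$ (whence $h\prec_\Sigma e$), or $e\preceq_\Sigma h$, in which case $e\preceq_\Sigma f^\dagger\preceq_\Sigma e$ and Lemma~\ref{lemma:dagger-loops} yields $e\simeq f^\dagger$.

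The main obstacle is precisely the $h$-part when a $\dagger$-subexpression is nested inside a $*$ (or $\cdot$, or $+$): for instance $e=(b^\dagger)^*$ with $g\simeq b$ and $h\simeq b^\dagger$ satisfies $\psderiv(e,g,h)\not\simeq 0$, yet $\pdepth$ and $\ddepth$ agree on $e$ and $h$, so Lemma~\ref{lemma:preceq-vs-depth} cannot separate them. Establishing $h\prec_\Sigma e$ or $e\simeq f^\dagger$ here genuinely relies on the triviality of $\dagger$-loops; recognising that the ``$h\simeq f^\dagger$'' branch must be closed off through Lemma~\ref{lemma:dagger-loops} rather than through depths is the heart of the argument, whereas the $g$-claim and the depth bookkeeping for the combinator cases are routine.
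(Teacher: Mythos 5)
Your proposal is correct and follows essentially the same route as the paper's proof: the same two auxiliary depth claims established by induction on $e$ (strict decrease of $\pdepth$ or $\ddepth$ for $g$, and for $h$ either strict decrease or $h\simeq f^\dagger$), the same use of Lemma~\ref{lemma:preceq-vs-depth} in contrapositive to rule out $e\preceq_\Sigma g$, and the same appeal to Lemma~\ref{lemma:dagger-loops} to close the $h\simeq f^\dagger$ branch. The only differences are cosmetic (you split on whether $e\preceq_\Sigma h$ before invoking the auxiliary claim, whereas the paper assumes $e\preceq_\Sigma h$ and excludes the two depth branches by the equality of depths forced by Lemma~\ref{lemma:preceq-vs-depth}).
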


Hence, we argue that all states $A_\Sigma$ are sequential or recursive, as follows.

\begin{lemma}%
\label{lemma:syntactic-pa-well-nested}
The syntactic PA is well-nested.
\end{lemma}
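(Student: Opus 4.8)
The plan is to verify that every state $e$ of $A_\Sigma$ (that is, every $e \in \terms_\simeq$) falls into one of the two classes of Definition~\ref{definition:well-nested}, using Lemma~\ref{lemma:preceq-vs-forks} to drive the dichotomy. Recall that in $A_\Sigma$ the roles of $\bot$ and $\top$ are played by $0$ and $1$. I would fix a state $e$ and examine its parallel transitions, namely the pairs $(g,h)$ with $\psderiv(e, g, h) \not\simeq 0$. By Lemma~\ref{lemma:preceq-vs-forks} every such fork already satisfies $g \prec_\Sigma e$, so the only obstruction to sequentiality is a fork with $h \not\prec_\Sigma e$. Hence, if every fork $(g,h)$ of $e$ also satisfies $h \prec_\Sigma e$, then $r, s \prec_\Sigma e$ for all $(r,s)$ with $\psderiv(e, r, s) \neq \bot$, so $e \in Q_\seq$ and that state is handled.

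It then remains to treat the case where $e$ has a fork $(g,h)$ with $h \not\prec_\Sigma e$. Lemma~\ref{lemma:preceq-vs-forks} hands me an $f \in \terms$ with $e \simeq f^\dagger$, and I would show $e$ is recursive. Since the derivative functions are sound modulo $\simeq$, I may compute on the representative $f^\dagger$, where $\ssderiv(f^\dagger, a) = 0$ for all $a \in \Sigma$ and $\psderiv(f^\dagger, g, h) = [g \simeq f \wedge h \simeq f^\dagger]$. First, $e \in \sacc_\simeq = F$ because $f^\dagger \in \sacc$ by Definition~\ref{definition:accepting-terms}. For condition (i), $e$ is not sequential: the fork $\psderiv(f^\dagger, f, f^\dagger) = 1 \neq \bot$ exists, yet $f^\dagger \not\prec_\Sigma f^\dagger$ by irreflexivity of $\prec_\Sigma$. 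For (ii), $\delta(e, a) = \ssderiv(f^\dagger, a) = 0 = \bot$ for every $a \in \Sigma$. For (iii), if $\psderiv(e, r, s) \neq \bot$, then $r \simeq f$ and $s \simeq f^\dagger$, so $s = e$ as congruence classes, the fork value is $1 = \top$, and $r \prec_\Sigma e$ by Lemma~\ref{lemma:preceq-vs-forks}. Thus $e \in Q_\rec$, and since every state is sequential or recursive, $A_\Sigma$ is well-nested.

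The genuine work has already been carried out in Lemma~\ref{lemma:preceq-vs-forks} (and the depth-measure lemmas feeding it); given that result, the present argument is essentially bookkeeping against the definitions of $\ssderiv$ and $\psderiv$, so I do not expect a serious obstacle here. The one point demanding care is that every computation must be read as a statement about congruence classes: I rely on the (elided) soundness of $\ssderiv$ and $\psderiv$ with respect to $\simeq$ to justify evaluating them on the representative $f^\dagger$, and on $\simeq$ being a congruence so that $e \simeq f^\dagger$ licenses substituting $f^\dagger$ throughout. I would also confirm the two cases are exhaustive and disjoint, which they are: a state carrying a fork $(g,h)$ with $h \not\prec_\Sigma e$ cannot be sequential, and conversely any $e \simeq f^\dagger$ always bears the nontrivial fork $\psderiv(f^\dagger, f, f^\dagger) = 1$ and so is never sequential.
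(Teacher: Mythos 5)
Your proof is correct and follows essentially the same route as the paper's: both use Lemma~\ref{lemma:preceq-vs-forks} to get $g \prec_\Sigma e$ for every fork, split on whether all forks also satisfy $h \prec_\Sigma e$ (sequential case), and otherwise extract $e \simeq f^\dagger$ from that lemma and verify recursiveness by direct computation of $\ssderiv$ and $\psderiv$ on $f^\dagger$. Your additional checks (that $f^\dagger \in \sacc$ and that condition~(i) holds by irreflexivity of $\prec_\Sigma$) are points the paper leaves implicit, and they are handled correctly.
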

\begin{proof}
Let $e \in \terms$; by Lemma~\ref{lemma:preceq-vs-forks} we already know that for all $g, h \in \terms$ such that $\psderiv(e, g, h) \not\simeq 0$ it holds that $g \prec_\Sigma e$.
If furthermore for all $g, h \in \terms$ with $\psderiv(e, g, h) \not\simeq 0$ it holds that $h \prec_\Sigma e$, then $e$ is sequential.

Otherwise, it follows by Lemma~\ref{lemma:preceq-vs-forks} that $e \simeq f^\dagger$ for some $f \in \terms$.
We now claim that, in this case, $e$ is recursive.
To see this, first note that for all $a \in \Sigma$ we have $\ssderiv(e, a) \simeq \ssderiv(f^\dagger, a) = 0$.
Furthermore, if $g, h \in \terms$ such that $\psderiv(e, g, h) \simeq \psderiv(f^\dagger, g, h) \not\simeq 0$, then $\psderiv(e, g, h) \simeq 1$ and $g \simeq f$ and $h \simeq f^\dagger$ by definition of $\psderiv$; hence $g \prec_\Sigma e$ and $h \simeq e$ by Lemma~\ref{lemma:preceq-vs-depth}.
\end{proof}

\subsection{Language equivalence}

We now set out to prove that, for $e \in \terms$, we have that $L_\Sigma(e) = \sem{e}$; to this end, we first need to discuss a number of auxiliary lemmas that help us analyse and reason about the traces in $A_\Sigma$.

For the inclusion of $L_\Sigma(e)$ in $\sem{e}$, it is useful to be able to take a trace labelled with some pomset and turn it into one or more traces labelled with (parts of) that pomset.
We refer to such an action as a \emph{deconstruction} of the starting trace.
The first deconstruction lemma that we will consider concerns traces that originate in a state that represents a sum of spr-expressions.

\begin{lemma}%
\label{lemma:trace-deconstruct-plus}
Let $e_0, e_1 \in \terms$, $f \in \sacc$ and $U \in \pomsp$, such that $e_0 + e_1 \satrace{U} f$ of length $\ell$.
There exists an $f' \in \sacc$ with $e_0 \satrace{U} f'$ or $e_1 \satrace{U} f'$ of length $\ell$.
\end{lemma}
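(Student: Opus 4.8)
The plan is to proceed by induction on the length $\ell$ of the trace $e_0 + e_1 \satrace{U} f$, using Lemma~\ref{lemma:trace-length} to decompose it into $\ell$ unit traces
\[
e_0 + e_1 = q_0 \satrace{U_0} q_1 \satrace{U_1} \cdots \satrace{U_{\ell-1}} q_\ell = f,
\qquad U = U_0 \cdots U_{\ell-1}.
\]
In the base case $\ell = 0$ the trace is trivial, so $U = 1$ and $f = e_0 + e_1$. Since $f \in \sacc$, Lemma~\ref{lemma:nullability} gives $1 \in \sem{e_0 + e_1} = \sem{e_0} \cup \sem{e_1}$, hence $1 \in \sem{e_i}$ for some $i \in \set{0,1}$, i.e.\ $e_i \in \sacc$ again by Lemma~\ref{lemma:nullability}. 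Taking $f' = e_i$ together with the trivial trace $e_i \satrace{1} e_i$ of length $0$ settles this case.

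For the inductive step ($\ell > 0$), I would inspect the first unit trace $e_0 + e_1 \satrace{U_0} q_1$ and exploit that both derivative functions distribute over $+$: by definition $\ssderiv(e_0 + e_1, a) = \ssderiv(e_0, a) + \ssderiv(e_1, a)$ and $\psderiv(e_0 + e_1, g, h) = \psderiv(e_0, g, h) + \psderiv(e_1, g, h)$. Consequently, whether the first trace is a $\delta$-trace (so $U_0 = a$ and $q_1 = \ssderiv(e_0, a) + \ssderiv(e_1, a)$) or a $\gamma$-trace (so $U_0 = V \parallel W$ with accepting continuations $r \satrace{V} r' \in \sacc$ and $s \satrace{W} s' \in \sacc$, and $q_1 = \psderiv(e_0, r, s) + \psderiv(e_1, r, s)$), the target $q_1$ is once more a sum $g_0 + g_1$, where $g_i$ is the matching derivative of $e_i$. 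The tail $q_1 \satrace{U_1 \cdots U_{\ell-1}} f$ has length $\ell - 1$, so the induction hypothesis applied to $g_0 + g_1$ yields an $f' \in \sacc$ and an index $i$ with $g_i \satrace{U_1 \cdots U_{\ell-1}} f'$ of length $\ell - 1$.

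It then remains to prepend the appropriate unit trace out of $e_i$. In the $\delta$-case this is $e_i \satrace{a} \ssderiv(e_i, a) = g_i$; in the $\gamma$-case it is $e_i \satrace{V \parallel W} \psderiv(e_i, r, s) = g_i$, which is a legitimate $\gamma$-trace precisely because we reuse the premises $r \satrace{V} r' \in \sacc$ and $s \satrace{W} s' \in \sacc$ already furnished by the original trace. Composing gives $e_i \satrace{U_0 \cdot (U_1 \cdots U_{\ell-1})} f' = e_i \satrace{U} f'$ of length $\ell$, as required. The main delicacy I anticipate is the bookkeeping around $\simeq$: one must know that the first transition out of $e_0 + e_1$ genuinely factors through the sum of the derivatives of $e_0$ and $e_1$ rather than through some other representative of the congruence class, which is exactly the soundness of $\ssderiv$ and $\psderiv$ modulo $\simeq$ combined with their defining clauses for $+$; and one must check that reusing the parallel premises preserves the trace length, so that the reconstructed trace indeed has length $\ell$ and not merely bounded by it.
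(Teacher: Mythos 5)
Your proof is correct and follows essentially the same route as the paper's: induction on the trace length, peeling off the first unit trace, using that $\ssderiv$ and $\psderiv$ distribute over $+$ to see the intermediate state as a sum of derivatives, applying the induction hypothesis to the tail, and prepending the matching unit trace out of $e_i$. The only (immaterial) difference is in the base case, where you detour through Lemma~\ref{lemma:nullability} and the semantics to conclude $e_0 \in \sacc$ or $e_1 \in \sacc$, whereas the paper reads this off directly from the inductive definition of $\sacc$.
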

\begin{proof}
The proof proceeds by induction $\ell$.
In the base, where $\ell = 0$, we have that $e_0 + e_1 \satrace{U} f$ is a trivial trace.
In that case, $f = e_0 + e_1$, and so $e_0 \in \sacc$ or $e_1 \in \sacc$; in the former case, choose $f' = e_0$, in the latter case, choose $f' = e_1$.
In either case, the claim is satisfied.

For the inductive step, let $e_0 + e_1 \satrace{U} f$ be of length $\ell+1$, and assume that the claim holds for $\ell$.
We find that $U = V \cdot U'$ and a $g \in \terms$ such that $e_0 + e_1 \satrace{V} g$ is a unit trace, and $g \satrace{U} f$ is of length $\ell$.
If, on the one hand, $e_0 + e_1 \satrace{V} g$ is a $\delta$-trace, then $V = a$ for some $a \in \Sigma$, and $g = \ssderiv(e_0 + e_1, a) = \ssderiv(e_0, a) + \ssderiv(e_1, a)$.
By induction, we then find $f' \in \sacc$ such that $\ssderiv(e_0, a) \satrace{U} f'$ or $\ssderiv(e_1, a) \satrace{U} f'$, of length $\ell$.
Putting this together, we have that $e_0 \satrace{U} f'$ or $e_1 \satrace{U} f'$, of length $\ell+1$.

The case where $e_0 + e_1 \satrace{V} g$ is a $\gamma$-trace is similar.
\end{proof}

The proofs of the other deconstruction lemmas follow a similar pattern; these appear in~\ref{appendix:deconstruction-lemmas}.

Another deconstruction lemma arises when the starting state is a sequential composition.
In this case, we find multiple traces: one originating in the left subterm, and another originating in the right subterm.

\begin{restatable}{lemma}{tracedeconstructsequential}%
\label{lemma:trace-deconstruct-sequential}
Let $e_0, e_1 \in \terms$, $f \in \sacc$ and $U \in \pomsp$, such that $e_0 \cdot e_1 \satrace{U} f$ is of length $\ell$.
There exist $f_0, f_1 \in \sacc$ such that $U = U_0 \cdot U_1$, as well as $e_0 \satrace{U_0} f_0$ of length $\ell_0$ and $e_1 \satrace{U_1} f_1$ of length $\ell_1$, such that $\ell_0 + \ell_1 = \ell$.
\end{restatable}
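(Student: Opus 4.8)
The plan is to induct on the length $\ell$ of the given trace $e_0 \cdot e_1 \satrace{U} f$, following the template of Lemma~\ref{lemma:trace-deconstruct-plus}, with the statement quantified uniformly over all pairs of factors so that the hypothesis may be reapplied to a \emph{different} sequential composition of the same length. In the base case $\ell = 0$ the trace is trivial, so $U = 1$ and $f = e_0 \cdot e_1$; since the only rule of Definition~\ref{definition:accepting-terms} placing a sequential composition into $\sacc$ demands that both factors lie in $\sacc$, we get $e_0, e_1 \in \sacc$, and we may take $f_0 = e_0$, $f_1 = e_1$, $U_0 = U_1 = 1$, and $\ell_0 = \ell_1 = 0$.

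For the inductive step I would use Lemma~\ref{lemma:trace-length} to peel off the first unit trace, writing $U = V \cdot U'$ with $e_0 \cdot e_1 \satrace{V} g$ a unit trace and $g \satrace{U'} f$ of length $\ell$. Whether this first step is a $\delta$-trace (so $V = a$ and $g = \ssderiv(e_0 \cdot e_1, a)$) or a $\gamma$-trace (so $V = P \parallel R$ with witnesses $p \satrace{P} p'$, $r \satrace{R} r'$ and $p', r' \in \sacc$, and $g = \psderiv(e_0 \cdot e_1, p, r)$), the derivative clause for a sequential composition presents $g$ as a sum $D(e_0) \fatsemi e_1 + e_0 \star D(e_1)$, where $D$ denotes the pertinent derivative. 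Applying Lemma~\ref{lemma:trace-deconstruct-plus} to $g \satrace{U'} f$ yields an $f' \in \sacc$ reached in length $\ell$ from one of the two summands.

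The two summand cases carry the real content, and the main obstacle is unfolding the auxiliary operators $\fatsemi$ and $\star$ by exploiting that the continuation ends in $\sacc$ and therefore cannot have started at $0$ (Lemma~\ref{lemma:trace-top-bottom}). If the continuation starts from $D(e_0) \fatsemi e_1$, then $D(e_0) \not\simeq 0$, so this summand is $D(e_0) \cdot e_1$; the induction hypothesis applied to the length-$\ell$ trace $D(e_0) \cdot e_1 \satrace{U'} f'$ splits $U' = U_0' \cdot U_1$ with $D(e_0) \satrace{U_0'} f_0$ and $e_1 \satrace{U_1} f_1$, and I would prepend the unit trace $e_0 \satrace{V} D(e_0)$ to the former, taking $U_0 = V \cdot U_0'$ and $\ell_0 = \ell_0' + 1$. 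If instead the continuation starts from $e_0 \star D(e_1)$, then $\star$ forces $e_0 \in \sacc$ and the summand equals $D(e_1)$; here I would set $U_0 = 1$ and $f_0 = e_0$ (legitimate since $e_0 \in \sacc$) with the trivial trace, and prepend $e_1 \satrace{V} D(e_1)$ to obtain $e_1 \satrace{V \cdot U'} f'$. The length bookkeeping gives $\ell_0 + \ell_1 = \ell + 1$ in every case, and the $\delta$- and $\gamma$-analyses are identical except for the label $V$ of the prepended unit trace, so they can be treated at once. Throughout, the delicate step is to infer from $f' \in \sacc$ that the relevant summand is non-trivial --- so that $\fatsemi$ collapses to $\cdot$, or $\star$ reveals the nullability of $e_0$ --- with all equalities read modulo $\simeq$.
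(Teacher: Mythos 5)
Your proposal is correct and follows essentially the same route as the paper's proof: induction on trace length, peeling off the first unit trace, deconstructing the sum $D(e_0)\fatsemi e_1 + e_0\star D(e_1)$ via Lemma~\ref{lemma:trace-deconstruct-plus}, and using Lemma~\ref{lemma:trace-top-bottom} to resolve $\fatsemi$ and $\star$ before reapplying the induction hypothesis or closing out with the trivial trace on $e_0$. Your explicit remark that the induction hypothesis must be stated uniformly over all factor pairs (so it can be applied to $D(e_0)\cdot e_1$) makes precise a point the paper leaves implicit.
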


The last deconstruction lemma that we record concerns the Kleene star; here, we find a number of traces, each of which originates from the subterm under the Kleene star, and reaches an accepting state.

\begin{restatable}{lemma}{tracedeconstructstar}%
\label{lemma:trace-deconstruct-star}
Let $e \in \terms$ and $f \in \sacc$ and $U \in \pomsp$ be such that $e^* \satrace{U} f$.
There exist $f_0, \dots, f_{n-1} \in \sacc$ such that $U = U_0 \cdots U_{n-1}$ and for $0 \leq i < n$ it holds that $e \satrace{U_i} f_i$.
\end{restatable}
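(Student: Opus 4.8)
The plan is to induct on the length $\ell$ of the trace $e^* \satrace{U} f$, mirroring the structure of the proof of Lemma~\ref{lemma:trace-deconstruct-plus}. In the base case $\ell = 0$ the trace is trivial, so $U = 1$ and $f = e^*$; I take $n = 0$, and the empty factorisation $U = 1$ satisfies the claim vacuously.

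For the inductive step, suppose $e^* \satrace{U} f$ has length $\ell + 1$. By Lemma~\ref{lemma:trace-length} I may write $U = V \cdot U'$ where $e^* \satrace{V} g$ is a unit trace and $g \satrace{U'} f$ has length $\ell$. The crucial observation is that, whether this first unit trace is a $\delta$-trace or a $\gamma$-trace, the defining clauses $\ssderiv(e^*, a) = \ssderiv(e, a) \fatsemi e^*$ and $\psderiv(e^*, r, s) = \psderiv(e, r, s) \fatsemi e^*$ force $g$ to have the shape $d \cdot e^*$ for a suitable derivative $d$ of $e$. Concretely, in the $\delta$-case $V = a$ and $g = \ssderiv(e,a) \fatsemi e^*$, while in the $\gamma$-case $V = V_1 \parallel V_2$ (the two parallel branches witnessed by states $r, s$) and $g = \psderiv(e, r, s) \fatsemi e^*$. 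Since $g \satrace{U'} f$ reaches $f \in \sacc$ while $0 \notin \sacc$ (by Lemma~\ref{lemma:nullability}), Lemma~\ref{lemma:trace-top-bottom} rules out $g = 0$; hence the relevant derivative of $e$ is not $\simeq 0$, and the $\fatsemi$ unfolds to an honest sequential composition $g = d \cdot e^*$.

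With $g = d \cdot e^*$ in hand, I apply the sequential deconstruction lemma (Lemma~\ref{lemma:trace-deconstruct-sequential}) to $g \satrace{U'} f$, obtaining $U' = U'_0 \cdot U'_1$ together with $d \satrace{U'_0} f'_0$ of length $\ell_0$ and $e^* \satrace{U'_1} f'_1$ of length $\ell_1$, where $\ell_0 + \ell_1 = \ell$ and $f'_0, f'_1 \in \sacc$. Prepending the first unit trace $e \satrace{V} d$---valid because $d$ is exactly the $a$-derivative $\ssderiv(e,a)$, respectively the value $\psderiv(e, r, s)$ obtained from the same witnesses $r, s$ via the $\gamma$-rule---to $d \satrace{U'_0} f'_0$ yields $e \satrace{V \cdot U'_0} f'_0$. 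This supplies the first factor $U_0 = V \cdot U'_0$ with accepting endpoint $f_0 = f'_0 \in \sacc$. The trace $e^* \satrace{U'_1} f'_1$ has length $\ell_1 \leq \ell < \ell + 1$, so the induction hypothesis applies and produces $U'_1 = U_1 \cdots U_{n-1}$ with $e \satrace{U_i} f_i$ and $f_i \in \sacc$ for $1 \leq i < n$. Concatenating, $U = V \cdot U'_0 \cdot U'_1 = U_0 \cdot U_1 \cdots U_{n-1}$ is the desired factorisation.

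I expect the main obstacle to be treating the $\delta$- and $\gamma$-unit-trace cases uniformly, and in particular verifying in both cases that the relevant derivative of $e$ is non-zero so that $\fatsemi$ genuinely produces a product $d \cdot e^*$ rather than $0$; this is precisely where Lemma~\ref{lemma:trace-top-bottom} together with $0 \notin \sacc$ is essential. The remaining effort is the routine bookkeeping of matching the reconstructed trace $e \satrace{V} d$ against the $\gamma$-rule of the trace relation in the parallel case, reusing the witnesses $r, r', s, s'$ that already justified the peeled-off $\gamma$-trace at $e^*$.
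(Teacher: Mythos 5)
Your proposal is correct and follows essentially the same route as the paper's own proof: induction on the trace length, peeling off the initial unit trace, using Lemma~\ref{lemma:trace-top-bottom} together with $f \in \sacc$ to conclude that the relevant derivative of $e$ is not $\simeq 0$ (so that $\fatsemi$ unfolds to a genuine product $d \cdot e^*$), then applying Lemma~\ref{lemma:trace-deconstruct-sequential} and the induction hypothesis on the residual $e^*$-trace. The only cosmetic difference is that you spell out the $\gamma$-trace case in parallel with the $\delta$-trace case, where the paper dismisses it as ``similar.''
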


To show the other inclusion, i.e., that $\sem{e} \subseteq L_\Sigma(e)$, we need \emph{construction} lemmas to compose traces of pomsets into a trace of a composition of those pomsets.
To keep the lemmas concise, the following notion is convenient
\begin{definition}
We write $\lesssim$ for the smallest relation on $\terms$ such that $e \lesssim f$ when $e + f \simeq f$; note that this makes $\lesssim$ a preorder on $\terms$.
\end{definition}

The first construction lemma that we encounter allows us to use $+$ to add additional terms to the starting trace, such that the target state of the new trace contains the old target state.

\begin{lemma}%
\label{lemma:trace-construct-plus}
Let $e_0, e_1, f_0 \in \terms$ and $U \in \pomsp$ be such that $e_0 \satrace{U} f_0$.
There exists an $f \in \terms$ such that $e_0 + e_1 \satrace{U} f$ and $f_0 \lesssim f$.
\end{lemma}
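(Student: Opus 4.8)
Let $e_0, e_1, f_0 \in \terms$ and $U \in \pomsp$ with $e_0 \satrace{U} f_0$. There exists $f \in \terms$ with $e_0 + e_1 \satrace{U} f$ and $f_0 \lesssim f$.

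The plan is to proceed by induction on the length $\ell$ of the trace $e_0 \satrace{U} f_0$, which Lemma~\ref{lemma:trace-length} lets us peel apart into unit traces. The engine of the argument is that both derivative functions distribute over $+$: from Definition~\ref{definition:derivatives} we have $\ssderiv(e_0 + e_1, a) = \ssderiv(e_0, a) + \ssderiv(e_1, a)$ and $\psderiv(e_0 + e_1, g, h) = \psderiv(e_0, g, h) + \psderiv(e_1, g, h)$. This means every first step taken out of $e_0$ can be mirrored by a step out of $e_0 + e_1$ whose target simply gains an extra summand contributed by $e_1$; the induction hypothesis then absorbs that extra summand over the remainder of the trace, and the containment $f_0 \lesssim f$ is maintained throughout because adding summands only grows the target modulo $\simeq$.

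Concretely, in the base case $\ell = 0$ the trace is trivial, so $U = 1$ and $f_0 \simeq e_0$; I would take $f = e_0 + e_1$, giving the trivial trace $e_0 + e_1 \satrace{1} e_0 + e_1$, with $f_0 \lesssim f$ since $f_0 + (e_0 + e_1) \simeq e_0 + (e_0 + e_1) \simeq e_0 + e_1$ by idempotence and associativity of $+$. For the inductive step, Lemma~\ref{lemma:trace-length} produces a first unit trace $e_0 \satrace{U_0} g$ followed by $g \satrace{U'} f_0$ of length $\ell - 1$, with $U = U_0 \cdot U'$. If the first step is a $\delta$-trace, then $U_0 = a$ and $g = \ssderiv(e_0, a)$, so distributivity gives $e_0 + e_1 \satrace{a} g + \ssderiv(e_1, a)$. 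If it is a $\gamma$-trace, then $U_0 = V \parallel W$ witnessed by accepting branch traces $r \satrace{V} r'$ and $s \satrace{W} s'$ with $g = \psderiv(e_0, r, s)$, and distributivity gives $e_0 + e_1 \satrace{V \parallel W} g + \psderiv(e_1, r, s)$. In either case, applying the induction hypothesis to $g \satrace{U'} f_0$ with the second summand instantiated to the new term yields an $f$ with the extended state reaching $f$ along $U'$ and $f_0 \lesssim f$; composing the first step with this trace via the sequential-composition rule gives $e_0 + e_1 \satrace{U_0 \cdot U'} f$, i.e.\ $e_0 + e_1 \satrace{U} f$, as required.

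The argument is routine, and the step demanding the most care is the $\gamma$-trace case: one must observe that the two accepting branch sub-traces witnessing the fork depend only on the branch states $r$ and $s$, not on the state from which the fork is fired, so they can be reused verbatim to justify the fork out of $e_0 + e_1$; changing the source state affects only the target, via distributivity of $\psderiv$ over $+$. A secondary bookkeeping point is to confirm that instantiating the second summand to $\ssderiv(e_1, a)$ (respectively $\psderiv(e_1, r, s)$) when invoking the induction hypothesis is legitimate, which it is, since the statement quantifies over an arbitrary $e_1 \in \terms$.
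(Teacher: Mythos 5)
Your proof is correct and follows essentially the same route as the paper's: induction on trace length, peeling off the first unit trace, using distributivity of $\ssderiv$ and $\psderiv$ over $+$ to mirror that step from $e_0 + e_1$, and instantiating the induction hypothesis with the extra summand ($\ssderiv(e_1,a)$ or $\psderiv(e_1,r,s)$) to absorb it over the remainder. The paper leaves the $\gamma$-trace case as ``similar''; your explicit observation that the branch sub-traces depend only on $r$ and $s$ and can be reused verbatim is exactly the right justification.
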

\begin{proof}
The proof proceeds by induction on the length $\ell$ of $e_0 \satrace{U} f_0$.
In the base, where $\ell = 0$, we have $f_0 = e_0$ and $U = 1$.
We then choose $f = e_0 + e_1$.

For the inductive step, let $e_0 \satrace{U} f_0$ be of length $\ell+1$, and assume that the claim holds for $\ell$.
We then find $e_0' \in \terms$ and $U = V \cdot U'$ such that $e_0 \satrace{V} e_0'$ is a unit trace, and $e_0' \satrace{U'} f_0$ is of length $\ell$.
If $e_0 \satrace{V} e_0'$ is a $\delta$-trace, then $V = a$ for some $a \in \Sigma$, and $e_0' = \ssderiv(e_0, a)$.
We choose $e_1' = \ssderiv(e_1, a)$; by induction, we find $f \in \terms$ such that $f_0 \lesssim f$ and $e_0' + e_1' \satrace{U'} f$.
Since $\ssderiv(e_0 + e_1, a) = \ssderiv(e_0, a) + \ssderiv(e_1, a) = e_0' + e_1'$, we have that $e_0 + e_1 \satrace{V} e_0' + e_1'$.
Putting this together, we find that $e_0 + e_1 \satrace{U} f$.

The case where $e_0 \satrace{V} e_0'$ is a $\gamma$-trace is similar.
\end{proof}

Like deconstruction lemmas, the proofs of construction lemmas follow a similar pattern.
Further proofs of lemmas like this appear in~\ref{appendix:construction-lemmas}.

The construction lemma for sequential composition consists of two parts.
First, we need to be able to append an expression, in such a way that the appended expression is carried into the target state of the new trace.

\begin{restatable}{lemma}{traceconstructcarry}%
\label{lemma:trace-construct-carry}
Let $e_0, e_1 \in \terms$ and $f_0 \in \sacc$ and $U \in \pomsp$ be such that $e_0 \satrace{U} f_0$.
Then there exists an $f \in \terms$ such that $f_0 \cdot e_1 \lesssim f$, and $e_0 \cdot e_1 \satrace{U} f$.
\end{restatable}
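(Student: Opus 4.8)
\textbf{The plan} is to prove Lemma~\ref{lemma:trace-construct-carry} by induction on the length $\ell$ of the trace $e_0 \satrace{U} f_0$, mirroring the structure of the proof of Lemma~\ref{lemma:trace-construct-plus}. The statement to establish is that appending $e_1$ to the starting state yields a trace on $e_0 \cdot e_1$ whose target $f$ dominates $f_0 \cdot e_1$ in the $\lesssim$ preorder. The base case, where $\ell = 0$, is immediate: here $U = 1$ and $f_0 = e_0$, and since $f_0 = e_0 \in \sacc$ we have that $e_0 \cdot e_1$ is a candidate target; the trivial trace $e_0 \cdot e_1 \satrace{1} e_0 \cdot e_1$ gives $f = e_0 \cdot e_1 = f_0 \cdot e_1$, so $f_0 \cdot e_1 \lesssim f$ trivially.

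For the inductive step I would take a trace $e_0 \satrace{U} f_0$ of length $\ell + 1$ and decompose it via Lemma~\ref{lemma:trace-length} as a unit trace $e_0 \satrace{V} e_0'$ followed by $e_0' \satrace{U'} f_0$ of length $\ell$, with $U = V \cdot U'$. The key computation is to relate the derivatives of $e_0 \cdot e_1$ to those of $e_0$. If the first step is a $\delta$-trace, so $V = a$ and $e_0' = \ssderiv(e_0, a)$, then by the definition of $\ssderiv$ on sequential composition we have $\ssderiv(e_0 \cdot e_1, a) = \ssderiv(e_0, a) \fatsemi e_1 + e_0 \star \ssderiv(e_1, a)$. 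The first summand is exactly $e_0' \cdot e_1$ (provided $e_0' \not\simeq 0$; the degenerate case is handled directly). Applying the induction hypothesis to $e_0' \satrace{U'} f_0$ yields an $f'$ with $e_0' \cdot e_1 \satrace{U'} f'$ and $f_0 \cdot e_1 \lesssim f'$. Then I would use Lemma~\ref{lemma:trace-construct-plus} to lift this trace from $e_0' \cdot e_1$ to the full derivative $\ssderiv(e_0 \cdot e_1, a)$, obtaining an $f$ with $f' \lesssim f$; transitivity of $\lesssim$ gives $f_0 \cdot e_1 \lesssim f$, and prepending the $\delta$-step $e_0 \cdot e_1 \satrace{a} \ssderiv(e_0 \cdot e_1, a)$ completes this case. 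The $\gamma$-trace case proceeds analogously, using the fact that $\psderiv(e_0 \cdot e_1, g, h) = \psderiv(e_0, g, h) \fatsemi e_1 + e_0 \star \psderiv(e_1, g, h)$, so that the same ``$\fatsemi e_1$'' summand carries $e_1$ into the target.

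\textbf{The main obstacle} I anticipate is the bookkeeping around the $\fatsemi$ and $\star$ operators and the degenerate cases they encode. Specifically, $\ssderiv(e_0, a) \fatsemi e_1$ equals $e_0' \cdot e_1$ only when $e_0' \not\simeq 0$, and equals $0$ otherwise; one must check that when $e_0' \simeq 0$ there is in fact no continuation to worry about (the trace $e_0' \satrace{U'} f_0$ cannot exist for $f_0 \in \sacc$ in a non-vacuous way, since $0 \not\in \sacc$ and $\ssderiv(0, a) = 0$). A further subtlety is that the full derivative of $e_0 \cdot e_1$ contains the extra summand $e_0 \star \ssderiv(e_1, a)$, so the target of the constructed trace need not equal $e_0' \cdot e_1$ on the nose — this is precisely why the statement is phrased with $\lesssim$ rather than equality, and why the appeal to Lemma~\ref{lemma:trace-construct-plus} to absorb the extra summand is essential. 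Once one accepts that $\lesssim$ (rather than $\simeq$) is the right relation and that Lemma~\ref{lemma:trace-construct-plus} is the tool for discarding the superfluous summands, the argument is routine; I would relegate the symmetric $\gamma$-case and the explicit verification of the degenerate $\simeq 0$ subcases to a remark, as they add no new ideas.
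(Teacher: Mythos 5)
Your proposal is correct and follows essentially the same route as the paper's proof: induction on trace length, decomposition into a unit trace plus a shorter remainder, the identity $\ssderiv(e_0 \cdot e_1, a) = \ssderiv(e_0, a) \fatsemi e_1 + e_0 \star \ssderiv(e_1, a)$, an appeal to Lemma~\ref{lemma:trace-construct-plus} to absorb the extra summand, and the observation that $e_0' \not\simeq 0$ because the continuation must reach $f_0 \in \sacc$ (the paper phrases this via Lemma~\ref{lemma:trace-top-bottom}). The only differences are immaterial reorderings of the same steps.
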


Second, we need to be able to prepend an expression in $\sacc$ to get a new trace with a target state that contains the old target state.
The intuition here is that the constructed trace simply disregards the prepended expression (which is possible because it is in $\sacc$) and continues by imitating the old trace.

\begin{restatable}{lemma}{traceconstructledge}%
\label{lemma:trace-construct-ledge}
Let $e_0 \in \terms$ and $f_0, f_1 \in \sacc$ and $V \in \pomsp$ be such that $e_1 \satrace{V} f_1$.
There exists an $f \in \sacc$ such that $f_0 \cdot e_1 \satrace{V} f$.
\end{restatable}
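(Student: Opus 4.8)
The plan is to avoid a fresh induction by reducing to the additive construction lemma, Lemma~\ref{lemma:trace-construct-plus}, while exploiting that a prefix in $\sacc$ is transparent to the derivative functions. (I read the occurrence of $e_0$ in the statement as a typo for $e_1$, so that the hypothesis is $f_0 \in \sacc$, $e_1 \satrace{V} f_1$ with $f_1 \in \sacc$.) First I record the governing identities: since $f_0 \in \sacc_\simeq$, the definition of $\star$ gives $f_0 \star x = x$, so from Definition~\ref{definition:derivatives} we obtain $\ssderiv(f_0 \cdot e_1, a) = \ssderiv(f_0, a) \fatsemi e_1 + \ssderiv(e_1, a)$ for every $a \in \Sigma$, and likewise $\psderiv(f_0 \cdot e_1, g, h) = \psderiv(f_0, g, h) \fatsemi e_1 + \psderiv(e_1, g, h)$ for all $g, h \in \terms$. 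In both cases the derivative of $f_0 \cdot e_1$ contains, as a syntactic summand, exactly the derivative of $e_1$ that is used in the given trace. This is the crux: it lets $f_0 \cdot e_1$ mimic the first step of $e_1 \satrace{V} f_1$, landing in a state of the shape $(\text{junk}) + e_1'$.

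Next I split on whether $e_1 \satrace{V} f_1$ is trivial. If it is, then $V = 1$ and $f_1 = e_1 \in \sacc$; I take the trivial trace $f_0 \cdot e_1 \satrace{1} f_0 \cdot e_1$ and set $f = f_0 \cdot e_1$, which lies in $\sacc$ by the sequential-composition clause of Definition~\ref{definition:accepting-terms}, since both $f_0$ and $e_1$ do. If the trace is non-trivial, I use Lemma~\ref{lemma:trace-length} to peel off the first unit trace $e_1 \satrace{U} e_1'$ together with the remainder $e_1' \satrace{V'} f_1$, where $V = U \cdot V'$. When the first trace is a $\delta$-trace we have $U = a$ and $e_1' = \ssderiv(e_1, a)$; when it is a $\gamma$-trace we have $U = U' \parallel U''$ with witnessing accepting traces originating from some $r, s$ and $e_1' = \psderiv(e_1, r, s)$. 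In either case the identities above show that $f_0 \cdot e_1$ has a matching first step of the same kind (reusing $a$, respectively $r, s$ and their witnessing traces) to a state $g$ with $g \simeq (\text{junk}) + e_1'$.

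It then remains to drive $g$ to an accepting state while reading $V'$. Since $g \simeq e_1' + (\text{junk})$ and $e_1' \satrace{V'} f_1$, Lemma~\ref{lemma:trace-construct-plus} (applied after commuting the summands, which is legitimate because the trace relation respects $\simeq$) yields $f \in \terms$ with $g \satrace{V'} f$ and $f_1 \lesssim f$. Composing the first step with this trace gives $f_0 \cdot e_1 \satrace{U \cdot V'} f$, i.e.\ $f_0 \cdot e_1 \satrace{V} f$. The main obstacle, and the final step, is to confirm $f \in \sacc$: from $f_1 \lesssim f$ we have $f_1 + f \simeq f$, and since $f_1 \in \sacc$ Lemma~\ref{lemma:nullability} gives $1 \in \sem{f_1} \subseteq \sem{f_1} \cup \sem{f} = \sem{f_1 + f} = \sem{f}$, using soundness of $\simeq$ with respect to $\sem{-}$ (as developed in~\ref{appendix:soundness-modulo-congruence}); Lemma~\ref{lemma:nullability} then returns $f \in \sacc$. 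I expect the bookkeeping around this congruence-and-nullability argument, rather than the trace construction itself, to be the delicate part.
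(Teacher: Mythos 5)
Your proposal is correct and matches the paper's proof in all essentials: both peel off the first unit trace, match it with a step of $f_0 \cdot e_1$ via the identity $f_0 \star x = x$ for $f_0 \in \sacc$ (so that $\ssderiv(f_0 \cdot e_1, a) = \ssderiv(f_0, a) \fatsemi e_1 + \ssderiv(e_1, a)$, and analogously for $\psderiv$), and then handle the remainder of the trace in one go with Lemma~\ref{lemma:trace-construct-plus}, extracting $f \in \sacc$ from $f_1 \lesssim f$ and $f_1 \in \sacc$. The only cosmetic difference is that the paper formally wraps this in an induction on trace length whose inductive hypothesis is never actually invoked, so your direct split into trivial and non-trivial traces is an equally valid packaging of the same argument.
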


The construction lemma for sequential composition is then a simple consequence of the preceding construction lemmas.

\begin{restatable}{lemma}{traceconstructsequential}%
\label{lemma:trace-construct-sequential}
Let $e_0, e_1 \in \terms$, $f_0, f_1 \in \sacc$ and $U, V \in \pomsp$ such that $e_0 \satrace{U} f_0$ and $e_1 \satrace{V} f_1$.
There exists an $f \in \sacc$ with $e_0 \cdot e_1 \satrace{U \cdot V} f$.
\end{restatable}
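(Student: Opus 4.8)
The plan is to derive this construction lemma directly from the two preceding construction lemmas, Lemma~\ref{lemma:trace-construct-carry} and Lemma~\ref{lemma:trace-construct-ledge}, composing them so that the trace on $U$ is continued into a trace on $V$. The key observation is that $f_0 \in \sacc$, which is exactly the hypothesis needed to invoke the prepending lemma on $f_0$. First I would apply Lemma~\ref{lemma:trace-construct-carry} to the trace $e_0 \satrace{U} f_0$, obtaining an $f' \in \terms$ with $f_0 \cdot e_1 \lesssim f'$ and $e_0 \cdot e_1 \satrace{U} f'$.

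Next I would relate $f'$ back to the trace on $V$. Since $e_1 \satrace{V} f_1$ with $f_0, f_1 \in \sacc$, Lemma~\ref{lemma:trace-construct-ledge} yields an $f'' \in \sacc$ with $f_0 \cdot e_1 \satrace{V} f''$. The delicate point is bridging from $f_0 \cdot e_1 \lesssim f'$ to an actual trace \emph{out of} $f'$: the plan is to use a monotonicity property of traces with respect to $\lesssim$. Specifically, since $\lesssim$ is defined by $f_0 \cdot e_1 + f' \simeq f'$, and the derivative functions respect $\simeq$ (soundness modulo congruence, as referenced for~\ref{appendix:soundness-modulo-congruence}), any trace out of $f_0 \cdot e_1$ lifts to a trace out of $f'$ with a target that is $\lesssim$-above the original target. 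I would therefore establish (or cite) the companion of Lemma~\ref{lemma:trace-construct-plus}, namely that $g \lesssim h$ and $g \satrace{W} k$ imply $h \satrace{W} k'$ for some $k$ with $k \lesssim k'$; this is essentially Lemma~\ref{lemma:trace-construct-plus} read through the definition of $\lesssim$.

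Putting the pieces together, from $f_0 \cdot e_1 \satrace{V} f''$ and $f_0 \cdot e_1 \lesssim f'$ I obtain $f' \satrace{V} f$ for some $f$ with $f'' \lesssim f$. Concatenating the two traces via the sequential rule of the trace relation gives $e_0 \cdot e_1 \satrace{U \cdot V} f$. It remains to check that $f \in \sacc$: since $f'' \in \sacc$ and $f'' \lesssim f$, i.e.\ $f'' + f \simeq f$, the characterisation of $\sacc$ via nullability (Lemma~\ref{lemma:nullability}, giving $f'' \in \sacc \iff 1 \in \sem{f''}$) together with the monotonicity of $\sem{-}$ under $+$ forces $1 \in \sem{f}$, hence $f \in \sacc$.

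The main obstacle I anticipate is the monotonicity step connecting $\lesssim$ to the trace relation: one must verify that passing from a smaller expression to a $\lesssim$-larger one never destroys a trace but only enlarges its target. This is plausible because every transition is computed by $\ssderiv$ and $\psderiv$, which distribute over $+$ (as witnessed by the clauses $\ssderiv(e+f,a) = \ssderiv(e,a) + \ssderiv(f,a)$ and the analogous clause for $\psderiv$), so that adding summands to the source can only add summands to the target. Making this precise — ideally by an induction on trace length that mirrors the proof of Lemma~\ref{lemma:trace-construct-plus} — is the technical heart of the argument; once it is in place, the assembly above is routine and the $\sacc$-membership of the final target follows from nullability.
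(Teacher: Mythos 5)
Your proposal is correct and follows essentially the same route as the paper's proof: apply Lemma~\ref{lemma:trace-construct-carry} to get $e_0 \cdot e_1 \satrace{U} f'$ with $f_0 \cdot e_1 \lesssim f'$, apply Lemma~\ref{lemma:trace-construct-ledge} to get $f_0 \cdot e_1 \satrace{V} f''$, and then use Lemma~\ref{lemma:trace-construct-plus} (together with the fact that $f_0 \cdot e_1 + f' \simeq f'$ denotes the same state of $A_\Sigma$) to turn this into a trace $f' \satrace{V} f$ with $f'' \lesssim f$, whence $f \in \sacc$. The ``monotonicity of traces under $\lesssim$'' you flag as the technical heart is precisely this instance of Lemma~\ref{lemma:trace-construct-plus} read through the definition of $\lesssim$, so no new induction is needed.
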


The final construction lemma shows how to construct a trace originating in a state of the form $e^*$, given a number of traces that originate in $e$.
The intuition here is that the constructed trace mimics the traces that originate in $e$, while carrying a factor $e^*$ to restart the next trace.

\begin{restatable}{lemma}{traceconstructstar}%
\label{lemma:trace-construct-star}
Let $e \in \terms$ and $f_0, \dots, f_{n-1} \in \sacc$ and $U_0, \dots, U_{n-1} \in \pomsp$ be such that for $0 \leq i < n$ it holds that $e \satrace{U_i} f_i$.
There exists an $f \in \sacc$ such that $e^* \satrace{U_0 \cdots U_{n-1}} f$.
\end{restatable}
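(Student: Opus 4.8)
The plan is to prove Lemma~\ref{lemma:trace-construct-star} by induction on $n$, constructing a trace from $e^*$ that ``threads'' a residual copy of $e^*$ through the concatenation of the given traces. The key structural fact is the definition of the sequential derivative on stars, $\ssderiv(e^*, a) = \ssderiv(e, a) \fatsemi e^*$, and the analogous clause for $\psderiv$; together these say that any unit step taken from $e$ can be lifted to a unit step from $e^*$ that additionally carries a factor of $e^*$ in its target, ready to restart. So the real work is an auxiliary ``carry'' observation: if $e \satrace{U_i} f_i$, then $e^* \satrace{U_i} g$ for some $g$ with $f_i \cdot e^* \lesssim g$. This is exactly the shape that Lemma~\ref{lemma:trace-construct-carry} provides (appending $e^*$ to a trace so that it lands in the target), except that here the first factor itself lives under a star rather than being a plain expression, so I expect to need a small bridging step between the derivative of $e^*$ and the derivative of $e \cdot e^*$.

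Concretely, first I would handle the base case $n = 0$: here $U_0 \cdots U_{n-1} = 1$, and since $e^*$ accepts the empty pomset (indeed $e^* \in \sacc$ by Definition~\ref{definition:accepting-terms}), the trivial trace $e^* \satrace{1} e^*$ witnesses the claim with $f = e^*$. For the inductive step, suppose the claim holds for $n$ and we are given $e \satrace{U_i} f_i$ for $0 \le i \le n$. I would first single out the trace $e \satrace{U_0} f_0$ and convert it into a trace from $e^*$ that carries $e^*$ into its target; the mechanism is to note that every unit trace from $e$ lifts, via the star clauses of $\ssderiv$ and $\psderiv$ and the $\fatsemi$ operator, to a unit trace from $e^*$ whose target contains the original target sequentially composed with $e^*$. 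Iterating this over the length of $e \satrace{U_0} f_0$ (an inner induction, in the same style as the proof of Lemma~\ref{lemma:trace-construct-plus} above) yields $e^* \satrace{U_0} g_0$ with $f_0 \cdot e^* \lesssim g_0$, and since $f_0 \in \sacc$ we have $f_0 \cdot e^* \in \sacc$, so $e^*$ reappears as a summand reachable in $g_0$.

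Next I would apply the induction hypothesis to the remaining traces $e \satrace{U_1} f_1, \dots, e \satrace{U_n} f_n$ to obtain $f' \in \sacc$ with $e^* \satrace{U_1 \cdots U_n} f'$. The final task is to glue $e^* \satrace{U_0} g_0$ onto this tail: because $g_0$ covers $f_0 \cdot e^*$, and $f_0 \in \sacc$, a trace starting from $f_0 \cdot e^*$ can ignore the accepting prefix $f_0$ and proceed as the trace from $e^*$ — this is precisely the content of the ``ledge'' construction of Lemma~\ref{lemma:trace-construct-ledge}. Chaining the prefix trace with the tail trace via the sequential-composition rule of the trace relation, and propagating everything through $\lesssim$ (which is monotone under the transition structure, by the soundness of the construction lemmas), delivers $e^* \satrace{U_0 \cdot (U_1 \cdots U_n)} f$ for some $f \in \sacc$, as required.

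The main obstacle I anticipate is the bookkeeping in the lift of $e \satrace{U_0} f_0$ to $e^* \satrace{U_0} g_0$: one must verify that the residual $e^*$ is faithfully preserved through every unit step of the trace, which requires examining how $\fatsemi$ interacts with the accepting predicate $\star$ at each $\delta$- and $\gamma$-step (in particular that the $e \star \ssderiv(f, a)$ summand in the derivative of a product does not erase the carried $e^*$), and chaining these preservation facts along the trace using $\lesssim$. This is conceptually the same manoeuvre as in Lemma~\ref{lemma:trace-construct-carry}, so I expect the cleanest route is to reduce to that lemma — observing that $\ssderiv(e^*, a) = \ssderiv(e, a) \fatsemi e^*$ lets an $e$-trace be reinterpreted as the initial segment of an $(e \cdot e^*)$-trace — rather than redoing the inner induction from scratch.
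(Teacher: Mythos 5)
Your argument follows the paper's proof in all essentials: induction on $n$, with the base case $f = e^*$, and an inductive step that peels off the first unit step $e \satrace{V} e'$ of the first trace, lifts it to a unit step of $e^*$ (in the $\delta$-case, $e^* \satrace{a} \ssderiv(e^*, a) = \ssderiv(e, a) \fatsemi e^* = e' \cdot e^*$, using $e' \not\simeq 0$ from Lemma~\ref{lemma:trace-top-bottom}; the $\gamma$-case is analogous), and then combines the remainder $e' \satrace{U_0'} f_0$ with the inductively obtained tail $e^* \satrace{U_1 \cdots U_{n-1}} f'$. The only packaging difference is that the paper feeds these two traces directly into Lemma~\ref{lemma:trace-construct-sequential} (whose proof is exactly your carry-then-ledge-then-plus combination), whereas you first build the full lifted trace $e^* \satrace{U_0} g_0$ with $f_0 \cdot e^* \lesssim g_0$ and then glue; both routes work, and your instinct to reduce the lift to Lemma~\ref{lemma:trace-construct-carry} after one unit step, rather than redo the inner induction, is exactly right. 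One point needs repair: your auxiliary carry claim --- that $e \satrace{U_i} f_i$ always yields $e^* \satrace{U_i} g$ with $f_i \cdot e^* \lesssim g$ --- fails for trivial traces, since then $U_i = 1$ and $f_i = e$, the only available witness is $e^* \satrace{1} e^*$, and $e \cdot e^* \lesssim e^*$ does not hold in general (the congruence $\simeq$ does not identify $e \cdot e^* + e^*$ with $e^*$); so the base of your inner induction, and the decomposition into a first unit step, both break down. The paper dispatches this by assuming without loss of generality that every $e \satrace{U_i} f_i$ is non-trivial, which is harmless because a trivial trace contributes only the factor $1$ to $U_0 \cdots U_{n-1}$ and can be dropped from the list; with that one sentence added, your proof is complete.
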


With all of these facts about constructing and deconstructing traces in the syntactic PA, we are finally able to show correctness of our translation from expressions to automata, as witnessed by the following lemma.
\begin{restatable}{lemma}{syntacticpalanguages}%
\label{lemma:language-equivalence}
If $e \in \terms$, then $L_\Sigma(e) = \sem{e}$.
\end{restatable}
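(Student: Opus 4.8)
The plan is to prove $L_\Sigma(e) = \sem{e}$ by structural induction on $e$, establishing the two inclusions simultaneously: the deconstruction lemmas drive $L_\Sigma(e) \subseteq \sem{e}$, while the construction lemmas drive $\sem{e} \subseteq L_\Sigma(e)$. For the base cases $e \in \set{0,1} \cup \Sigma$ I would read off $L_\Sigma(e)$ directly from the definitions of $\ssderiv$, $\psderiv$ and $\sacc$. From $0 = \bot$ no non-trivial trace reaches an accepting state by Lemma~\ref{lemma:trace-top-bottom}, and $0 \notin \sacc$, so $L_\Sigma(0) = \emptyset = \sem{0}$; from $1$ the only accepting trace is the trivial one, giving $\set{1}$; and from $a$ the single $\delta$-trace $a \satrace{a} 1$ together with $1 \in \sacc$ yields $\set{a}$. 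Each matches the corresponding clause of $\sem{-}$.

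For the sum $e_0 + e_1$, the inclusion $\subseteq$ follows from Lemma~\ref{lemma:trace-deconstruct-plus}, which converts an accepting trace from $e_0 + e_1$ into one from $e_0$ or $e_1$; the induction hypothesis then places the pomset in $\sem{e_0} \cup \sem{e_1} = \sem{e_0 + e_1}$. For $\supseteq$ I would use Lemma~\ref{lemma:trace-construct-plus}, observing that acceptance is preserved as the target state grows: if $f_0 \in \sacc$ and $f_0 \lesssim f$, then $f_0 + f \simeq f$ lies in $\sacc$ (by the $+$-rule for $\sacc$ and $\simeq$-invariance of $\sacc$), hence $f \in \sacc$. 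The case $e_0 \cdot e_1$ is analogous, pairing Lemma~\ref{lemma:trace-deconstruct-sequential} with Lemma~\ref{lemma:trace-construct-sequential}, and $e_0^*$ pairs Lemma~\ref{lemma:trace-deconstruct-star} with Lemma~\ref{lemma:trace-construct-star} (the empty product handling $n = 0$, where $e_0^* \in \sacc$ gives $1$). In each case the induction hypothesis is applied to the proper subexpressions $e_0, e_1$.

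The parallel case $e_0 \parallel e_1$ has no off-the-shelf deconstruction lemma, so I would argue it by hand. Since $\ssderiv(e_0 \parallel e_1, a) = 0 = \bot$, every $\delta$-step dies (Lemma~\ref{lemma:trace-top-bottom}), and since $\psderiv(e_0 \parallel e_1, g, h) = [g \simeq e_0 \wedge h \simeq e_1]$, the only productive fork reaches $\top = 1$, from which no non-trivial trace continues. By Lemma~\ref{lemma:trace-length}, any accepting trace from $e_0 \parallel e_1$ is therefore either trivial (forcing $e_0, e_1 \in \sacc$, hence $1 \in \sem{e_0} \cap \sem{e_1}$ by Lemma~\ref{lemma:nullability}, so $1 = 1 \parallel 1 \in \sem{e_0 \parallel e_1}$) or a single $\gamma$-trace $e_0 \parallel e_1 \satrace{U \parallel V} 1$ with $e_0 \satrace{U} r' \in \sacc$ and $e_1 \satrace{V} s' \in \sacc$; the induction hypothesis gives $U \in \sem{e_0}$ and $V \in \sem{e_1}$, so $U \parallel V \in \sem{e_0 \parallel e_1}$. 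Conversely, any $U \parallel V$ with $U \in \sem{e_0}$ and $V \in \sem{e_1}$ yields, via the induction hypothesis, accepting traces from $e_0$ and $e_1$, and the $\gamma$-rule produces $e_0 \parallel e_1 \satrace{U \parallel V} \psderiv(e_0 \parallel e_1, e_0, e_1) = 1 \in \sacc$.

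Finally, the parallel-star case $e_0^\dagger$ is the one I expect to require the most care to set up, though it is then short. By Lemma~\ref{lemma:syntactic-pa-well-nested}, $e_0^\dagger$ is recursive, so Lemma~\ref{lemma:recursive-state-language} applies and shows that $L_\Sigma(e_0^\dagger)$ is the $\dagger$-closure of the union of $L_\Sigma(r)$ over all $r$ with $\psderiv(e_0^\dagger, r, e_0^\dagger) \simeq 1$; since this holds exactly when $r \simeq e_0$, it collapses to $L_\Sigma(e_0)^\dagger$, which by the induction hypothesis equals $\sem{e_0}^\dagger = \sem{e_0^\dagger}$. The main obstacle is thus not any single computation but the careful orchestration of the auxiliary lemmas: matching each connective to its construction/deconstruction pair, handling $\parallel$ directly because the automaton forks rather than steps, and discharging $\dagger$ through recursiveness instead of a bespoke trace-surgery lemma.
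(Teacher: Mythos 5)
Your proposal is correct and follows essentially the same route as the paper's proof: structural induction on $e$, pairing each deconstruction lemma with its construction counterpart for $+$, $\cdot$ and $*$, handling $\parallel$ by direct inspection of the derivatives, and discharging $\dagger$ via well-nestedness and Lemma~\ref{lemma:recursive-state-language}. Your explicit observation that $f_0 \in \sacc$ and $f_0 \lesssim f$ force $f \in \sacc$ is a small detail the paper leaves implicit, but it is the same argument.
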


The equality follows from using the deconstruction lemmas (for the inclusion from left to right) and the construction lemmas (to show the inclusion from right to left); as before, a full proof can be found in~\ref{appendix:expressions-to-automata}.

This establishes the main result of this section.
\begin{theorem}
For $e \in \terms$, we can find a well-nested and finitely supported PA $A$ that accepts $\sem{e}$, i.e., $A$ has a state $q_e$ such that $L_A(q_e) = \sem{e}$.
\end{theorem}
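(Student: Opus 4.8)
The plan is to take $A = A_\Sigma$, the syntactic pomset automaton, and to let $q_e$ be the congruence class of $e$ in $\terms_\simeq$. With this choice, the statement decomposes into exactly the three properties that the preceding subsections establish about $A_\Sigma$: finite support, well-nestedness, and the language equality $L_\Sigma(e) = \sem{e}$. No other automaton needs to be constructed; the entire content of the theorem is that these three guarantees hold simultaneously for this one automaton and at this one state.

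Concretely, I would proceed as follows. First, I would invoke Lemma~\ref{lemma:syntactic-pa-finitely-supported} to conclude that $A_\Sigma$ is finitely supported, which discharges the ``finitely supported'' part of the claim; if a genuinely \emph{finite} witness were preferred, one could additionally feed this into Lemma~\ref{lemma:restrict-finitely-supported-automaton} to extract a finite PA accepting the same language at $q_e$, but the statement as phrased is already satisfied by $A_\Sigma$ itself. Second, I would invoke Lemma~\ref{lemma:syntactic-pa-well-nested} to conclude that $A_\Sigma$ is well-nested, discharging the ``well-nested'' requirement. Third, I would invoke Lemma~\ref{lemma:language-equivalence} to obtain $L_\Sigma(e) = \sem{e}$, that is, $L_A(q_e) = \sem{e}$, which is precisely the acceptance condition.

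At the level of the theorem itself there is no remaining obstacle: the argument is a straightforward assembly of the three cited lemmas applied to $A_\Sigma$. The genuine difficulty lives \emph{inside} those lemmas, and most of all inside Lemma~\ref{lemma:language-equivalence}, whose two inclusions rely on the full suite of deconstruction lemmas (for $L_\Sigma(e) \subseteq \sem{e}$) and construction lemmas (for $\sem{e} \subseteq L_\Sigma(e)$), each established by induction on trace length and on the structure of $e$. The only point requiring minor care in the assembly is to confirm that $q_e$ denotes the same object (the $\simeq$-class of $e$) across all three invocations, so that the structural properties of $A_\Sigma$ and the language equality refer to the same PA and the same state; this holds immediately by construction of $A_\Sigma$.
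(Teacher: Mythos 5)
Your proposal is correct and matches the paper exactly: the theorem is stated immediately after Lemma~\ref{lemma:language-equivalence} as a direct consequence of taking $A = A_\Sigma$ with $q_e$ the $\simeq$-class of $e$, combining Lemma~\ref{lemma:syntactic-pa-finitely-supported} (finite support), Lemma~\ref{lemma:syntactic-pa-well-nested} (well-nestedness) and Lemma~\ref{lemma:language-equivalence} (language equality). Your assembly of the three cited lemmas is precisely the paper's argument.
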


\section{Context-free pomset languages}%
\label{sec:cfg}
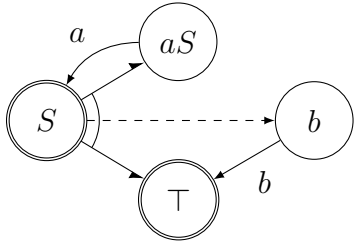
\begin{figure}
    \centering
    \begin{tikzpicture}[scale=2]
      \node[state,accepting] (q1) {$S$};
      \node[state,accepting,below right=3mm and 10mm of q1] (q3) {$\top$};
      \node[state,above right=3mm and 10mm of q1] (q2) {$aS$};
      \node[state,right=25mm of q1] (q4) {$b$};

      \draw[ptransleft] (q1) edge (q2);
      \draw[ptransright] (q1) edge (q3);
      \draw pic[draw,angle radius=.7cm] {angle=q3--q1--q2};
      \draw[dashed,-latex] (q1) edge (q4);
      \draw[-latex] (q4) edge node[below right] {$b$} (q3);
      \draw[-latex] (q2) edge[bend right] node[above left] {$a$} (q1);
    \end{tikzpicture}
    \caption{The PA $A_2$, recognising $a^{n}b^{n}$}\label{figure:problem-cfg}
\end{figure}
In this section, we characterise the class of languages accepted by finite PA, with no restrictions.
These turn out to be languages of pomsets generated by finite context-free grammars~\cite{chomsky-1956} using series-parallel terms.
A pomset automaton whose language is not rational is displayed in Figure~\ref{figure:problem-cfg}.

A \emph{context-free pomset grammar} $G$ (\emph{CFG}) is given by a triple $\tuple{\Gamma,S,R}$, where $\Gamma$ is a finite set of non-terminals, $S\in \Gamma$ is a distinguished start symbol, and $R$ is a finite set of production rules, i.e., pairs of a non-terminal and a term built out of sequential products, parallel products, and symbols chosen from $\Gamma\cup\Sigma\cup\set\epsilon$.
Using the production rules as usual starting from the symbol $S$, we define the pomset language $\sem G$ generated by a CFG~$G$.
A pomset language is called \emph{context-free} (\emph{CF}) if it is generated by some CFG\@.

\begin{theorem}%
\label{sec:auto-to-cfg}
A pomset language is accepted by a PA if and only if it is CF\@.
\end{theorem}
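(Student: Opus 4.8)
The plan is to prove the two directions separately, establishing that finite PAs and CFGs generate exactly the same pomset languages.

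\medskip
\textbf{From PAs to CFGs.} First I would take a finite PA $A = \tuple{Q, \delta, \gamma, F}$ and a designated start state $q$, and construct a CFG $G$ whose non-terminals are (essentially) the states of $A$. The idea is to read off the trace relation $\tracerel_A$ directly into production rules. Concretely, for each state $r$ I introduce a non-terminal $X_r$, and I make $X_q$ the start symbol. The accepting behaviour is captured by a rule $X_r \to \epsilon$ whenever $r \in F$. Each $\delta$-transition $\delta(r, a) = r'$ contributes a rule $X_r \to a \cdot X_{r'}$, reflecting that reading $a$ from $r$ continues as the language of $r'$. The subtle part is the parallel transition: whenever $\gamma(r, s, t) = r'$ with $r' \neq \bot$, the semantics of Definition~\ref{definition:automaton-language} says that from $r$ we may fork into two threads starting at $s$ and $t$, each of which must reach an accepting state, and then continue from $r'$. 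This suggests a rule of the shape $X_r \to (X_s \parallel X_t) \cdot X_{r'}$, where $X_s$ and $X_t$ stand for the \emph{accepted} languages $L_A(s)$ and $L_A(t)$. I would then argue by induction on the structure of the trace (using Lemma~\ref{lemma:trace-length} to decompose any trace into unit traces) that $U \in L_A(q)$ if and only if $X_q$ derives $U$ in $G$, so that $\sem{G} = L_A(q)$.

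\medskip
\textbf{From CFGs to PAs.} For the converse, I would take a CFG $G = \tuple{\Gamma, S, R}$ and build a finite PA recognising $\sem{G}$. Here the right-hand sides of production rules are series-parallel terms over $\Gamma \cup \Sigma \cup \set{\epsilon}$, so I first put $G$ into a normalised form in which every right-hand side is either $\epsilon$, a single terminal, a sequential product, or a parallel product of non-terminals, introducing fresh non-terminals to break apart nested terms (analogous to Chomsky normal form, but respecting both $\cdot$ and $\parallel$). The states of the PA would be built from sets or sequences of non-terminals representing "what remains to be derived", in the spirit of the syntactic automaton $A_\Sigma$ of Section~\ref{section:expressions-to-automata}: a $\delta$-transition consumes a leading terminal, and a $\gamma$-transition implements a parallel rule $X \to Y \parallel Z$ by forking into the states for $Y$ and $Z$. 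Finiteness of the state space follows from finiteness of $\Gamma$ together with the normalisation bounding the shape of intermediate states. I would then show $L_A(q_S) = \sem{G}$ by the same kind of trace-versus-derivation induction as before.

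\medskip
\textbf{Main obstacle.} The hard part will be the CFG-to-PA direction, specifically ensuring the state space stays \emph{finite}. The semantics of a $\gamma$-trace demands that each forked thread reach an \emph{accepting} state, and reassembling a context-free derivation (whose intermediate sentential forms can grow without bound) into finitely many PA states requires care: unlike the word case, one cannot simply track a stack, because parallel composition spawns independent subcomputations rather than a single linear continuation. The normalisation step must guarantee that every reachable "remainder" is one of finitely many non-terminals rather than an arbitrary unbounded term, and the correspondence between a branching derivation tree and the fork-structure of $\gamma$-traces must be matched exactly, including the handling of $\epsilon$-productions and the roles of $\bot$ and $\top$. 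I expect the bookkeeping in this direction to dominate the proof, whereas the PA-to-CFG direction is largely a direct transcription of the trace rules.
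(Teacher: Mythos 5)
Your automaton-to-grammar direction is essentially the paper's construction verbatim: non-terminals are states, $q \to a\cdot\delta(q,a)$ for sequential transitions, $q \to (r \parallel s)\cdot\gamma(q,r,s)$ for parallel ones, and $q \to \epsilon$ for accepting states. That half is fine.

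The grammar-to-automaton direction, however, has a genuine gap, and it is exactly the one you flag yourself without resolving. Your plan is to normalise the grammar and take as states ``sets or sequences of non-terminals representing what remains to be derived,'' hoping that normalisation bounds these remainders. It does not: a normalised rule $X \to Y \cdot Z$ with $Y$ a non-terminal still forces you to run the entire (unboundedly deep) derivation of $Y$ before continuing with $Z$, and recording the pending continuation in the state is the pushdown stack in disguise --- which, as you correctly observe, cannot be flattened into finitely many states by any amount of normal-form massaging. The missing idea is that the $\gamma$-transition itself already provides the call/return mechanism, so no stack in the state space is needed. The paper takes as states the (finite) set of subterms of right-hand sides, together with $\Gamma$ and $\set{\top,\bot}$, and sets $\gamma(s \cdot t,\, s,\, \top) := t$: from the state $s\cdot t$ one \emph{forks} into a thread running $s$ to acceptance and a dummy thread at $\top$ (which contributes only the empty pomset, so the parallel label collapses to $s$'s pomset), and the join target $t$ serves as the return address. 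Likewise $\gamma(X, s, \top) := \top$ for each rule $X \to s$ implements non-terminal expansion. Since $\gamma$-traces may be nested arbitrarily deeply in the inductive definition of $\tracerel_A$, the recursion of the derivation tree is absorbed into the trace structure rather than the state space, and finiteness is immediate from finiteness of the subterm set. Without this device (or an equivalent one), your construction does not go through.
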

\begin{proof}
  The automaton to grammar direction is straightforward.
  Given a finite PA~$A=\tuple{Q,\delta,\gamma,F}$ and $q_0\in Q$, we will build a CFG $G_{A,q_0}$ with non-terminals~$Q$ and start symbol $q_0$.
  For every state $q$ and letter $a\in\Sigma$ we produce a rule $q\to a\cdot\delta(q,a)$; we add for every triple of states $(q,r,s)$ a production $q\to (r\parallel s)\cdot\gamma(q,r,s)$; finally for every accepting state $q \in F$ we add a rule~$q\to\epsilon$.
  The fact that $\sem{G_{A,q_0}}=L_A(q_0)$ is straightforward from this definition: clearly, there is a correspondence between the accepting runs of~$A$ starting from $q_0$ and the derivations in $G_{A,q_0}$.

  We now construct an automaton from a CFG $G=\tuple{\Gamma,S,R}$ to recognise $\sem{G}$.
  Let $\mathcal T$ be the set of subterms of the right-hand sides of rules in $R$, $s,t$ will range over $\mathcal T$ in the following;
  this set is clearly finite.
  We define a PA $A_G=\tuple{\Gamma\cup \mathcal T\cup\set{\top,\bot},\delta,\gamma,\set{\top,\epsilon}}$, such that $L_A(S)=\sem G$ where:
  \begin{mathpar}
    \delta(a,a):=\top;\and
    \gamma(s\parallel t,s,t):=\top;
    \\
    \gamma(s\cdot t,s,\top):=t;\and
    \gamma(X,s,\top):=
      \begin{cases}
        \top    & \mbox{if}\ X \to s \in R \\
        \bot    & \mbox{otherwise}
      \end{cases}
  \end{mathpar}
  We complete $\delta$ and $\gamma$ into functions by assigning every undefined value to $\bot$.
  There is a straightforward correspondence between runs in $A_G$ and derivations from $G$, therefore they are language equivalent.
\end{proof}
As usual when in the presence of context-free languages~\cite{bar-hillel-perles-shamir-1961,greibach-1968}, we obtain a host of undecidability results for PA with no restriction other than finiteness; we call out two important ones below.
\begin{corollary}
Let $A$ be a PA\@.
The following are undecidable:
\begin{enumerate}[(i)]
    \setlength{\itemsep}{0em}
    \item
    Given states $q$ and $q'$ of $A$, does $L_A(q) = L_A(q')$ hold?
    \item
    Given a state $q$ of $A$, is $L_A(q)$ an spr-language?
\end{enumerate}
\end{corollary}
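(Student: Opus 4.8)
The plan is to obtain both statements by reduction from classical undecidability theorems about context-free \emph{word} languages, exploiting the two-way correspondence between PA-recognizable and context-free pomset languages established in Theorem~\ref{sec:auto-to-cfg}. The key preliminary observation, which I would state explicitly, is that ordinary word grammars sit inside pomset grammars: a CFG whose production rules use only sequential composition (and never $\parallel$) generates a language of totally ordered pomsets, which we identify with words over $\Sigma$. Hence every classical undecidable property of word CFGs transfers to pomset CFGs, and—via the effective construction $G \mapsto A_G$ of Theorem~\ref{sec:auto-to-cfg}—to the languages recognized at states of finite PAs.

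For part (i) I would reduce from the equivalence problem for context-free word grammars, which is undecidable~\cite{bar-hillel-perles-shamir-1961}. Given two word CFGs $G_1 = \tuple{\Gamma_1, S_1, R_1}$ and $G_2 = \tuple{\Gamma_2, S_2, R_2}$, apply Theorem~\ref{sec:auto-to-cfg} to obtain finite PAs $A_{G_1}$ and $A_{G_2}$ with $L_{A_{G_1}}(S_1) = \sem{G_1}$ and $L_{A_{G_2}}(S_2) = \sem{G_2}$. Form their disjoint union $A$, identifying only the shared states $\bot$ and $\top$ and setting $\gamma(q,r,s) = \bot$ whenever the arguments straddle the two components; a routine check shows that all traces out of $S_1$ (resp.\ $S_2$) remain within the $A_{G_1}$ (resp.\ $A_{G_2}$) part, so their languages are unchanged. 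Then $L_A(S_1) = L_A(S_2)$ holds iff $\sem{G_1} = \sem{G_2}$, and a decision procedure for state-equivalence would decide word-CFG equivalence, a contradiction.

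For part (ii) I would reduce from the undecidability of \emph{regularity} of context-free word languages, one of the classical properties covered by~\cite{greibach-1968}. Given a word CFG $G$, build the finite PA $A_G$ with $L_{A_G}(S) = \sem{G}$ via Theorem~\ref{sec:auto-to-cfg}. Since $\sem{G}$ consists of words, Lemma~\ref{lemma:proper-extension} guarantees that if $\sem{G}$ is an spr-language then it is rational; conversely every rational language is an spr-language, because a rational expression is in particular an element of $\terms$ on which $\sem{-}$ restricts to the usual language semantics. Thus $L_{A_G}(S)$ is an spr-language iff the context-free language $\sem{G}$ is regular, and a decision procedure for spr-recognizability at a state would decide regularity of CFLs, again impossible.

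The main obstacle I anticipate is not conceptual but a matter of keeping the reductions scrupulously inside the word fragment: I must confirm that the construction of Theorem~\ref{sec:auto-to-cfg} is effective (it is, being a finite syntactic transformation) and that, when the source grammar uses only sequential composition, the resulting PA recognizes exactly the intended word language rather than some larger pomset language—immediate from the embedding observation, but worth spelling out. A secondary point requiring care is the ``every rational language is spr'' direction needed in part (ii), which relies on the fact that rational expressions embed into $\terms$ with their semantics preserved, so that the biconditional between spr-recognizability and regularity is exact rather than merely one-sided.
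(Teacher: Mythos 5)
Your proposal is correct and follows essentially the same route as the paper, which obtains the corollary by transferring the classical undecidability results for context-free word languages (equivalence~\cite{bar-hillel-perles-shamir-1961} and regularity~\cite{greibach-1968}) through the effective grammar-to-automaton construction of Theorem~\ref{sec:auto-to-cfg}; the paper merely leaves implicit the details you spell out (the embedding of word CFGs as pomset CFGs, the disjoint union for part (i), and the use of Lemma~\ref{lemma:proper-extension} to turn ``spr-language'' into ``regular'' for part (ii)). Your added care about the exactness of the biconditional in part (ii) is a worthwhile clarification but not a deviation from the paper's argument.
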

The second undecidability result justifies our ``well-nestedness'' condition for the automata-to-expression direction of our Kleene theorem, since one needs strict restrictions on PA to guarantee the rationality of its language.

\section{Related work}%
\label{section:related-work}

If a PA is fork-acyclic in the sense of~\cite{kappe-brunet-luttik-silva-zanasi-2017}, then it is well-nested; thus, finitely supported and well-nested PAs are a superset of the PAs considered in~\cite{kappe-brunet-luttik-silva-zanasi-2017}.
Relaxing fork-acyclicity to well-nestedness is necessary for finitely supported pomset automata to capture spr-expressions that contain $\dagger$.

Lodaya and Weil proposed another automaton formalism for pomsets, called \emph{branching automata}~\cite{lodaya-weil-2000}.
These define states where parallelism can start (\emph{fork}) or end (\emph{join}) in two relations; pomset automata condense this information in a single function.
In op.\ cit., we also find a translation of spr-expressions to branching automata, based on Thompson's construction~\cite{thompson-1968}, which relies on the fact that transitions of branching automata are encoded as \emph{relations}.
Our Brzozowski-style~\cite{brzozowski-1964} translation, in contrast, constructs transition \emph{functions} from the expressions.
Lastly, translation of branching automata to series-parallel expressions in~\cite{lodaya-weil-2000} is sound only for a \emph{semantically} restricted class of automata, whereas our restriction is \emph{structural}.

Jipsen and Moshier~\cite{jipsen-moshier-2016} provided an alternative formulation of the automata proposed by Lodaya and Weil, also called \emph{branching automata}.
Their method to encode parallelism is conceptually dual to pomset automata: branching automata distinguish based on the target states of traces to determine the join state, whereas pomset automata distinguish based on the origin states of traces.
The translations of series-parallel expressions to branching automata and vice versa suffer from the same shortcomings as those by Lodaya and Weil, i.e., transition relations rather than functions and a semantic restriction on automata for the translation of automata to expressions.

\emph{Series-rational expressions} are spr-expressions that do not use the connective $\dagger$.
Lodaya and Weil described~\cite{lodaya-weil-2000} a fragment of their branching automata whose languages match series-rational languages, and whose behaviour corresponds to $1$-safe Petri nets.
This fragment can be matched with a fragment of pomset automata (discussed in~\cite{kappe-brunet-luttik-silva-zanasi-2017}).

We opted to treat semantics of spr-expressions in terms of automata instead of Petri nets to find more opportunities to extend to a coalgebraic treatment.
The present paper does not reach this goal, but we believe that our formulation in terms of states and transition functions offers some hope of getting there.
On the other hand, the Petri net perspective allows for equivalence-checking algorithms.
Brunet, Pous, and Struth~\cite{brunet-pous-struth-2017} used $1$-safe Petri nets to provide an algorithm to compare sr-expressions.
They also provided an algorithm to compare such expressions with respect to another semantics, the \emph{downward-closed} semantics.
Both these algorithms run in exponential space, and the second problem was shown in op.\ cit.\ to be complete for this complexity class.

\emph{Petri automata}~\cite{brunet-pous-2017} are yet another class of automata for series-rational languages.
These Petri net-based automata recognise the languages of series-parallel graphs that can be denoted by series-rational expressions.
The connection between series-parallel pomsets and series-parallel graphs is achieved through duality: a series-parallel pomset is the pomset of edges of some series-parallel graph, and vice-versa.
For CKA the pomset point of view is convenient because the exchange law may be expressed much more naturally on pomsets than on graphs.
On the other hand, Petri automata were introduced to investigate another class of algebras, namely \emph{Kleene allegories}~\cite{brunet-pous-2015}, where the free semantics is expressed in terms of graph homomorphism, making the graph view the natural choice.

Also related are \emph{parenthesising automata} as proposed by \'{E}sik and N\'{e}meth in~\cite{esik-nemeth-2004}, which recognise \emph{series-parallel $n$-posets}, a generalisation of words where events are partially ordered by $n$ partial orders.
Like sp-pomsets, series-parallel $2$-posets can be composed using two associative operators, but unlike sp-pomsets, the ``parallel'' composition operator is not commutative.
This does not rule out a connection to programs with parallelism, but it does require specialisation of the model.
On the other hand, parenthesising automata have an advantage over pomset automata in that they are pleasingly symmetric in how composition operators are treated, which simplifies a lot of proofs.
The correspondence between automata and expressions described in op.\ cit.\ also requires restricting the class of automata.
Unlike our work, however, this restriction tightly characterises the automata for which the translation is possible, and is furthermore decidable.
Since parenthesising automata cannot, in general, recognise context-free languages, this does not contradict our earlier remarks about decidability of such a property.

Prisacariu introduced \emph{Synchronous Kleene Algebra} (SKA)~\cite{prisacariu-2010}, extending Kleene Algebra with a \emph{synchronous composition} operator.
SKA differs from our model in that it assumes that all basic actions are performed in unit time, and that actors responsible for individual actions never idle.
In contrast, our (BKA-like) model makes no synchrony assumptions: expressions can be composed in parallel, and the relative timing of basic actions within those expressions is irrelevant for the semantics.
Prisacariu axiomatised SKA and extended it to \emph{Synchronous Kleene Algebra with Tests} (SKAT); others proposed Brzozowski-style derivatives of SKA- and SKAT-expressions~\cite{broda-cavadas-ferreira-moreira-2015}.

\section{Further work}%
\label{section:further-work}

Language equivalence of rational expressions can be axiomatised using Kleene's original theorem~\cite{kozen-1994}.%
\footnote{A similar result exists for spr-expressions~\cite{laurence-struth-2014}, but this does not rely on Kleene's theorem for canonicalisation.}
More precisely, the proof in op.\ cit.\ relies on encoding a minimised finite automaton for a rational expression back into a rational expression (using both directions of Kleene's theorem) to obtain an equivalent canonical representation.
We hope to apply the work put forward in the present paper to axiomatise spr-expressions in the same fashion.
In particular, the correspondence of expressions to states and the structural nature of well-nestedness may prove useful in validating such a canonicalisation.
For this technique to work, one would need to devise a canonical form for PAs, analogous to the minimal finite automaton.

A different result axiomatises equivalence of sr-expressions (i.e., spr-expressions without the parallel star) with respect to the downward-closed pomset semantics~\cite{laurence-struth-2017-arxiv,kappe-brunet-silva-zanasi-2018}.
The algorithm in~\cite{kappe-brunet-silva-zanasi-2018} for constructing the downward closure of an sr-expression is particularly relevant as it can be used to extend the direction from expressions to automata of our Kleene theorem.
More precisely, it establishes pomset automata as an operational model for \emph{weak CKA}, that is, BKA without the parallel star $\dagger$ but with the exchange law.
Extending the result even further to spr-expressions is not possible with the methodology used in~\cite{kappe-brunet-silva-zanasi-2018} or~\cite{laurence-struth-2017-arxiv}, see the conclusions of~\cite{kappe-brunet-silva-zanasi-2018}.

Brzozowski derivatives for classic rational expressions induce a coalgebra on rational expressions that corresponds to a finite automaton.
We aim to study spr-expressions coalgebraically.
The first step would be to find the coalgebraic analogue of pomset automata such that language acceptance is characterised by the homomorphism into the final coalgebra.
Ideally, such a view of pomset automata would give rise to a decision procedure for equivalence of spr-expressions based on coalgebraic bisimulation-up-to~\cite{rot-bonsangue-rutten-2013}.

Rational expressions can be extended with tests to reason about imperative programs equationally~\cite{kozen-1997}.
In the same vein, one can extend sr-expressions with tests~\cite{jipsen-2014,jipsen-moshier-2016} to reason about parallel imperative programs equationally.
We are particularly interested in employing such an extension to extend the network specification language NetKAT~\cite{anderson-foster-guha-etal-2014} with primitives for concurrency so as to model and reason about concurrency within networks.

\bibliographystyle{elsarticle-num}
\bibliography{bibliography}

\begin{thebibliography}{10}
\expandafter\ifx\csname url\endcsname\relax
  \def\url#1{\texttt{#1}}\fi
\expandafter\ifx\csname urlprefix\endcsname\relax\def\urlprefix{URL }\fi
\expandafter\ifx\csname href\endcsname\relax
  \def\href#1#2{#2} \def\path#1{#1}\fi

\bibitem{kappe-brunet-luttik-silva-zanasi-2017}
T.~Kapp\'e, P.~Brunet, B.~Luttik, A.~Silva, F.~Zanasi, Brzozowski goes
  concurrent --- a {K}leene theorem for pomset languages, in: Proc. Concurrency
  Theory (CONCUR), 2017, pp. 25:1--25:16.
\newblock \href {http://dx.doi.org/\url{10.4230/LIPIcs.CONCUR.2017.25}}
  {\path{doi:\url{10.4230/LIPIcs.CONCUR.2017.25}}}.

\bibitem{hoare-moeller-struth-wehrman-2009}
T.~Hoare, B.~M{\"{o}}ller, G.~Struth, I.~Wehrman, Concurrent {K}leene algebra,
  in: Proc. Concurrency Theory (CONCUR), 2009, pp. 399--414.
\newblock \href {http://dx.doi.org/\url{10.1007/978-3-642-04081-8_27}}
  {\path{doi:\url{10.1007/978-3-642-04081-8_27}}}.

\bibitem{kleene-1956}
S.~C. Kleene, Representation of events in nerve nets and finite automata,
  Automata Studies (1956) 3--41.

\bibitem{kozen-1997}
D.~Kozen, {K}leene algebra with tests, {ACM} Trans. Program. Lang. Syst. 19~(3)
  (1997) 427--443.
\newblock \href {http://dx.doi.org/\url{10.1145/256167.256195}}
  {\path{doi:\url{10.1145/256167.256195}}}.

\bibitem{hopcroft-karp-1971}
J.~E. Hopcroft, R.~M. Karp, A linear algorithm for testing equivalence of
  finite automata, Tech. Rep. TR71-114 (December 1971).

\bibitem{lodaya-weil-2000}
K.~Lodaya, P.~Weil, Series-parallel languages and the bounded-width property,
  Theoretical Computer Science 237~(1) (2000) 347--380.
\newblock \href {http://dx.doi.org/\url{10.1016/S0304-3975(00)00031-1}}
  {\path{doi:\url{10.1016/S0304-3975(00)00031-1}}}.

\bibitem{jipsen-moshier-2016}
P.~Jipsen, M.~A. Moshier, Concurrent {K}leene algebra with tests and branching
  automata, J. Log. Algebr. Meth. Program. 85~(4) (2016) 637--652.
\newblock \href {http://dx.doi.org/\url{10.1016/j.jlamp.2015.12.005}}
  {\path{doi:\url{10.1016/j.jlamp.2015.12.005}}}.

\bibitem{brzozowski-1964}
J.~A. Brzozowski, Derivatives of regular expressions, J. {ACM} 11~(4) (1964)
  481--494.
\newblock \href {http://dx.doi.org/\url{10.1145/321239.321249}}
  {\path{doi:\url{10.1145/321239.321249}}}.

\bibitem{thompson-1968}
K.~Thompson, Regular expression search algorithm, Commun. {ACM} 11~(6) (1968)
  419--422.
\newblock \href {http://dx.doi.org/\url{10.1145/363347.363387}}
  {\path{doi:\url{10.1145/363347.363387}}}.

\bibitem{laurence-struth-2014}
M.~R. Laurence, G.~Struth, Completeness theorems for bi-{K}leene algebras and
  series-parallel rational pomset languages, in: Proc. Relational and Algebraic
  Methods in Computer Science ({RAMiCS}), 2014, pp. 65--82.
\newblock \href {http://dx.doi.org/\url{10.1007/978-3-319-06251-8_5}}
  {\path{doi:\url{10.1007/978-3-319-06251-8_5}}}.

\bibitem{chomsky-1956}
N.~Chomsky, Three models for the description of language, IRE Transactions on
  Information Theory 2~(3) (1956) 113--124.
\newblock \href {http://dx.doi.org/\url{10.1109/TIT.1956.1056813}}
  {\path{doi:\url{10.1109/TIT.1956.1056813}}}.

\bibitem{gischer-1988}
J.~L. Gischer, The equational theory of pomsets, Theor. Comput. Sci. 61 (1988)
  199--224.
\newblock \href {http://dx.doi.org/\url{10.1016/0304-3975(88)90124-7}}
  {\path{doi:\url{10.1016/0304-3975(88)90124-7}}}.

\bibitem{grabowski-1981}
J.~Grabowski, On partial languages, Fundam. Inform. 4~(2) (1981) 427.

\bibitem{esik-nemeth-2004}
Z.~{\'{E}}sik, Z.~L. N{\'{e}}meth, Higher dimensional automata, Journal of
  Automata, Languages and Combinatorics 9~(1) (2004) 3--29.

\bibitem{conway-1971}
J.~H. Conway, Regular Algebra and Finite Machines, Chapman and Hall, Ltd.,
  London, 1971.

\bibitem{mcnaughton-yamada-1960}
R.~McNaughton, H.~Yamada, Regular expressions and state graphs for automata,
  {IRE} Trans. Electronic Computers 9~(1) (1960) 39--47.
\newblock \href {http://dx.doi.org/\url{10.1109/TEC.1960.5221603}}
  {\path{doi:\url{10.1109/TEC.1960.5221603}}}.

\bibitem{bar-hillel-perles-shamir-1961}
Y.~Bar-Hillel, M.~Perles, E.~Shamir, On formal properties of simple phrase
  structure grammars, Sprachtypologie und Universalienforschung 14 (1961)
  143--172.

\bibitem{greibach-1968}
S.~Greibach, A note on undecidable properties of formal languages, Mathematical
  systems theory 2~(1) (1968) 1--6.
\newblock \href {http://dx.doi.org/\url{10.1007/BF01691341}}
  {\path{doi:\url{10.1007/BF01691341}}}.

\bibitem{brunet-pous-struth-2017}
P.~Brunet, D.~Pous, G.~Struth, {O}n {D}ecidability of {C}oncurrent {K}leene
  {A}lgebra, in: Proc. Concurrency Theory (CONCUR), 2017, pp. 28:1--28:15.
\newblock \href {http://dx.doi.org/\url{10.4230/LIPIcs.CONCUR.2017.28}}
  {\path{doi:\url{10.4230/LIPIcs.CONCUR.2017.28}}}.

\bibitem{brunet-pous-2017}
P.~Brunet, D.~Pous, {P}etri {A}utomata, Logical Methods in Computer Science
  13~(3).
\newblock \href {http://dx.doi.org/\url{10.23638/LMCS-13(3:33)2017}}
  {\path{doi:\url{10.23638/LMCS-13(3:33)2017}}}.

\bibitem{brunet-pous-2015}
P.~Brunet, D.~Pous, {P}etri {A}utomata for {K}leene {A}llegories, in: Proc.
  Logic in Computer Science (LiCS), 2015, pp. 68--79.
\newblock \href {http://dx.doi.org/\url{10.1109/LICS.2015.17}}
  {\path{doi:\url{10.1109/LICS.2015.17}}}.

\bibitem{prisacariu-2010}
C.~Prisacariu, Synchronous {K}leene algebra, J. Log. Algebr. Program. 79~(7)
  (2010) 608--635.
\newblock \href {http://dx.doi.org/\url{10.1016/j.jlap.2010.07.009}}
  {\path{doi:\url{10.1016/j.jlap.2010.07.009}}}.

\bibitem{broda-cavadas-ferreira-moreira-2015}
S.~Broda, S.~Cavadas, M.~Ferreira, N.~Moreira, Deciding synchronous {K}leene
  algebra with derivatives, in: Proc. Implementation and Appli\-cation of
  Automata ({CIAA}), 2015, pp. 49--62.
\newblock \href {http://dx.doi.org/\url{10.1007/978-3-319-22360-5_5}}
  {\path{doi:\url{10.1007/978-3-319-22360-5_5}}}.

\bibitem{kozen-1994}
D.~Kozen, A completeness theorem for {K}leene algebras and the algebra of
  regular events, Inf. Comput. 110~(2) (1994) 366--390.
\newblock \href {http://dx.doi.org/\url{10.1006/inco.1994.1037}}
  {\path{doi:\url{10.1006/inco.1994.1037}}}.

\bibitem{laurence-struth-2017-arxiv}
M.~R. Laurence, G.~Struth, Completeness theorems for pomset languages and
  concurrent {K}leene algebras.
\newblock \href {http://arxiv.org/abs/abs/1705.05896}
  {\path{arXiv:abs/1705.05896}}.

\bibitem{kappe-brunet-silva-zanasi-2018}
T.~Kapp{\'{e}}, P.~Brunet, A.~Silva, F.~Zanasi, Concurrent {K}leene algebra:
  Free model and completeness, in: Proc. European Symp. on Programming (ESOP),
  2018, pp. 856--882.
\newblock \href {http://dx.doi.org/\url{10.1007/978-3-319-89884-1\_30}}
  {\path{doi:\url{10.1007/978-3-319-89884-1\_30}}}.

\bibitem{rot-bonsangue-rutten-2013}
J.~Rot, M.~M. Bonsangue, J.~J. M.~M. Rutten, Coalgebraic bisimulation-up-to,
  in: Proc. Current Trends in Theory and Practice of Comp. Sci. ({SOFSEM}),
  2013, pp. 369--381.
\newblock \href {http://dx.doi.org/\url{10.1007/978-3-642-35843-2_32}}
  {\path{doi:\url{10.1007/978-3-642-35843-2_32}}}.

\bibitem{jipsen-2014}
P.~Jipsen, Concurrent {K}leene algebra with tests, in: Proc. Relational and
  Algebraic Methods in Computer Science ({RAMiCS}) 2014, 2014, pp. 37--48.
\newblock \href {http://dx.doi.org/\url{10.1007/978-3-319-06251-8_3}}
  {\path{doi:\url{10.1007/978-3-319-06251-8_3}}}.

\bibitem{anderson-foster-guha-etal-2014}
C.~J. Anderson, N.~Foster, A.~Guha, J.~Jeannin, D.~Kozen, C.~Schlesinger,
  D.~Walker, {NetKAT}: semantic foundations for networks, in: Proc. Principles
  of Programming Languages ({POPL}), 2014, pp. 113--126.
\newblock \href {http://dx.doi.org/\url{10.1145/2535838.2535862}}
  {\path{doi:\url{10.1145/2535838.2535862}}}.

\end{thebibliography}

\newpage

\appendix

\section{Proofs about pomset automata}

\tracetopbottom*
\begin{proof}
The proof proceeds by induction on the construction of $\tracerel_A$.
In the base, $U = a$ for some $a \in \Sigma$, and $q' = \delta(q, a)$.
It follows that $q' = \bot$.

For the inductive step, there are two cases to consider.
\begin{itemize}
    \item
    If $q \atrace{U}_A q'$ because $U = V \cdot W$ and there exists a $q'' \in Q$ such that $q \atrace{V}_A q''$ and $q'' \atrace{W}_A q'$, then we can assume without loss of generality that at least one of these traces is non-trivial --- if this were not the case, then $q = q'' = q$ and $V = W = 1$, meaning that $q = q'$ and $U = 1$, and so $q \atrace{U}_A q'$ would be trivial as well.

    If, on the one hand, $q \atrace{V}_A q''$ is non-trivial, then $q'' = \bot$ by induction.
    In that case, $q' = \bot$, regardless of whether $q'' \atrace{W}_A q'$ is trivial.
    If, on the other hand, $q'' \atrace{W}_A q'$ is non-trivial, then $q' = \bot$, also by induction.

    \item
    If $q \atrace{U}_A q'$ is a $\gamma$-trace, i.e., $U = V \parallel W$ and there exist $r, s \in Q$ as well as $r', s' \in Q$ such that $r \atrace{V}_A r'$ and $s \atrace{W}_A s'$ and $\gamma(q, r, s) = q'$, then $q' = \bot$ immediately.
    \qedhere
\end{itemize}
\end{proof}

\restrictfinitelysupportedautomaton*
\begin{proof}
We define $A_q$ as the automaton $\tuple{\pi_A(q), \delta, \gamma, F \cap \pi_A(q)}$.
Here, $\delta$ and $\gamma$ are well-defined as functions of type $\pi_A(q) \times \Sigma \to \pi_A(q)$ and ${\pi_A(q)}^3 \to \pi_A(q)$ respectively, by definition of $\pi_A(q)$.
Furthermore, since $A$ is finitely supported, we know that $A_q$ has finitely many states.
It remains to show that $L_{A_q}(q) = L_A(q)$.

For the inclusion from left to right, we prove the more general claim that if $r \atrace{U}_{A_q} r'$ with $r' \neq \bot$, then $r \atrace{U}_A r'$.
The proof proceeds by induction on the construction of $\tracerel_{A_q}$.
In the base, there are two cases to consider.
\begin{itemize}
    \item
    If $U = 1$ and $r = r'$, then $r \atrace{U}_A r'$ immediately.

    \item
    If $U = a$ for some $a \in \Sigma$ and $r' = \delta(r, a)$, then it follows that $r \atrace{U}_A r'$.
\end{itemize}

\noindent
For the inductive step, there are two cases to consider.
\begin{itemize}
    \item
    Suppose that $r \atrace{U}_{A_q} r'$ because $U = V \cdot W$ and there exists an $r'' \in \pi_A(q)$ with $r \atrace{V}_{A_q} r''$ and $r'' \atrace{W}_{A_q} r'$.
    Since $r' \neq \bot$, also $r'' \neq \bot$ by Lemma~\ref{lemma:trace-top-bottom}.
    By induction, we find that $r \atrace{V}_A r'' \atrace{W}_A r'$, and thus $r \atrace{U}_A r'$.

    \item
    Suppose that $r \atrace{U}_{A_q} r'$ because $U = V \parallel W$ and there exist $s, t \in \pi_A(q)$ and $s', t' \in \pi_A(q) \cap F$ such that $s \atrace{V}_{A_q} s'$ and $t \atrace{W}_{A_q} t'$ and $\gamma(r, s, t) = r'$.
    First, note that $s', t' \neq \bot$, since $s', t' \in F$.
    By induction, we find that $s \atrace{V}_A s' \in F$ and $t \atrace{W}_A t' \in F$.
    We can then conclude that $q \atrace{U}_A q'$.
\end{itemize}

\noindent
For the other inclusion, we prove the more general claim that if $r \in \pi_A(q)$ and $r \atrace{U}_A r'$ with $r' \neq \bot$, then $r' \in \pi_A(q)$ and $r \atrace{U}_{A_q} r'$.
We proceed by induction on the construction of $\tracerel_A$.
In the base, there are two cases to consider.
\begin{itemize}
    \item
    If $U = 1$ and $r = r'$, then $r' \in \pi_A(q)$ and $r \atrace{U}_{A_q} r'$ immediately.

    \item
    If $U = a$ for some $a \in \Sigma$, and $r' = \delta(r, a)$, then note that $r' \in \pi_A(q)$ by definition of $\pi_A(q)$.
    Furthermore, we find that $r \atrace{U}_{A_q} r'$.
\end{itemize}

\noindent
For the inductive step, there are two cases to consider.
\begin{itemize}
    \item
    Suppose $r \atrace{U}_A r'$ because $U = V \cdot W$, and there exists an $r'' \in Q$ such that $r \atrace{V}_A r''$ and $r'' \atrace{W}_A r'$.
    By induction, we then find that $r'' \in \pi_A(q)$ and $r \atrace{V}_{A_q} r''$.
    Again by induction, we also find that $r' \in \pi_A(q)$ and $r'' \atrace{W}_{A_q} r'$.
    We then conclude that $r \atrace{U}_{A_q} r'$.

    \item
    Suppose $r \atrace{U}_A r'$ because $U = V \parallel W$, and there exist $s, t \in Q$ and $s', t' \in F$ such that $s \atrace{V}_A s'$ and $t \atrace{W}_A t'$ and $\gamma(r, s, t) = r'$.
    By the premise that $r' \neq \bot$ we have that $s, t \preceq_A q$, and thus $s, t \in \pi_A(q)$.
    By induction, we then find that $s', t' \in \pi_A(q)$, and $s \atrace{V}_{A_q} s'$ as well as $t \atrace{W}_{A_q} t'$.
    We can then conclude that $r \atrace{U}_{A_q} r'$.
    \qedhere
\end{itemize}
\end{proof}

\finitelysupportedimplieswellfounded*
\begin{proof}
Suppose, towards a contradiction, that $\set{q_n}_{n \in \naturals} \subseteq Q$ is such that for $n \in \naturals$ it holds that $q_{n+1} \prec_A q_n$.
Since $\set{q_n}_{n \in \naturals} \subseteq \pi_A(q_0)$ and the latter is finite, it follows that $q_n = q_m$ for some $n > m$.
But then we find that, $q_n \prec_A q_m$, which contradicts that $\prec_A$ is a strict order, and therefore irreflexive.
\end{proof}

\tracelength*
\begin{proof}
The proof proceeds by induction on the construction of $q \atrace{U}_A q'$.
In the base, there are two cases to consider.
\begin{itemize}
    \item
    If $q \atrace{U}_A q'$ is a trivial trace, then the claim holds immediately; simply choose $\ell = 0$ and $q_0 = q = q'$.

    \item
    If $q \atrace{U}_A q'$ is a $\delta$-trace, i.e., $U = a$ for some $a \in \Sigma$ and $q' = \delta(q, a)$, then choose $\ell = 1$ and $U_0 = U = a$ to satisfy the claim.
\end{itemize}

\noindent
In the inductive step, there are again two cases to consider.
\begin{itemize}
    \item
    Suppose that $q \atrace{U}_A q'$ because $U = V \cdot W$ and there exists a $q'' \in Q$ such that $q \atrace{V}_A q''$ and $q'' \atrace{W}_A q'$.
    By induction, we find $q_0, \dots, q_n \in Q$ with $q_0 = q$ and $q_n = q''$, and $V = V_0 \cdots V_{n-1}$ such that for $0 \leq i < \ell'$ it holds that $q_i \atrace{V_i}_A q_{i+1}$.
    Also by induction, we find $q_0', \dots, q_m'$ with $q_0' = q''$ and $q_m' = q'$, and $W = W_0 \cdots W_{m-1}$ such that for $0 \leq i < \ell''$ it holds that $q_i' \atrace{W_i}_A q_{i+1}'$.
    We then choose for $0 \leq i < \ell'$ that $U_i = V_i$, and for  $\ell' \leq i < \ell' + \ell''$ that $q_{i+\ell'} = q_i'$ and $U_{i+\ell'} = W_i$ to satisfy the claim.

    \item
    Suppose that $q \atrace{U}_A q'$ is a $\gamma$-trace, i.e., $U = V \parallel W$ and there exist $r, s \in Q$ and $r', s' \in F$ such that $r \atrace{V}_A r'$ and $s \atrace{W}_A s'$ and $\gamma(q, r, s) = q'$.
    In that case, we can choose $n = 1$ and $U_0 = U = V \parallel W$ to satisfy the claim.
    \qedhere
\end{itemize}
\end{proof}

\section{Automata to expressions}

\recursivestatelanguage*
\begin{proof}
For brevity, we write $L'$ for the right-hand side of the claimed equality.

For the inclusion from left to right, we prove the more general claim that if $q \atrace{U}_A q'$ for some $q' \in F$, then $U \in L'$ and furthermore $q' \in \set{\top, q}$.
The proof proceeds by induction on the construction of $\tracerel_A$.
In the base, there are two cases to consider.
\begin{itemize}
    \item
    If $q \atrace{U}_A q'$ because $U = 1$ and $q = q'$, then the claim follows immediately.

    \item
    If $q \atrace{U}_A q'$ because $U = a$ for some $a \in \Sigma$ and $q' = \delta(q, a)$, then $q' = \bot$ by the premise that $q$ is recursive.
    Therefore, we can disregard this case, because it contradicts the premise that $q' \in F$.
\end{itemize}

\noindent
For the inductive step, there are again two cases to consider.
\begin{itemize}
    \item
    If $q \atrace{U}_A q'$ because $U = V \cdot W$ and there exists a $q'' \in Q$ with $q \atrace{V}_A q''$ and $q'' \atrace{W}_A q'$, then there are two subcases to consider.
    \begin{itemize}
        \item
        If $q \atrace{V}_A q''$ is trivial, then the claim follows by applying the induction hypothesis to $q = q'' \atrace{W}_A q'$, noting that $U = V \cdot W = W$.

        \item
        If $q \atrace{V}_A q''$ is non-trivial, then $q'' \in \{ \bot, \top \}$ by the premise that $q$ is recursive.
        Since $q' \in F$, it then follows that $q'' = \bot$ and $q'' \atrace{W}_A q'$ is trivial, by Lemma~\ref{lemma:trace-top-bottom}.
        The claim then follows by applying the induction hypothesis to $q \atrace{V}_A q''$, since $U = V \cdot W = V$.
    \end{itemize}

    \item
    If $q \atrace{U}_A q'$ because $U = V \parallel W$ and there exist $r, s \in Q$ and $r', s' \in F$ such that $r \atrace{V}_A r'$ and $s \atrace{W}_A s'$ and $\gamma(q, r, s) = q'$, then, since $q' \neq \bot$, it follows that $q' = \top$ and $s = q$ by the premise that $q$ is recursive.
    By induction, we then know that $W \in L'$.
    Furthermore, $V \in L_A(r)$ by definition.
    Consequently, $U = V \parallel W \in L_A(r) \parallel L' \subseteq L'$.
\end{itemize}

\noindent
For the inclusion from right to left, let $U \in L'$.
Then $U = U_0 \parallel \dots \parallel U_{n-1}$ and for $0 \leq i < n$ there exists an $r_i \in Q$ such that $\gamma(q, r_i, q) = \top$ and $U_i \in L_A(r_i)$.
We need to prove that $U \in L_A(q)$, which we do by induction on $n$.
In the base, where $n = 0$, we have that $U = 1$, and thus $U \in L_A(q)$ immediately, for $q \in F$.
For the inductive step, assume that $n > 0$ and that the claim holds for $n-1$; then $U' = U_1 \parallel \dots \parallel U_{n-1} \in L_A(q)$ by induction.
We thus find a $q' \in F$ such that $q \atrace{U'}_A q'$.
Furthermore, since $U_0 \in L_A(r_0)$, we find an $r_0' \in F$ such that $r_0 \atrace{U_0} r_0'$.
We then know that $q \atrace{U_0 \parallel U'} \gamma(q, r_0, q) = \top$, and hence $U = U_0' \parallel U' \in L_A(q)$.
\end{proof}

\auttoexprbottomup*
\begin{proof}
For the direction from left to right, first note that if $q' = \bot$, then $e_{qq'}^S = 0$, meaning $\sem{e_{qq'}^S} = \emptyset$; consequently, $q' \neq \bot$.
For the remainder, we proceed by induction on $S$.
In the base, where $S = \emptyset$, we have three cases.
\begin{itemize}
    \item
    If $U = 1$ and $q = q'$, then we can choose $\ell = 0$ to satisfy the claim.

    \item
    If $U = a$ for $a \in \Sigma$ with $\delta(q, a) = q'$, then we choose $\ell = 1$ and $U_0 = a$.

    \item
    If $U = V \parallel W$ and $V \in \sem{e_r}$ and $W \in \sem{e_s}$ with $\gamma(q, r, s) = q'$, then $r, s \prec_A q$.
    By induction, $V \in L_A(r)$ and $W \in L_A(s)$; therefore, there exist $r', s' \in F$ such that $r \atrace{V}_A r'$ and $s \atrace{W}_A s'$.
    We then again choose $\ell = 1$ and $U_0 = U = V \parallel W$ to find that $q \atrace{U}_A q'$.
\end{itemize}

\noindent
For the inductive step, let $S = S' \cup \set{q''}$, and assume the claim holds for $S'$.
There are two cases to consider.
\begin{itemize}
    \item
    If $U \in \sem{e^{S'}_{qq'}}$, then the claim follows by induction.

    \item
    If $U \in \sem{e^{S'}_{qq''} \cdot {\left(e^{S'}_{q''q''}\right)}^* \cdot e^{S'}_{q''q'}}$, then $U = V \cdot W_0 \cdots W_{m-1} \cdot X$ with
    \begin{mathpar}
    V \in \semnostretch{e^{S'}_{qq''}} \and
    W_0 \in \semnostretch{e^{S'}_{q''q''}} \and
    \cdots \and
    W_{m-1} \in \semnostretch{e^{S'}_{q''q''}} \and
    X \in \semnostretch{e^{S'}_{q''q'}}
    \end{mathpar}
    It should be obvious how to construct the desired trace.
\end{itemize}

\noindent
For the other direction, first note that since $q' \neq \bot$, we know by Lemma~\ref{lemma:trace-top-bottom} that $q_0, \dots, q_{\ell-1} \neq \bot$.
The proof proceeds by induction on $\ell$.
In the base, where $\ell \leq 1$, there are three cases to consider.
\begin{itemize}
    \item
    If $q \atrace{U} q'$ is trivial, then $U = 1$ and $q = q'$; thus $U \in \sem{e^\emptyset_{qq'}} \subseteq \sem{e^S_{qq'}}$.

    \item
    If $q \atrace{U} q'$ is a $\delta$-trace, then $U = a$ for some $a \in \Sigma$ and $\delta(q, a) = q'$.
    We find that $U = a \in \sem{e^\emptyset_{qq'}} \subseteq \sem{e^S_{qq'}}$.

    \item
    If $q \atrace{U} q'$ is a $\gamma$-trace, then $U = V \parallel W$ with $r, s \in Q$ and $r', s' \in F$ such that $r \atrace{V}_A r'$ and $s \atrace{W}_A s'$ and $\gamma(q, r, s) = q'$, then $r, s \prec_A q$.
    By induction we have $V \in \sem{e_r}$ and $W \in \sem{e_s}$.
    Therefore,
    \[U = V \parallel W \in \semnostretch{e_r \parallel e_s} \subseteq \semnostretch{e^\emptyset_{qq'}} \subseteq \semnostretch{e^S_{qq'}}\]
\end{itemize}

\noindent
For the inductive step, assume that $\ell > 1$; in that case, it must be that $S \neq \emptyset$.
We write $S = S' \cup \set{q''}$ and $I = \set{0 \leq i < \ell : q_i = q''}$.
If $I = \emptyset$, then $U \in \sem{e^{S'}_{qq'}}$ by induction.
Since $\sem{e^{S'}_{qq'}} \subseteq \sem{e^S_{qq'}}$, the claim follows.
Otherwise, if $I \neq \emptyset$, then write $I = \set{i_0, \dots, i_{k-1}}$ with $i_0 < i_1 < \cdots < i_{k-1}$.
Then, by induction, we know that for $1 \leq j < k$ it holds that
\[
U_{i_j} \cdot U_{i_j+1} \cdots U_{i_{j+1}-1} \in \semnostretch{e^{S'}_{q''q''}}
\]
Moreover, also by induction, we know that
\begin{mathpar}
U_0 \cdots U_{i_1-1} \in \semnostretch{e^{S'}_{qq''}} \and
U_{i_{k-1}} \cdot U_{i_{k-1}+1} \cdots U_{\ell} \in \semnostretch{e^{S'}_{q''q'}}
\end{mathpar}
Putting this together, we have
\[
U \in \sem{e^{S'}_{qq''} \cdot {\left( e^{S'}_{q''q''} \right)}^* \cdot e^{S'}_{q''q'}} \subseteq \semnostretch{e^S_{qq'}}
\qedhere
\]
\end{proof}

\section{Expressions to automata}%
\label{appendix:expressions-to-automata}

\subsection{Finite support}

\coverclosedversusclosed*
\begin{proof}
We choose $F = \{ e_0 + \cdots + e_{n-1} : e_0, \dots, e_{n-1} \in E \}$; now $F$ is finite.
Since $E$ covers $e$, it also follows that $F$ contains $e$.
To see that $F$ is closed, let $e' \simeq e_0' + \cdots + e_{n-1}' \in F$ for $e_0', \dots, e_{n-1}' \in E$, and suppose $f \preceq_\Sigma e'$.
To see that $f \in F$, it suffices to validate the claim for the pairs generating $\preceq_\Sigma$:
\begin{itemize}
    \item
    If $f = \ssderiv(e', a)$ for $a \in \Sigma$, then $f \simeq f_0 + \cdots + f_{n-1}$ where for $0 \leq i < n$ we have $f_i = \ssderiv(e_i', a)$.
    Each of these $f_i$ is covered by $E$; hence, the sum of terms covering these is in $F$, and congruent to $f$.
    The case where $f = \psderiv(e', g, h)$ for $g, h \in \terms$ can be argued similarly.

    \item
    If $f \preceq_\Sigma e'$ because $\psderiv(e', f, g) \not\simeq 0$ or $\psderiv(e', g, f) \not\simeq 0$ for some $g \in \terms$, then (without loss of generality) assume the former.
    We then know that $\psderiv(e_i', f, g) \not\simeq 0$ for some $0 \leq i < n$, and hence $f \preceq_\Sigma e_i'$, meaning that there exist $f_0, \dots, f_{m-1} \in E$ such that $f \simeq f_0 + \cdots + f_{m-1}$, by cover-closure of $E$.
    It then follows that $f \in F$.
    \qedhere
\end{itemize}
\end{proof}

\syntacticpafinitelysupported*
\begin{proof}
By Lemma~\ref{lemma:cover-closed-versus-closed}, it suffices to find for every $e \in \terms_\simeq$ a finite and cover-closed set $E(e)$ covering $e$.
We proceed inductively.
In the base, there are three cases to consider.
\begin{itemize}
    \item
    If $e = 0$, then $E(0) = \emptyset$ suffices, since $0$ is covered by the empty sum.

    \item
    If $e = 1$, then $E(1) = \set{1}$ suffices.

    \item
    If $e = a$ for some $a \in \Sigma$, then $E(a) = \set{1, a}$ suffices.
\end{itemize}

\noindent
For the inductive step, there are five cases to consider.
\begin{itemize}
    \item
    If $e = f + g$, we choose $E(e) = E(f) + E(g)$.
    This set is cover-closed, because $E(f)$ and $E(g)$ both are.
    Furthermore, this set covers $e$, because we can get terms to cover $f$ and $g$ from $E(f)$ and $E(g)$ respectively.

    \item
    If $e = f \cdot g$, we choose $E(e) = E(f) \cup E(g) \cup \set{f' \cdot g : f' \in E(f)}$.
    To see that this set covers $f \cdot g$, let $f_0, \dots, f_{n-1} \in E(f)$ be such that $f \simeq f_0 + \cdots f_{n-1}$.
    We can then choose $f_0 \cdot g, \dots, f_{n-1} \cdot g \in E(e)$ to find that $e \simeq f_0 \cdot g + \cdots f_{n-1} \cdot g$.

    To see that $E(e)$ is cover-closed, we need only consider $f' \cdot g$ for $f' \in E(f)$.
    \begin{enumerate}[(i)]
        \item
        If $a \in \Sigma$, let $f_0', \dots, f_{n-1}' \in E(f)$ with $\ssderiv(f', a) \simeq f_0' + \cdots + f_{n-1}'$.
        Also, let $g_0, \dots, g_{m-1} \in E(g)$ with $\ssderiv(g, a) \simeq g_0 + \cdots + g_{m-1}$.
        We can then derive as follows:
        \begin{align*}
        \ssderiv(e, a)
            &= \ssderiv(f', a) \fatsemi g + f' \star \ssderiv(g, a) \\
            &\simeq (f_0' + \cdots + f_{n-1}) \fatsemi g + f' \star (g_0 + \cdots + g_{m-1}) \\
            &\simeq f_0' \cdot g + \cdots + f_{n-1}' \cdot g + f' \star g_0 + \cdots f' \star g_{m-1}
        \end{align*}
        All of the non-zero terms in the last form can be found in $E(e)$, and thus $E(e)$ covers $\ssderiv(e, a)$.

        \item
        If $h_1, h_2 \in \terms_\simeq$, then $\psderiv(f' \cdot g, h_1, h_2)$ is covered by $E(e)$ by an argument similar to the above.

        \item
        If $h_1, h_2 \in \terms_\simeq$ such that $\psderiv(f' \cdot g, h_1, h_2) \not\simeq 0$, then
        \[
            \psderiv(f', h_1, h_2) \cdot g + f' \star \psderiv(g, h_1, h_2) \not\simeq 0
        \]
        Hence, we know that either $\psderiv(f', h_1, h_2) \not\simeq 0$ or $\psderiv(g, h_1, h_2) \not\simeq 0$.
        In the former case, $h_1$ and $h_2$ are covered by $E(f)$, while in the latter case $h_1$ and $h_2$ are covered by $E(g)$.
    \end{enumerate}

    \item
    If $e = f \parallel g$, we choose $E(g) = E(e) \cup E(f) \cup \set{1, f \parallel g}$.
    Immediately, $E(e)$ covers $e$.
    For cover-closure of $E(e)$, we need only consider $f \parallel g$.
    \begin{enumerate}[(i)]
        \item
        If $a \in \Sigma$, then $\ssderiv(e, a) = 0$, and so $E(e)$ covers $\ssderiv(e, a)$.

        \item
        If $h_1, h_2 \in \terms_\simeq$, then $\psderiv(e, h_1, h_2) \in \set{0, 1}$ by definition of $\psderiv$.
        Consequently, $E(e)$ covers $\psderiv(e, h_1, h_2)$.

        \item
        If $h_1, h_2 \in \terms_\simeq$ and $\psderiv(e, h_1, h_2) \not\simeq 0$, then $f \simeq h_1$ and $g \simeq h_2$.
        Since $E(f)$ covers $f$ and $E(g)$ covers $g$, $E(e)$ covers both.
    \end{enumerate}

    \item
    If $e = f^*$, we choose $E(e) = E(f) \cup \set{f^*} \cup \set{f' \cdot f^* : f' \in E(f)}$.
    Immediately, $E(e)$ covers $e$.
    For cover-closure of $E(e)$, we need only consider $f' \cdot f^*$ for $f' \in E(f)$.
    \begin{enumerate}[(i)]
        \item
        If $a \in \Sigma$, let $f_0', \dots, f_{n-1}' \in E(f)$ be such that $f' \simeq f_0' + \cdots + f_{n-1}'$.
        We can then derive that
        \begin{align*}
        \ssderiv(f' \cdot f^*, a)
            &= \ssderiv(f', a) \fatsemi f^* + f' \star f^* \\
            &\simeq (f_0' + \cdots + f_{n-1}') \fatsemi f^* + f' \star f^* \\
            &\simeq f_0' \fatsemi f^* + \cdots + f_{n-1}' \fatsemi f^* + f' \star f^*
        \end{align*}
        All of the non-zero terms in the last form can be found in $E(e)$, and thus $E(e)$ covers $\ssderiv(f' \cdot f^*, a)$.

        \item
        If $h_1, h_2 \in \terms_\simeq$, then $\psderiv(f' \cdot f^*, h_1, h_2)$ is covered by $E(e)$ by an argument similar to the above.

        \item
        If $h_1, h_2 \in \terms_\simeq$ and $\psderiv(f' \cdot f^*, h_1, h_2) \not\simeq 0$, then $\psderiv(f', h_1, h_2) \not\simeq 0$.
        Thus $h_1$ and $h_2$ are covered by $E(f)$, and hence by $E(e)$.
    \end{enumerate}

    \item
    If $e = f^\dagger$, we choose $E(e) = E(f) \cup \set{e^\dagger, 1}$.
    Once more, $E(e)$ covers $e$ trivially.
    For cover-closure of $E(e)$, we need only consider $e^\dagger$.
    \begin{enumerate}[(i)]
        \item
        If $a \in \Sigma$, then $\ssderiv(e, a) = 0$, and hence $E(e)$ covers $\ssderiv(e, a)$.

        \item
        If $h_1, h_2 \in \terms_\simeq$, then $\psderiv(e, h_1, h_2) \in \set{0, 1}$ by definition of $\psderiv$, and thus $E(e)$ covers $\psderiv(e, h_1, h_2)$.

        \item
        If $h_1, h_2 \in \terms_\simeq$ and $\psderiv(e, h_1, h_2) \not\simeq 0$, then $h_1 \simeq f$ and $h_2 \simeq e$.
        Since $E(f)$ covers $f$, we conclude that $E(e)$ covers $h_1$ and $h_2$.
        \qedhere
    \end{enumerate}
\end{itemize}
\end{proof}

\subsection{Well-nestedness}

\preceqvsdepth*
\begin{proof}
It suffices to verify the claim for the generating pairs of $\preceq_\Sigma$; in each case, we proceed by induction on $e$.
\begin{itemize}
    \item
    Suppose $e \preceq_\Sigma f$ because $f = \ssderiv(e, a)$ for some $a \in \Sigma$.
    In the base, where $e \in \set{0, 1} \cup \Sigma$, we have that $\ssderiv(e, a) \in \set{0, 1}$, and hence $d_\parallel(\ssderiv(e, a)) = d_\dagger(\ssderiv(e, a)) = 0$ --- the claim then holds immediately.

    For the inductive step, there are five cases to consider.
    Let $\circ \in \set{\parallel, \dagger}$.
    \begin{itemize}
        \item
        If $e = e_0 + e_1$, then $\ssderiv(e, a) = \ssderiv(e_0, a) + \ssderiv(e_1, a)$.
        By induction, we know that $d_\circ(\ssderiv(e_0, a)) \leq d_\circ(e_0)$ and $d_\circ(\ssderiv(e_1, a)) \leq d_\circ(e_1)$.
        We can then derive that
        \begin{align*}
        d_\circ(\ssderiv(e, a))
            &= \max(d_\circ(\ssderiv(e_0, a)), d_\circ(\ssderiv(e_1, a))) \\
            &\leq \max(d_\circ(e_0), d_\circ(e_1))
             = d_\circ(e)
        \end{align*}
        The cases where $e = e_0 \cdot e_1$ or $e = e_0^*$ can be argued similarly.

        \item
        If $e = e_0 \parallel e_1$ or $e = e_0^\dagger$, then $\ssderiv(e, a) = 0$, thus $d_\circ(\ssderiv(e, a)) = 0$.
        The claim then follows immediately.
    \end{itemize}

    \item
    Suppose $e \preceq_\Sigma f$ because $f = \psderiv(e, g, h)$ for some $g, h \in \terms$.
    In the base, where $e \in \set{0, 1} \cup \Sigma$, we have that $\psderiv(e, a) = 0$, and hence $d_\parallel(\ssderiv(e, g, h)) = d_\dagger(\psderiv(e, g, h)) = 0$ --- the claim then holds immediately.

    For the inductive step, there are two cases to consider.
    Let $\circ \in \set{\parallel, \dagger}$.
    \begin{itemize}
        \item
        If $e \in \set{e_0 + e_1, e_0 \cdot e_1, e_0^*}$, then the proof is similar to the corresponding case above.

        \item
        If $e \in e_0 \parallel e_1$ or $e = e_0^\dagger$, then $\psderiv(e, g, h) \in \set{0, 1}$, and hence $d_\circ(\ssderiv(e, a)) = 0$.
        The claim then follows.
    \end{itemize}

    \item
    Suppose $e \preceq_\Sigma f$ because $\psderiv(f, e, h) \not\simeq 0$ or $\psderiv(f, h, e) \not\simeq 0$ for a $h \in \terms$.
    In the base, where $e \in \set{0, 1} \cup \Sigma$, the claim holds vacuously.

    For the inductive step, there are five cases to consider.
    Let $\circ \in \set{\parallel, \dagger}$.
    \begin{itemize}
        \item
        If $e = e_0 + e_1$, then $\psderiv(e, g, h) \not\simeq 0$ implies that $\psderiv(e_0, g, h) \not\simeq 0$ or $\psderiv(e_1, g, h) \not\simeq 0$; w.l.o.g., we assume the former.
        By induction, we have $d_\circ(g), d_\circ(h) \leq d_\circ(e_0)$.
        Since $d_\circ(e_0) \leq d_\circ(e)$, the claim follows.
        The cases where $e = e_0 \cdot e_1$ or $e = e_0^*$ can be argued similarly.

        \item
        If $e = e_0 \parallel e_1$, then $\psderiv(e, g, h) \not\simeq 0$ implies that $e_0 \simeq g$ and $e_1 \simeq h$.
        We also have $d_\parallel(e_0), d_\parallel(e_1) < d_\parallel(e)$, as well as $d_\dagger(e_0), d_\dagger(e_1) \leq d_\dagger(e)$.
        Since $d_\circ(e_0) = d_\circ(g)$ and $d_\circ(e_1) = d_\circ(h)$, the claim follows.

        \item
        If $e = e_0^\dagger$, then $\psderiv(e, g, h) \not\simeq 0$ implies that $g \simeq e_0$ and $h \simeq e$.
        We also have $d_\parallel(e_0) \leq d_\parallel(e)$, as well as $d_\dagger(e_0) < d_\dagger(e)$.
        Since $d_\circ(g) = d_\circ(e_0)$ and $d_\circ(h) = d_\circ(e)$, the claim then follows.
        \qedhere
    \end{itemize}
\end{itemize}
\end{proof}

\daggerloops*
\begin{proof}
  We show by induction on $\preceq_\Sigma$ the following stronger statement: if $e\preceq_\Sigma f$, then the following holds:
  \[
    d_\dagger (e)=d_\dagger (f)\wedge \left(\exists g:f\simeq g^\dagger\right)\Rightarrow e\simeq f.
  \]

  For the base cases, assume $f\simeq g^\dagger$ and $d_\dagger (e)=d_\dagger (f)$.
  This means that $d_\dagger(g) < d_\dagger(g^\dagger) = d_\dagger(e)$.
  \begin{itemize}
  \item if $e\in \left\{\ssderiv(g^\dagger,a),\psderiv(g^\dagger,e_1,e_2)\right\}$, since
    $\ssderiv(g^\dagger,a),\psderiv(g^\dagger,e_1,e_2)\in\left\{0,1\right\}$ it means that $e\in\left\{0,1\right\}$ which in turn implies $d_\dagger(e)=0$.
    This contradicts $d_\dagger(g)<d_\dagger(e)$, therefore the claim holds vacuously.
  \item if $\psderiv(g^\dagger,e,e')\not\simeq 0$, then by definition of $\psderiv$ we know that $e\simeq g$ which contradicts $d_\dagger(g)<d_\dagger(e)$, thus the claim holds vacuously.
  \item if $\psderiv(g^\dagger,e',e)\not\simeq 0$, then by definition of $\psderiv$ we know that $e\simeq g^\dagger$, making the claim hold immediately.
  \end{itemize}
  For the inductive case, assume $e\preceq_\Sigma f_1\preceq_\Sigma f_2$, and assume that $d_\dagger (e)=d_\dagger (f_2)$ and $\exists g: f_2\simeq g^\dagger$.
  By Lemma~\ref{lemma:preceq-vs-depth} we know that $d_\dagger (e)\leqslant d_\dagger (f_1)\leqslant d_\dagger (f_2)=d_\dagger (e)$, meaning that $d_\dagger(e)=d_\dagger(f_1)$ and $d_\dagger(f_1)=d_\dagger(f_2)$.
  Applying the induction hypothesis on the pair $f_1\preceq_\Sigma f_2$ tells us that $f_1\simeq f_2$.
  Since $\exists g:f_2\simeq g^\dagger$, the same holds for $f_1$, so we may apply the induction hypothesis on the pair $e\preceq_\Sigma f_1$ to get $e\simeq f_1$.
  By transitivity we conclude that $e\simeq f_2$.
\end{proof}

\preceqvsforks*
\begin{proof}
  We start by establishing the following statements:
\begin{inparaenum}[(i)]
    \item\label{item:g-smaller}
    $d_\parallel(g) < d_\parallel(e)$ or $d_\dagger(h) < d_\dagger(e)$, as well as
    \item\label{item:h-smaller}
    $d_\parallel(h) < d_\parallel(e)$ or $d_\dagger(h) < d_\dagger(e)$, or $h \simeq f^\dagger$ for some $f \in \terms$.
\end{inparaenum}
The proof for both claims proceeds by induction on $e$.
In the base, where $e \in \set{0, 1} \cup \Sigma$, we have $\psderiv(e, g, h) = 0$, and hence the claim holds vacuously.

For the inductive step, there are three cases to consider.
\begin{itemize}
    \item
    If $e = e_0 + e_1$, then $\psderiv(e_0, g, h) \not\simeq 0$ or $\psderiv(e_1, g, h) \not\simeq 0$; w.l.o.g.\ we assume the former.
    We then have that $d_\parallel(g) < d_\parallel(e_0)$ or $d_\dagger(g) < d_\dagger(e_0)$ by induction; the first claim then follows by definition of $d_\parallel$ and $d_\dagger$.
    We also know that $d_\parallel(h) < d_\parallel(e)$ or $d_\dagger(h) < d_\dagger(e)$ or $h \simeq f^\dagger$ for some $f \in \terms$ by induction.
    In the latter case, the second claim follows immediately; otherwise, the claim follows by definition of $d_\parallel$ and $d_\dagger$ again.

    The cases where $e = e_0 \cdot e_1$ or $e = e_0^*$ can be argued similarly.

    \item
    If $e = e_0 \parallel e_1$, then $g \simeq e_0$ and $h \simeq e_1$ by definition of $\psderiv$.
    We then have $d_\parallel(g) = d_\parallel(e_0) < d_\parallel(e)$, and $d_\parallel(h) < d_\parallel(e)$, satisfying both claims.

    \item
    If $e = e_0^\dagger$, then $g \simeq e_0$ and $h \simeq e$.
    The second claim holds immediately.
    For the first claim, observe that $d_\dagger(g) = d_\dagger(e_0) < d_\dagger(e)$.
  \end{itemize}

  Let us now prove that the statement of the Lemma holds.
  Let $e, g, h \in \terms$ with  $\psderiv(e, g, h) \not\simeq 0$.
  \begin{itemize}
  \item
  By definition of $\preceq_\Sigma$ we know that $g\preceq_\Sigma e$.
  If $e\preceq_\Sigma g$, by Lemma~\ref{lemma:preceq-vs-depth} we would have $d_\dagger(e) \leqslant d_\dagger(g)$.
  However, by~\eqref{item:g-smaller} we have $d_\dagger(g) < d_\dagger(e)$, thus ensuring that $g\prec_\Sigma e$.

  \item We now need to show that either $h\prec_\Sigma e$ or $e\simeq f^\dagger$ for some $f$.
  Since we know that $h\preceq_\Sigma e$, this amounts to showing that if $e\preceq_\Sigma h$ then $e\simeq f^\dagger$.
  Since $h\preceq_\Sigma e\preceq_\Sigma h$ we have $d_\dagger (e)=d_\dagger(h)$ and $d_\parallel (e)=d_\parallel(h)$.
  But according to~\eqref{item:h-smaller} there are three cases: either $d_\parallel(h) < d_\parallel(e)$, or $d_\dagger(h) < d_\dagger(e)$, or $h \simeq f^\dagger$ for some $f \in \terms$.
  The first two cases are in contradiction with what we know so we deduce that $h \simeq f^\dagger$ for some $f \in \terms$.
  Therefore by applying Lemma~\ref{lemma:dagger-loops} to $e\preceq_\Sigma f^\dagger\preceq_\Sigma e$ we get that $e\simeq f^\dagger$.\qedhere
  \end{itemize}
\end{proof}

\subsection{Deconstruction lemmas}%
\label{appendix:deconstruction-lemmas}

\tracedeconstructsequential*
\begin{proof}
The proof proceeds by induction on the length $\ell$ of $e_0 \cdot e_1 \satrace{U} f$.
In the base, where $\ell = 0$, we have that $f = e_0 \cdot e_1$ and $U = 1$.
We can then choose $f_0 = e_0$ and $f_1 = e_1$ as well as $U_0 = U_1 = 1$ to satisfy the claim.

For the inductive step, let $e_0 \cdot e_1 \satrace{U} f$ be of length $\ell+1$.
We find that $U = V \cdot U'$, and a $g \in \terms$ such that $e_0 \cdot e_1 \satrace{V} g$ is a unit trace, and $g \satrace{U} f$ is of length $\ell$.
If $e_0 \cdot e_1 \satrace{V} g$ is a $\delta$-trace, then $V = a$ for some $a \in \Sigma$, and $g = \ssderiv(e_0 \cdot e_1, a) = \ssderiv(e_0, a) \fatsemi e_1 + e_0 \star \ssderiv(e_1, a)$.
By Lemma~\ref{lemma:trace-deconstruct-plus}, we find $f' \in \sacc$ such that $\ssderiv(e_0, a) \fatsemi e_1 \satrace{U'} f'$ or $e_0 \star \ssderiv(e_1, a) \satrace{U'} f'$, of length $\ell$.
This gives us two cases.
\begin{itemize}
    \item
    If $\ssderiv(e_0, a) \fatsemi e_1 \satrace{U'} f'$, then first note that $\ssderiv(e_0, a) \not\simeq 0$, by Lemma~\ref{lemma:trace-top-bottom}, and hence $\ssderiv(e_0, a) \cdot e_1 \satrace{U'} f'$.
    By induction we find $f_0, f_1 \in \sacc$ and $U' = U_0' \cdot U_1'$ such that $\ssderiv(e_0, a) \satrace{U_0'} f_0$ and $e_1 \satrace{U_1'} f_1$, and the total length of these traces is $\ell$.
    We can then choose $U_1 = V \cdot U_0'$ and $U_1 = U_1'$ to find that $U = V \cdot U' = V \cdot U_0' \cdot U_1' = U_0 \cdot U_1$, as well as $e_0 \satrace{U_0} f_0$ and $e_1 \satrace{U_1} f_1$, of total length $\ell+1$.

    \item
    If $e_0 \star \ssderiv(e_1, a) \satrace{U'} f'$, then first note that $e_0 \star \ssderiv(e_1, a) \not\simeq 0$ by Lemma~\ref{lemma:trace-top-bottom}, and so $e_0 \in \sacc$.
    We choose $U_0 = 1$ and $U_1 = U$ as well as $f_0 = e_0$ and $f_1 = f'$ to find that $U = 1 \cdot U = U_0 \cdot U_1$ as well as $e_0 \satrace{U_0} f_0$.
    Furthermore, $e_1 \satrace{V} \ssderiv(e_1, a) = e_0 \star \ssderiv(e_1, a) \satrace{U'} f'$, meaning that $e_1 \satrace{U} f'$.
    The total length of these traces is again $\ell+1$.
\end{itemize}
The case where $e_0 \cdot e_1 \satrace{V} g$ is a $\gamma$-trace can be treated similarly.
\end{proof}

\tracedeconstructstar*
\begin{proof}
The proof proceeds by induction on the length $\ell$ of $e^* \satrace{U} f$.
In the base, where $\ell = 0$, we have that $f = e^*$ and $U = 1$; it suffices to choose $n = 0$.

For the inductive step, let $e^* \satrace{U} f$ be of length $\ell + 1$, and assume that the claim holds for $\ell$.
We then find $g \in \terms$ and $U = V \cdot U'$ such that $e^* \satrace{V} g$ is a unit trace, and $g \satrace{U'} f$ of length $\ell$.
If $e^* \satrace{V} g$ is a $\delta$-trace, then $V = a$ for some $a \in \Sigma$, and $g = \ssderiv(e^*, a) = \ssderiv(e, a) \fatsemi e^*$.
By Lemma~\ref{lemma:trace-top-bottom}, and the fact that $g \satrace{U'} f \in \sacc$, we then know that $\ssderiv(e, a) \not\simeq 0$, and hence $\ssderiv(e, a) \cdot e^* \satrace{U'} f$.
By Lemma~\ref{lemma:trace-deconstruct-sequential}, we find $f'', f' \in \sacc$ such that $U' = W \cdot X$ as well as $\ssderiv(e, a) \satrace{W} f''$ and $e^* \satrace{X} f'$ of total length $\ell$.

Then, by induction, we find $f_1, \dots, f_{n-1} \in \terms$ such that $X = U_1 \cdots U_{n-1}$, and for $1 \leq i < n$ it holds that $e \satrace{X_i} f_i$.
We then choose $f_0 = f''$ and $U_0 = V \cdot W$.
For these choices, $U = U_0 \cdot U' = V \cdot W \cdot X = U_0 \cdots U_{n-1}$.
Since $e \satrace{V} \ssderiv(e, a) \satrace{W} f''$, we also find that $e \satrace{U_0} f_0$; this completes the proof.

The case where $e^* \satrace{V} g$ is a $\gamma$-trace is similar.
\end{proof}

\subsection{Construction lemmas}%
\label{appendix:construction-lemmas}

\traceconstructcarry*
\begin{proof}
The proof proceeds by induction on the length $\ell$ of $e_0 \satrace{U} f_0$.
In the base, where $\ell = 0$, we can choose $f = e_0 \cdot e_1$ to satisfy the claim.

For the inductive step, let $e_0 \satrace{U} f_0$ be of length $\ell + 1$, and assume the claim holds for traces of length $\ell$.
We then find $e_0' \in \terms$ and $U = V \cdot U'$ such that $e_0 \satrace{V} e_0'$ is a unit trace, and $e_0' \satrace{U'} f_0$ is of length $\ell$.
By Lemma~\ref{lemma:trace-top-bottom} and the fact that $e_0' \satrace{U'} f_0 \in \sacc$, we know that $e_0' \not\simeq 0$, and thus $e_0' \fatsemi e_1 = e_0' \cdot e_1$.

By induction, we find $f' \in \terms$ such that $f_0 \cdot e_1 \lesssim f'$, and $e_0' \cdot e_1 \satrace{U'} f'$.
If $e_0 \satrace{V} e_0'$ is a $\delta$-trace, then $V = a$ for some $a \in \Sigma$, and $e_0' = \ssderiv(e_0, a)$.
Then, by Lemma~\ref{lemma:trace-construct-plus}, we find $f \in \terms$ such that $f' \lesssim f$ and $\ssderiv(e_0 \cdot e_1, a) = e_0' \cdot e_1 + e_0 \star \ssderiv(e_1, a) \satrace{U'} f$.
Putting these traces together, we find that $e_0 \cdot e_1 \satrace{U} f$, as well as $f_0 \cdot e_1 \lesssim f' \lesssim f$.

The case where $e_0 \satrace{V} e_0'$ is a $\gamma$-trace is similar.
\end{proof}

\traceconstructledge*
\begin{proof}
The proof proceeds by induction on the length $\ell$ of $e_1 \satrace{V} f_1$.
If $\ell = 0$, we know that $f_1 = e_1$ and $V = 1$.
We can then choose $f = f_0 \cdot e_1$.

For the inductive step, let $e_1 \satrace{V} f_1$ be of length $\ell + 1$, and assume the claim holds for traces of length $\ell$.
We then find $e_1' \in \terms$ and $V = W \cdot V'$ such that $e_1 \satrace{W} e_1'$ is a unit trace, and $e_1 \satrace{V'} f_1$ is of length $\ell$.
If $e_1 \satrace{W} e_1'$ is a $\delta$-trace, then $W = a$ for some $a \in \Sigma$, and $e_1' = \ssderiv(e_1, a)$.
By Lemma~\ref{lemma:trace-construct-plus}, we find $f \in \sacc$ such that $\ssderiv(f_0, a) \fatsemi e_1 + e_1' \satrace{V'} f$.
Since
$
\ssderiv(f_0 \cdot e_1, a)
    = \ssderiv(f_0, a) \fatsemi e_1 + f_0 \star \ssderiv(e_1, a)
    = \ssderiv(f_0, a) \fatsemi e_1 + e_1'
$
we find that $f_0 \cdot e_1 \satrace{W} \ssderiv(f_0, a) \fatsemi e_1 + e_1'$.
We conclude that $f_0 \cdot e_1 \satrace{V} f$.

The case where $e_1 \satrace{W} e_1'$ is a $\gamma$-trace is similar.
\end{proof}

\traceconstructsequential*
\begin{proof}
By Lemma~\ref{lemma:trace-construct-carry}, we find $f' \in \terms$ such that $f_0 \cdot e_1 \lesssim f'$ and $e_0 \cdot e_1 \satrace{U} f'$.
By Lemma~\ref{lemma:trace-construct-ledge}, we find $f'' \in \sacc$ such that $f_0 \cdot e_1 \satrace{V} f''$.
By Lemma~\ref{lemma:trace-construct-plus}, we find $f \in \sacc$ such that $f' \simeq f_0 \cdot e_1 + f' \satrace{V} f$.
In total, we have $e_0 \cdot e_1 \satrace{U \cdot V} f$.
\end{proof}

\traceconstructstar*
\begin{proof}
Without loss of generality, we can assume that for $0 \leq i < n$ it holds that $e \satrace{U_i} f_i$ is non-trivial.
The proof proceeds by induction on $n$.
In the base, where $n = 0$, we can choose $f = e^*$ to satisfy the claim.

For the inductive step, assume that $n > 0$ and that the claim holds for $n-1$.
By induction, we can find $f' \in \sacc$ such that $e^* \satrace{U_1 \cdots U_{n-1}} f'$.
Since $e \satrace{U_1} f_1$ is non-trivial, we find $e' \in \terms$ and $U_0 = V \cdot U_0'$ such that $e \satrace{V} e'$ is a unit trace, and $e' \satrace{U_0'} f_0$.
We note that by Lemma~\ref{lemma:trace-top-bottom}, this implies that $e' \not\simeq 0$.
By Lemma~\ref{lemma:trace-construct-sequential}, we find $f \in \sacc$ such that $e' \cdot e^* \satrace{U_0' \cdot U_1 \cdots U_{n-1}} f$.
If $e \satrace{V} e'$ is a $\delta$-trace, then $V = a$ for some $a \in \Sigma$, and $e' = \ssderiv(e, a)$.
In that case, $e^* \satrace{V} \ssderiv(e^*, a) = \ssderiv(e, a) \fatsemi e^* = e' \cdot e^*$.
Consequently, $e^* \satrace{V} e' \cdot e^* \satrace{U_0' \cdot U_1 \cdots U_{n-1}} f$ and therefore $e^* \satrace{U_0 \cdots U_{n-1}} f$.

The case where $e \satrace{V} e'$ is a $\gamma$-trace is similar.
\end{proof}

\subsection{Soundness of the translation}

\syntacticpalanguages*
\begin{proof}
We proceed by induction on $e$.
In the base, there are two cases to consider.
On the one hand, if $e \in \set{0, 1}$, then the claim follows from Lemma~\ref{lemma:trace-top-bottom}.
On the other hand, if $e = a$ for some $a \in \Sigma$, then the inclusion from left to right is simple: $a \satrace{a} \ssderiv(a, a) = 1 \in \sacc$, and therefore we can conclude that $a \in L_\Sigma(a)$.
For the inclusion from right to left, suppose that $a \satrace{U} f$ for some pomset $U$ and $f \in \sacc$.
In that case, $f \neq a$ (for $a \not\in \sacc$), and thus $e \satrace{U} f$ must be non-trivial.
We therefore find that $U = U_0 \cdot U'$ and a $g \in \terms$ such that $a \satrace{U_0} g$ is a unit trace, and $g \satrace{U'} f$ holds as well.
Whether $a \satrace{U_0} g$ is a $\gamma$-trace or $\delta$-trace, we have that $g \in \set{0, 1}$.
Furthermore, by Lemma~\ref{lemma:trace-top-bottom} and the fact that $f \in \sacc$, we know that $g \satrace{U'} f$ must be trivial (for otherwise $f = 0 \not\in \sacc$), meaning that $g = f = 1$.
It then follows that $a \satrace{U_0} g$ was a $\delta$-trace with $U_0 = a$, and $U = U_0 \cdot U' = a \cdot 1 = a$.

For the inductive step, suppose the claim holds for all strict subterms of $e$.
There are five cases to consider.
\begin{itemize}
    \item
    If $e = e_0 + e_1$, then first suppose that $U \in L_\Sigma(e)$.
    By Lemma~\ref{lemma:trace-deconstruct-plus} we know that $U \in L_\Sigma(e_0)$ or $U \in L_\Sigma(e_0)$.
    By induction, we find that $U \in \sem{e_0} \cup \sem{e_1} = \sem{e_0 + e_1}$.

    For the other inclusion, let $U \in \sem{e_0 + e_1}$.
    If $U \in \sem{e_0}$, then $U \in L_\Sigma(e_0)$ by induction; then $U \in L_\Sigma(e_0 + e_1)$ by Lemma~\ref{lemma:trace-deconstruct-plus}.
    The case where $U \in \sem{e_1}$ is similar.

    \item
    If $e = e_0 \cdot e_1$ (resp.\ $e = e_0^*$), then the equality follows from Lemma~\ref{lemma:trace-deconstruct-sequential} and Lemma~\ref{lemma:trace-construct-sequential} (resp.\ Lemma~\ref{lemma:trace-deconstruct-star} and Lemma~\ref{lemma:trace-construct-star}) by argument analogous to the previous case.

    \item
    If $e = e_0 \parallel e_1$, then first suppose that $U \in L_\Sigma(e_0 \parallel e_1)$.
    A simple look at the sequential and parallel derivatives for $e_0 \parallel e_1$ shows that $U = V \parallel W$ such that $V \in L_\Sigma(e_0)$ and $W \in L_\Sigma(e_1)$.
    By induction, $V \in \sem{e_0}$ and $W \in \sem{e_1}$, and thus $U = V \parallel W \in \sem{e_0 \parallel e_1}$.

    For the other inclusion, suppose that $U \in \sem{e_0 \parallel e_1}$.
    Then $U = V \parallel W$ such that $V \in \sem{e_0}$ and $W \in \sem{e_1}$.
    By induction, we find that $V \in L_\Sigma(e_0)$ and $W \in \sem{e_1}$.
    Another look at the parallel derivatives for $e_0 \parallel e_1$ then tells us that $U = V \parallel W \in L_\Sigma(e_0 \parallel e_1)$.

    \item
    If $e = f^\dagger$, then first note that $f^\dagger$ is a recursive state by Lemma~\ref{lemma:syntactic-pa-well-nested}.
    By Lemma~\ref{lemma:recursive-state-language} and induction, we can then conclude that
    \[
    L_\Sigma(f^\dagger) = {L_\Sigma(f)}^\dagger = \sem{f}^\dagger = \semnostretch{f^\dagger} \qedhere
    \]
\end{itemize}
\end{proof}

\subsection{Soundness modulo congruence}%
\label{appendix:soundness-modulo-congruence}

For technical completeness, we justify our notation in Section~\ref{section:expressions-to-automata} by arguing that the constructs used are well-defined modulo $\simeq$.

% HACK: elsarticle overrides \thesection to include the word "Appendix". This
% causes lemmas to be displayed as "Lemma Appendix C.1", which we do not want.
\renewcommand{\thesection}{\Alph{section}}

\begin{lemma}%
\label{lemma:congruence-sound}
Let $e, f \in \terms$.
The following hold:
\begin{enumerate}[(i)]
    \item If $e \simeq f$, then $\sem{e} = \sem{f}$, and
    \item if $e \simeq f$, then $e \in \sacc$ if and only if $f \in \sacc$, and
\end{enumerate}
\end{lemma}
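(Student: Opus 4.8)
The plan is to exploit the fact that $\simeq$ is, by Definition~\ref{definition:congruence}, the \emph{smallest} congruence on $\terms$ containing the five listed generating equations. Hence, to establish a property of all $\simeq$-related pairs, it suffices to exhibit a congruence $\approx$ on $\terms$ that (a) contains those five generating pairs and (b) itself has the desired property; minimality of $\simeq$ then gives ${\simeq} \subseteq {\approx}$.

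For part (i), I would take $\approx$ to be semantic equivalence, i.e., $e \approx f$ iff $\sem{e} = \sem{f}$. This is trivially an equivalence relation, and it is a congruence because $\sem{-}$ is defined compositionally (Definition~\ref{definition:series-parallel-rational-expression-semantics}) from the operations on pomset languages of Definition~\ref{definition:pomset-language-composition}, which are genuine functions on $2^{\pomsp}$: if $\sem{e} = \sem{e'}$ and $\sem{f} = \sem{f'}$, then applying any of $+$, $\cdot$, $\parallel$, ${}^*$, ${}^\dagger$ to equal arguments yields equal results, so $e \circ f \approx e' \circ f'$ and likewise for the unary operators. It then remains to verify the five generating equations, all of which reduce to elementary facts: $\sem{e+0} = \sem{e} \cup \emptyset = \sem{e}$ and $\sem{e+e} = \sem{e} \cup \sem{e} = \sem{e}$, while commutativity and associativity of $\cup$ handle $e + f \simeq f + e$ and $e + (f+g) \simeq (e+f)+g$. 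The only slightly substantive case is left-distributivity, $(e+f)\cdot g \simeq e\cdot g + f \cdot g$, which follows from the identity $(\mathcal{U} \cup \mathcal{V}) \cdot \mathcal{W} = (\mathcal{U} \cdot \mathcal{W}) \cup (\mathcal{V} \cdot \mathcal{W})$; this is immediate from the pointwise definition $\mathcal{U} \cdot \mathcal{W} = \set{U \cdot W : U \in \mathcal{U}, W \in \mathcal{W}}$ in Definition~\ref{definition:pomset-language-composition}. With these checks done, $\approx$ is a congruence containing the generators, so ${\simeq} \subseteq {\approx}$, which is exactly (i).

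For part (ii), rather than repeating the congruence argument for membership in $\sacc$, I would derive it from (i) together with Lemma~\ref{lemma:nullability}. Suppose $e \simeq f$; by part (i) we have $\sem{e} = \sem{f}$, so in particular $1 \in \sem{e}$ iff $1 \in \sem{f}$. By Lemma~\ref{lemma:nullability}, $e \in \sacc$ iff $1 \in \sem{e}$, and $f \in \sacc$ iff $1 \in \sem{f}$; combining these equivalences yields $e \in \sacc$ iff $f \in \sacc$, as required.

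I do not anticipate a genuine obstacle, as the whole argument is the routine ``minimal congruence'' schema. The one place to take care is item (b): one must check that $\approx$ really is a congruence, i.e., verify compatibility with \emph{every} operator, not merely that the generating equations hold. Within the generating equations, the left-distributivity case is the only one that appeals to a property of sequential composition of languages rather than of set union, so it is where the verification is least mechanical.
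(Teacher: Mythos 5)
Your proposal is correct, and part (ii) takes a genuinely different route from the paper. For part (i) you do essentially what the paper does --- verify the five generating equations against the compositional semantics --- except that you make explicit the step the paper leaves implicit, namely that semantic equivalence is itself a congruence (so that checking the generators suffices); this is a worthwhile clarification, since without it ``verify the generating pairs'' does not formally discharge the closure of $\simeq$ under contexts, reflexivity, symmetry and transitivity. For part (ii) the paper does a second, independent case analysis on the five generators, using inversion on the inductive definition of $\sacc$ (e.g.\ ``if $f + 0 \in \sacc$ then $f \in \sacc$ or $0 \in \sacc$, and the latter is false''), which implicitly also requires that $\sacc$-membership-equivalence is compatible with all the operators. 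You instead derive (ii) in one line from (i) together with Lemma~\ref{lemma:nullability} ($e \in \sacc$ iff $1 \in \sem{e}$). This is shorter, avoids the inversion arguments and the second congruence check entirely, and is not circular: Lemma~\ref{lemma:nullability} is a statement about raw terms proved by structural induction, independent of $\simeq$. The trade-off is only that your (ii) leans on a lemma the paper states without proof, whereas the paper's version of (ii) is self-contained relative to the definition of $\sacc$; both are sound.
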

\begin{proof}
For the first part, it suffices to show that the claim holds for the pairs generating $\simeq$.
This gives us four cases to consider.
\begin{itemize}
    \item If $e = f + 0$, then $\sem{e} = \sem{f} \cup \sem{0} = \sem{f} \cup \emptyset = \sem{f}$.
    \item If $e = f + f$, then $\sem{e} = \sem{f} \cup \sem{f} = \sem{f}$.
    \item If $e = g_0 + g_1$ and $f = g_1 + g_0$, then $\sem{e} = \sem{g_0} \cup \sem{g_1} = \sem{g_1} \cup \sem{g_0} = \sem{f}$.
    \item If $e = g_0 + (g_1 + g_2)$ and $f = (g_0 + g_1) + g_2$, then
    \[\sem{e} = \sem{g_0} \cup (\sem{g_1} \cup \sem{g_2}) = (\sem{g_0} \cup \sem{g_1}) \cup \sem{g_2} = \sem{f}\]
    \item If $e = (g_0 + g_1) \cdot g_2$ and $f = g_0 \cdot g_2 + g_1 \cdot g_2$, then
    \[\sem{e} = (\sem{g_0} \cup \sem{g_1}) \cdot \sem{g_2} = \sem{g_0} \cdot \sem{g_2} \cup \sem{g_1} \cdot \sem{g_2} = \sem{f}\]
\end{itemize}

\noindent
For the second part, it suffices to verify that the claim holds for the pairs generating $\simeq$.
This gives us again four cases to consider.
\begin{itemize}
    \item Suppose $e = f + 0$.
    If $e \in \sacc$, then either $f \in \sacc$ or $0 \in \sacc$.
    Since the latter is false, $f \in \sacc$.
    Also, if $f \in \sacc$, then $e = f + 0 \in \sacc$ immediately.
    \item Suppose $e = f + f$.
    If $f + f \in \sacc$, then $f \in \sacc$; if $f \in \sacc$, then $f + f \in \sacc$.
    \item Suppose $e = g_0 + g_1$ and $f = g_1 + g_0$.
    If $g_0 + g_1 \in \sacc$, then $g_0 \in \sacc$ or $g_1 \in \sacc$; in either case, $g_1 + g_0 \in \sacc$.
    The other direction is analogous.
    \item Suppose $e = g_0 + (g_1 + g_2)$ and $f = (g_0 + g_1) + g_2$.
    If $e \in \sacc$, then $g_0 \in \sacc$ or $g_1 + g_2 \in \sacc$, and thus one of $g_0, g_1, g_2$ must be in $\sacc$.
    But then $g_0 + g_1$ or $g_2$ must be in $\sacc$, and thus $f = (g_0 + g_1) + g_2 \in \sacc$.
    The proof in the other direction is similar.
    \item Suppose $e = (g_0 + g_1) \cdot g_2$ and $f = g_0 \cdot g_2 + g_1 \cdot g_2$.
    If $e \in \sacc$, then $g_0 + g_1 \in \sacc$ and $g_2 \in \sacc$, meaning that $g_0$ or $g_1$ can be found in $\sacc$, and $g_2$ too.
    In that case, either $g_0$ and $g_2$, or $g_1$ and $g_2$ can be found in $\sacc$, and thus $f \in \sacc$.
    The proof in the other direction is similar.
    \qedhere
\end{itemize}
\end{proof}

\begin{lemma}%
\label{lemma:congruent-states-derive}
Let $e, f \in \terms$ such that $e \simeq f$.
The following hold:
\begin{enumerate}[(i)]
    \item If $a \in \Sigma$, then $\ssderiv(e, a) \simeq \ssderiv(f, a)$.
    \item If $g, h, g', h' \in \terms$ with $g \simeq g'$ and $h \simeq h'$, then $\psderiv(e, g, h) = \psderiv(f, g', h')$.
\end{enumerate}
\end{lemma}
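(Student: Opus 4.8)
The plan is to show that the representative-wise definitions of $\ssderiv$ and $\psderiv$ (Definition~\ref{definition:derivatives}) descend to $\terms_\simeq$. I would first dispose of the dependence of $\psderiv$ on its last two arguments: for fixed $e$, a routine induction on the structure of $e$ shows that $g \simeq g'$ and $h \simeq h'$ imply $\psderiv(e,g,h) = \psderiv(e,g',h')$, since $g$ and $h$ occur in the defining clauses only inside the congruence tests $[g \simeq e' \wedge h \simeq f']$ (which are invariant under replacing $g,h$ by congruent terms) and are otherwise threaded unchanged through the recursive cases. Splitting part~(ii) as $\psderiv(e,g,h) \simeq \psderiv(f,g,h) \simeq \psderiv(f,g',h')$, it then remains to handle the dependence on the \emph{first} argument, which is exactly what part~(i) asks for as well.

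For the first argument I would exploit that $\simeq$ is, by Definition~\ref{definition:congruence}, the \emph{smallest} congruence satisfying the five generating equations. Define the relation
\[
  R = \{(e,f) : e \simeq f,\ \forall a.\ \ssderiv(e,a) \simeq \ssderiv(f,a),\ \forall g,h.\ \psderiv(e,g,h) \simeq \psderiv(f,g,h)\}.
\]
Since $R \subseteq {\simeq}$ by construction, it suffices to prove the reverse inclusion, and for that it is enough to check that $R$ is itself a congruence containing the five generating pairs; minimality of $\simeq$ then yields ${\simeq} \subseteq R$. The crucial design choice is to build the conjunct $e \simeq f$ into $R$: this is what makes $R$ compatible with $\parallel$ and $\dagger$, whose parallel derivatives $[g \simeq e \wedge h \simeq f]$ and $[g \simeq e \wedge h \simeq e^\dagger]$ depend on the actual congruence class of the operands, not merely on their derivatives.

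Checking that $R$ is an equivalence relation is immediate from the corresponding properties of $\simeq$. For compatibility with each operation I would use the recursive clauses of Definition~\ref{definition:derivatives}: the cases for $+$, $\parallel$ and $\dagger$ follow once one observes (using the built-in $e \simeq f$) that the relevant congruence tests agree; the cases for $\cdot$ and $*$ additionally require that the auxiliary operators are insensitive to $\simeq$, namely that $e \star (-)$ respects $\simeq$ because acceptance is $\simeq$-invariant by Lemma~\ref{lemma:congruence-sound}, and that $e \fatsemi (-)$ respects $\simeq$ because the test $e \simeq 0$ is obviously $\simeq$-invariant. Membership of the five generating pairs in $R$ is then a direct computation: e.g.\ $\ssderiv(e+e,a) = \ssderiv(e,a)+\ssderiv(e,a) \simeq \ssderiv(e,a)$ by idempotence, while the commutativity and associativity axioms are discharged by pushing the derivative through $+$ and appealing to the same axiom one level down.

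The main obstacle is the left-distributivity generator $(e+f)\cdot g \simeq e\cdot g + f\cdot g$. Verifying $\ssderiv((e+f)\cdot g, a) \simeq \ssderiv(e\cdot g + f\cdot g, a)$ forces a case analysis that interacts nontrivially with both $\fatsemi$ and $\star$: one must show $(\ssderiv(e,a)+\ssderiv(f,a)) \fatsemi g \simeq \ssderiv(e,a)\fatsemi g + \ssderiv(f,a)\fatsemi g$ and $(e+f)\star \ssderiv(g,a) \simeq e \star \ssderiv(g,a) + f \star \ssderiv(g,a)$, then recombine using distributivity and idempotence. The first of these rests on the small syntactic fact that $x+y \simeq 0$ exactly when $x \simeq 0$ and $y \simeq 0$ (note $\simeq 0$ is strictly finer than semantic emptiness, since e.g.\ $a\cdot 0 \not\simeq 0$), and the second on the characterisation $e+f \in \sacc \iff e \in \sacc \vee f \in \sacc$ coming from the definition of $\sacc$. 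The parallel-derivative half is entirely analogous, and everything else is bookkeeping.
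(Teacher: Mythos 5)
Your proposal is correct and follows essentially the same route as the paper: part~(ii)'s dependence on $g,h$ is dispatched by noting that these arguments only enter through $\simeq$-invariant tests, and the dependence on the first argument is reduced to checking the five generating equations of $\simeq$, with the left-distributivity case resting on exactly the two facts the paper uses (that $x + y \simeq 0$ iff $x \simeq 0$ and $y \simeq 0$, and that $e + f \in \sacc$ iff $e \in \sacc$ or $f \in \sacc$). Your explicit construction of the relation $R$ and verification that it is a congruence merely spells out the closure step that the paper leaves implicit in the phrase ``it suffices to verify the claim for the generating pairs,'' and your observation that the conjunct $e \simeq f$ must be built into $R$ to handle the $\parallel$ and $\dagger$ cases is a correct and worthwhile clarification.
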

\begin{proof}
For the first part, it suffices to verify the claim for the pairs generating $\simeq$.
This gives us four cases to consider.
\begin{itemize}
    \item If $e = f + 0$, then $\ssderiv(e, a) = \ssderiv(f, a) + \ssderiv(0, a) = \ssderiv(f, a) + 0 \simeq \ssderiv(f, a)$.
    \item If $e = f + f$, then $\ssderiv(e, a) = \ssderiv(f, a) + \ssderiv(f, a) \simeq \ssderiv(f, a)$.
    \item If $e = g_0 + g_1$ and $f = g_1 + g_0$, then
    \begin{align*}
    \ssderiv(e, a)
        &= \ssderiv(g_0, a) + \ssderiv(g_1, a) \\
        &\simeq \ssderiv(g_1, a) + \ssderiv(g_0, a)
         = \ssderiv(f, a)
    \intertext{%
        \item If $e = g_0 + (g_1 + g_2)$ and $f = (g_0 + g_1) + g_2$, then
    }
    \ssderiv(e, a)
        &= \ssderiv(g_0, a) + (\ssderiv(g_1, a) + \ssderiv(g_2, a)) \\
        &\simeq (\ssderiv(g_0, a) + \ssderiv(g_1, a)) + \ssderiv(g_2, a) \\
        &= \ssderiv(f, a)
    \end{align*}
    \item
    If $e = (g_0 + g_1) \cdot g_2$ and $f = g_0 \cdot g_2 + g_1 \cdot g_2$, then
    \begin{align*}
    \ssderiv(e, a)
        &= \ssderiv(g_0 + g_1, a) \fatsemi g_2 + (g_0 + g_1) \star \ssderiv(g_2, a) \\
        &= (\ssderiv(g_0, a) + \ssderiv(g_1, a)) \fatsemi g_2 + (g_0 + g_1) \star \ssderiv(g_2, a) \\
        &\simeq \ssderiv(g_0, a) \fatsemi g_2 + \ssderiv(g_1, a) \fatsemi g_2 + g_0 \star \ssderiv(g_2, a) + g_1 \star \ssderiv(g_2, a) \\
        &\simeq \ssderiv(g_0 \cdot g_2 + g_1 \cdot g_2, a)
         = \ssderiv(f, a)
    \end{align*}
    in which we make use of the fact that $e + f \simeq 0$ if and only if $e \simeq 0$ and $f \simeq 0$.
    The implication from right to left follows from $e + f \simeq 0 + 0 \simeq 0$, and the other implication from the fact that $e \simeq e + 0 \simeq e + e + f \simeq e + f \simeq 0$, and similarly for $f$.
\end{itemize}

\noindent
For the second part, note that $\psderiv(f, g, h) \simeq \psderiv(f, g', h')$ by construction of $\psderiv$.
It therefore suffices to verify that $\psderiv(e, g, h) \simeq \psderiv(f, g, h)$ for the pairs generating $\simeq$.
This gives us four cases to consider, all of which go through in the same manner as above.
\end{proof}

\begin{lemma}%
\label{lemma:depth-vs-simeq}
Let $e \simeq f$.
Then $d_\parallel(e) = d_\parallel(f)$ and $d_\dagger(e) = d_\dagger(f)$.
\end{lemma}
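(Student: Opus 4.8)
The plan is to use that $\simeq$ is the \emph{smallest} congruence validating the five generating equations of Definition~\ref{definition:congruence}: any map on $\terms$ that is compatible with the term-forming operations and equates the two sides of each generating pair is automatically constant on $\simeq$-classes. Concretely, I would set
$R = \set{\tuple{e, f} : \pdepth(e) = \pdepth(f) \text{ and } \ddepth(e) = \ddepth(f)}$
and prove $\simeq \subseteq R$. Since $R$ is the kernel of the map $e \mapsto \tuple{\pdepth(e), \ddepth(e)}$, it is patently an equivalence relation, so minimality of $\simeq$ reduces the task to two checks: that $R$ is a congruence, and that $R$ contains every generating pair.

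For the congruence property, I would inspect the defining clauses and note that each computes the value of $\pdepth$ (resp.\ $\ddepth$) on a compound expression as a fixed function of its values on the immediate subexpressions: $+$ and $\cdot$ route through $\max$ in both cases; $\parallel$ routes through $\max$ for $\ddepth$ but through $\max(\cdot,\cdot)+1$ for $\pdepth$; $\dagger$ adds $1$ for $\ddepth$ but is inert for $\pdepth$; and $*$ is inert in both cases. Consequently $e \mathrel{R} e'$ and $f \mathrel{R} f'$ give $e \circ f \mathrel{R} e' \circ f'$ for each binary connective $\circ$, as well as $e^* \mathrel{R} (e')^*$ and $e^\dagger \mathrel{R} (e')^\dagger$; this is a direct clause-by-clause verification.

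It then remains to check the five generators, and here it helps that all of them are built only from $+$, $\cdot$ and $0$, so the increment clauses for $\parallel$ and $\dagger$ play no role. The pairs $e + 0 \simeq e$ and $e + e \simeq e$ reduce to the unit law $\max(n, 0) = n$ and idempotence $\max(n, n) = n$; commutativity and associativity of $+$ are those of $\max$. The only case carrying any content is left-distributivity $(e + f) \cdot g \simeq e \cdot g + f \cdot g$: for $\pdepth$ both sides equal $\max(\pdepth(e), \pdepth(f), \pdepth(g))$ via $\max(\max(m, n), p) = \max(\max(m, p), \max(n, p))$, and the computation for $\ddepth$ is identical. Having verified all five, minimality of $\simeq$ yields $\simeq \subseteq R$, which is exactly the claim.

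I do not expect a genuine obstacle: the argument is routine and mirrors the soundness proofs of Lemma~\ref{lemma:congruence-sound} and Lemma~\ref{lemma:congruent-states-derive}. The single point deserving care is that the ``suffices to check the generators'' shortcut used there silently relies on compatibility of the constructs with the operations; I would therefore make the congruence step for $R$ explicit rather than leave it implicit.
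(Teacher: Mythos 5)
Your proposal is correct and follows essentially the same route as the paper's proof: reduce to the five generating pairs and verify each via the unit, idempotence, commutativity, associativity, and distributivity properties of $\max$. The only difference is that you make explicit the compositionality (congruence) check for $\pdepth$ and $\ddepth$ that licenses the reduction to generators, a step the paper leaves implicit with ``it suffices to verify the claim for the generating pairs.''
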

\begin{proof}
Let $\circ \in \{ \parallel, \dagger \}$.
It suffices to verify the claim for the generating pairs.
\begin{itemize}
    \item
    If $e = f + 0$, then $d_\circ(e) = \max(d_\circ(f), d_\circ(0)) = d_\circ(f)$.

    \item
    If $e = f + f$, then $d_\circ(e) = \max(d_\circ(f), d_\circ(f)) = d_\circ(f)$.

    \item
    If $e = e_0 + e_1$ and $f = e_1 + e_0$, then
    \begin{align*}
    d_\circ(e)
        &= \max(d_\circ(e_0), d_\circ(e_1)) \\
        &= \max(d_\circ(e_1), d_\circ(e_0))
         = d_\circ(f)
    \intertext{%
        \item
        If $e = e_0 + (e_1 + e_2)$ and $f = (e_0 + e_1) + e_2$, then
    }
    d_\circ(e)
        &= \max(d_\circ(e_0), \max(d_\circ(e_1), d_\circ(e_2))) \\
        &= \max(\max(d_\circ(e_0), d_\circ(e_1)), d_\circ(e_2))
         = d_\circ(f)
    \intertext{%
        \item
        If $e = e_0 \cdot (e_1 + e_2)$ and $f = e_0 \cdot e_1 + e_0 \cdot e_2$, then
    }
    d_\circ(e)
        &= \max(d_\circ(e_0), \max(d_\circ(e_1), d_\circ(e_2))) \\
        &= \max(\max(d_\circ(e_0), d_\circ(e_1)), \max(d_\circ(e_0), d_\circ(e_2)))
         = d_\circ(f)
        \tag*{\qedhere}
    \end{align*}
\end{itemize}
\end{proof}

\end{document}